\documentclass{article}




    \usepackage[final]{neurips_2023}


\usepackage[utf8]{inputenc} 
\usepackage[T1]{fontenc}    
\usepackage{hyperref}       
\usepackage{url}            
\usepackage{booktabs}       
\usepackage{amsfonts}       
\usepackage{nicefrac}       
\usepackage{microtype}      
\usepackage{xcolor}         
\usepackage{caption}
\usepackage{subcaption}
\usepackage{float}
\usepackage{algorithm}
\usepackage{algorithmic}

\usepackage{wrapfig}
\usepackage{amsmath}
\usepackage{amssymb}
\usepackage{mathtools}
\usepackage{amsthm}

\usepackage{istgame} 
\usepackage{sgamevar}

\theoremstyle{plain}
\newtheorem{theorem}{Theorem}[section]
\newtheorem{proposition}[theorem]{Proposition}

\newtheorem{corollary}[theorem]{Corollary}
\theoremstyle{definition}
\newtheorem{definition}[theorem]{Definition}

\newtheorem{conjecture}[theorem]{Conjecture}
\theoremstyle{remark}

\makeatletter
\def\munderbar#1{\underline{\sbox\tw@{$#1$}\dp\tw@\z@\box\tw@}}
\makeatother
\DeclareMathOperator*{\argmax}{arg\,max}

\newcommand{\csp}{\mathrm{CSP}}
\newcommand{\Rho}{\mathrm{P}}
\newcommand{\Vul}{\mathrm{Vul}}
\newcommand{\cfr}{\mathrm{CFR}}
\newcommand{\norm}[1]{\left\lVert#1\right\rVert}


\definecolor{darkred}{rgb}{0.75,0,0}

\title{Guarantees for Self-Play in Multiplayer Games via Polymatrix Decomposability}

%

\author{%
    Revan MacQueen
    \\
  Department of Computing Science\\
    University of Alberta / Amii \\\\
  \texttt{revan@ualberta.ca} \\
  \And
    James R. Wright
    \\
  Department of Computing Science\\
    University of Alberta / Amii \\\\
  \texttt{james.wright@ualberta.ca} \\
}

\begin{document}

\maketitle

\begin{abstract}
    Self-play is a technique for machine learning in multi-agent systems where a learning algorithm learns by interacting with copies of itself. Self-play is useful for generating large quantities of data for learning, but has the drawback that the agents the learner will face post-training may have dramatically different behavior than the learner came to expect by interacting with itself. For the special case of two-player constant-sum games, self-play that reaches Nash equilibrium is guaranteed to produce strategies that perform well against any post-training opponent; however, no such guarantee exists for multiplayer games. We show that in games that approximately decompose into a set of two-player constant-sum games (called constant-sum polymatrix games) where global $\epsilon$-Nash equilibria are boundedly far from Nash equilibria in each subgame (called subgame stability), any no-external-regret algorithm that learns by self-play will produce a strategy with bounded vulnerability. For the first time, our results identify a structural property of multiplayer games that enable performance guarantees for the strategies produced by a broad class of self-play algorithms. We demonstrate our findings through experiments on Leduc poker.
\end{abstract}

\section{Introduction}
Self-play is one of the most commonly used approaches for machine learning in multi-agent systems. In self-play, a learner interacts with copies of itself to produce data that will be used for training. Some of the most noteworthy successes of AI in the past decade have been based on self-play; by employing the procedure, algorithms have been able to achieve super-human abilities in various games, including  Poker \citep{Moravcik-2017, Brown-2018, Brown-2019}, Go and Chess \citep{Silver-2016, Silver-2018}, Starcraft \citep{Vinyals-2019}, Diplomacy \citep{Paquette-2019}, and Stratego \citep{Perolat-2022}. 

Self-play has the desirable property that unbounded quantities of training data can be generated (assuming access to a simulator). But using self-play necessarily involves a choice of agents for the learner to train with: copies of itself. Strategies that perform well during training may perform poorly against new agents, whose behavior may differ dramatically from that of the agents that the learner trained against.

The problem of learning strategies during training that perform well against new agents is a central challenge in algorithmic game theory and multi-agent reinforcement learning (MARL) \citep{Matignon-2012, Lanctot-2017}. In particular, a self-play trained agent interacting with agents from an independent self-play instance---differing only by random seed---can lead to dramatically worse performance  \citep{Lanctot-2017}. The self-play-based DORA \citep{Bakhtin-2021} performs well aginst copies of itself in the game of no-press Diplomacy, but poorly against human-like agents \citep{Bakhtin-2022}.  Self-play has also been known to perform poorly against new agents in games with highly specialized conventions \citep{Hu-2020}, such as Hanabi \citep{Bard-2019}. In training, one instance of a self-play algorithm may learn conventions that are incompatible with the conventions of another instance. 

There are special classes of games where the strategies learned through self-play generalize well to new agents. In two-player, constant-sum games there exist strong theoretical results guaranteeing the performance of a strategy learned through self-play: Nash equilibrium strategies are maxmin strategies, which will perform equally well against any optimal opponent and can guarantee the value of the game against any opponent.

\begin{wrapfigure}[7]{r}{0.4\textwidth}
    \centering
    \begin{game}{3}{2}
        \> $a$   \> $b$   \\
    $a$ \> $1,1$ \> $0,0$ \\
    $b$ \> $0,0$ \> $1,1$
    \end{game}
    \caption{A simple coordination game }
    \label{fig:simple-coord}
\end{wrapfigure}

We lose these guarantees outside of two-player constant-sum games. For example, consider the simple two-player coordination game of Figure~\ref{fig:simple-coord}. If both players choose the same action, both receive a utility of 1, otherwise they receive 0. Suppose the row player learned in self-play to choose $a$ (which performs well against another $a$-player). Similarly, column learned to play $b$. If these two players played against each other, both agents would regret their actions. Upon the introduction of a new agent who did not train with a learner, despite $a$ and $b$ being optimal strategies during training, they fail to generalize to new agents. As this example demonstrates, equilibrium strategies in general are \emph{vulnerable}: agents are not guaranteed the equilibrium's value against new agents--even if the new agent's also play an equilibrium strategy.

In multiplayer, general-sum games, no-regret self-play is no longer guaranteed to produce Nash equilibra---instead, algorithms converge to a \emph{mediated equilibrium}, where a mediator recommends actions to each player \citep{von-2008, Farina-2019,  Farina-2020, Morrill-2021a}. The mediator can represent an external entity that makes explicit recommendations, such as traffic lights mediating traffic flows.  More commonly in machine learning, correlation can arise through the shared history of actions of learning agents interacting with each other \citep{Hart-2000}. If a strategy learned in self-play were played against new agents, these new agents may not have access to the actions taken by other agents during training, so agents would no longer be able to correlate their actions. In fact, even if all agents play a decorrelated strategy from \emph{the same} mediated equilibrium, the result may not be an equilibrium.\footnote{Please refer to the Appendix~\ref{sec: marginal cce not cce} for an example} 

Despite the problems of vulnerability and loss of correlation, self-play has shown promising results outside of two-player constant-sum games. For example, algorithms based on self-play have outperformed professional poker players in multiplayer Texas hold 'em, despite the lack of theoretical guarantees \citep{Brown-2019}.

We seek to understand what structure in multiplayer games will allow self-play to compute a good strategy. We show that any multiplayer game can be projected into the set of two-player constant-sum games between each pair of players, called \emph{constant-sum polymatrix games}. The closer a game is to this space, the less the problem of correlation affects the removal of a mediator. We identify a second important property, called \emph{subgame stability}, where global $\epsilon$-Nash equilibria are boundedly far from Nash equilibria in each two-player subgame. We show that if a multiplayer, general-sum game is close to a subgame stable constant-sum polymatrix game, this is sufficient for strategies learned via self-play to generalize to new agents, as they do in two-player constant-sum games. 

Throughout this work, we take an algorithm-agnostic approach by assuming only that self-play is performed by a \emph{regret minimizing algorithm}. This is accomplished by analyzing directly the equilibria that no-regret algorithms converge to---namely coarse correlated equilibria.  As a result, our analysis applies to a broad class of game-theoretically-inspired learning algorithms but also to MARL algorithms that converge to coarse correlated equilibria \citep{Marris-2021, Liu-2021, Jin-2021}, since any policy can be transformed into a mixed strategy with Kuhn's Theorem \citep{Kuhn-1953}.  For the remainder of this work, when we say ``self-play" we are referring to self-play using a no-regret algorithm. 

Decomposition-based approaches have been used in prior work to show convergence of fictitious play to Nash equilibria in two-player games \citep{Chen-2022} and evolutionary dynamics \citep{Tuyls-2018}. \citet{Cheung-2020} decompose games into zero-sum and cooperative parts to analyse the chaotic nature of Multiplicative Weights and Follow-the-Regularized-Leader. We focus less on the convergence of algorithms per se, and focus instead on the generalization of learned strategies to new agents. We are, to the best of our knowledge, the first to do so.

After defining our structural properties and proving our main results, we conclude with experiments on Leduc poker to elucidate why self-play performs well in multiplayer poker. Our results suggest that regret-minimization techniques converge to a subset of the game's strategy space that is well-approximated by a subgame stable constant-sum polymatrix game.

\section{Background}

\paragraph{Normal Form Games} A normal form game is a 3 tuple $G = (N, \Rho, u)$ where $N$ is a set of players, $\Rho = \bigtimes_{i \in N} \Rho_i$ is a joint pure strategy space and $\Rho_i$ is a set of \emph{pure strategies} for player $i$.  Let $n = |N|$. Pure strategies are deterministic choices of actions in the game. We call $\rho \in \Rho$ a \emph{pure strategy profile}. $u = (u_i)_{i \in N}$ is a set of \emph{utility functions} where $u_i : \Rho \to \mathbb{R}$.
A player $i$ can randomize by playing a \emph{mixed strategy}, a probability distribution $s_i$ over $i$'s pure strategies.
Let $S_i = \Delta(\Rho_i)$ be the set of player $i$'s mixed strategies (where $\Delta(X)$ denotes the set of probability distributions over a domain $X$), and let $S=\bigtimes_{i \in N} S_i$ be the set of mixed strategy profiles.
We overload the definition of utility function to accept mixed strategies as follows: $u_i(s) = \sum_{\rho \in \Rho} \left(\prod_{i \in N} s_i(\rho_i) \right) u_i(\rho)$. We use $\rho_{-i}$ and $s_{-i}$ to denote a joint assignment of pure (resp.\ mixed) strategies to all players except for $i$, thus $s = (s_i, s_{-i})$. We use $\Rho_{-i}$ and $S_{-i}$ to denote the sets of all such assignments.

\paragraph{Hindsight Rationality}
The hindsight rationality framework \citep{Morrill-2021a}  conceptualizes the goal of an agent as finding a strategy that minimizes regret with respect to a set of deviations $\Phi$. A deviation $\phi \in \Phi$ is a mapping $\phi : S_i \to S_i$ that transforms a learner's strategy into some other strategy. Regret measures the amount the learner would prefer to deviate to $\phi(s_i)$: $ u_i(\phi(s_i), s_{-i}) - u_i(s_i, s_{-i}) $. An agent is hindsight rational with respect to a set of deviations $\Phi$ if the agent does not have positive regret with respect to any deviation in $\Phi$, i.e. $\forall \phi \in \Phi, u_i(\phi(s_i), s_{-i}) - u_i(s_i, s_{-i}) \leq 0$. Let $\mu \in \Delta(\Rho)$ be a distribution over pure strategy profiles and $(\Phi_i)_{i\in N}$ be a choice of deviation sets for each player. 

\begin{definition}[$\epsilon$-Mediated Equilibrium \citep{Morrill-2021a}]
    We say $m = (\mu, (\Phi_i)_{i\in N})$ is an \emph{$\epsilon$-mediated equilibrium} if $\forall i \in N, \phi \in \Phi_i$ we have
        $\mathbb{E}_{\rho \sim \mu} \left[ u_i(\phi(\rho_i), \rho_{-i})-  u_i(\rho)  \right] \leq \epsilon $.
    A \emph{mediated equilibrium} is a 0-mediated equilibrium.
\end{definition}
Learning takes place in an online learning environment. At each iteration $t$, a learning agent $i$ chooses a strategy $s_i^t$ while all other agents choose a strategy profile $s_{-i}^t$. No-$\Phi$-regret learning algorithms ensure that the maximum average positive regret tends to 0:
\begin{equation*}
    \lim_{T \to \infty} \frac{1}{T} \left( \max_{\phi \in \Phi} \sum_{t=1}^T u_i(\phi(s_i^t), s_{-i}^t) - u_i(s_i^t, s_{-i}^t) \right) \to 0.
\end{equation*}
If all agents use a no-regret learning algorithms w.r.t.\ a set of deviations $\Phi_i$, the \emph{empirical distribution of play} converges to a mediated equilibrium. Formally, let $\hat \mu \in \Delta(\Rho)$ be the empirical distribution of play, where the weight on $\rho \in \Rho$ is $\hat \mu(\rho) \doteq \sum_{t = 1}^T \left( \prod_{i \in N} s^t_i(\rho_i)  \right)$. As $T \to \infty$, $\hat \mu$ converges to $\mu$ of a mediated equilibrium $(\mu, (\Phi_i)_{i\in N})$. 

Different sets of deviations determine the strength of a mediated equilibrium. For normal-form games,  the set of \emph{swap} deviations, $\Phi_{SW}$, are all possible mappings $\phi : \Rho_i \to \Rho_i$. We may apply a swap deviation $\phi$ to a mixed strategy $s_i$ by taking its pushforward measure: $   [\phi(s_i)](\rho_i) = \sum_{\rho_i' \in \phi^{-1}(\rho_i)} s_i(\rho_i')$, where $\phi^{-1}(\rho_i) = \{\rho_i' \in \Rho_i \mid \phi(\rho_i') = \rho_i \}$. The set of \emph{internal} deviations $\Phi_{I} $, which replace a single pure strategy with another, offer the same strategic power as swap deviations \citep{Foster-1997}. Formally, $\Phi_{I} = \{ \phi \in \Phi_{SW} \mid \exists \rho_i, \rho_i' :  [\phi(\rho_i) = \rho_i'] \land [\forall \rho_i'' \ne \rho_i , \phi(\rho_i'') = \rho_i'']\}$. The set of \emph{external} deviations, $\Phi_{EX}$, is even more restricted: any $\phi \in \Phi_{EX}$ maps all (mixed) strategies to some particular pure strategy; i.e. $\Phi_{EX} = \{\phi \in \Phi_{SW} \mid \exists \rho_i: \forall \rho_i', \phi(\rho_i') = \rho_i \}$. The choice of $(\Phi_i)_{i\in N}$ determines the nature of the mediated equilibrium---provided the learning algorithm for player $i$ is no-$\Phi_i$-regret \citep{Greenwald-2011}. For example, if all players are hindsight rational w.r.t.\ $\Phi_{EX}$, then $\hat \mu$ converges to the set of coarse correlated equilibria (CCE) \citep{Moulin-1978} and if all players are hindsight rational w.r.t.\ $\Phi_{I}$ then  $\hat \mu$ converges to the set of correlated equilibria \citep{Aumann-1974}.

A special case of mediated equilibria are \emph{Nash equilibria}. If some mediated equilibrium $m = (\mu, (\Phi_i)_{i\in N})$ is a product distribution (i.e. $\mu(\rho) = \prod_{i \in N} s_i(\rho) \ \forall \rho \in \Rho$ ) and $\Phi_i  \supseteq  \Phi_{EX} \ \forall i \in N $ then $\mu$ is a Nash equilibrium. Similarly, an $\epsilon$-mediated equilibrium is an $\epsilon$-Nash equilibrium if $\mu$ is a product distribution and $\Phi_i  \supseteq  \Phi_{EX} \ \forall i \in N $. 

In sequential decision making scenarios (often modelled as extensive form games), the set of deviations is even more rich \citep{Morrill-2021a}. All of these deviation classes---with the exception of action deviations \citep{Selten-1974} (which are so weak they do not even imply Nash equilibria in two-player constant-sum games, see appendix)---are stronger than external deviations. This means that the equilibria of any algorithm that minimizes regret w.r.t.\ a stronger class of deviations than external deviations still inherit all the properties of CCE (for example \citet{Hart-2000, Zinkevich-2008, Celli-2020,  Steinberger-2020, Morrill-2021b}). Thus, we focus on CCE since the analysis generalizes broadly. Moreover, CCE can be computed efficiently, either analytically \citep{Jiang-2011} or by a learning algorithm. When we refer to CCE, we use the distribution $\mu$ to refer to the CCE, since $\Phi$ is implicit.

\section{Vulnerability}

The choice of other agents during learning affects the strategy that is learned. Choosing which agents make good ``opponents" during training is an open research question \citep{Lanctot-2017, Marris-2021}. One common approach, \emph{self-play}, is to have a learning algorithm train with copies of itself as the other agents. If the algorithm is a  no-$\Phi$-regret algorithm, the learned behavior will converge to a mediated equilibrium; this gives a nice characterization of the convergence behavior of the algorithm. 

However, in general the strategies in a mediated equilibrium are correlated with each other. This means that in order to deploy a strategy learned in self-play, an agent must first extract it by marginalizing out other agent's strategies. This new \emph{marginal strategy} can then be played against new agents with whom the agent did not train (and thus correlate).
\begin{definition}[Marginal strategy]
    Given some mediated equilibrium $(\mu, (\Phi_i)_{i = 1}^N )$, let $s_i^\mu$ be the \emph{marginal strategy} for $i$, where $s_i^\mu(\rho_i) \doteq \sum_{\rho_{-i} \in \Rho_{-i}} \mu(\rho_i, \rho_{-i})$.  Let $s^\mu$ be a \emph{marginal strategy profile}, where each $\forall i \in N$ plays  $s_i^\mu$.
\end{definition}

Once a strategy has been extracted via marginalization, learning can either continue with the new agents (and potentially re-correlate), or the strategy can remain fixed.\footnote{The strategy learned in self-play prior to online learning has been called a blueprint strategy \citep{Brown-2019}.} We focus on the case where the strategy remains fixed. In doing so we can guarantee the performance of this strategy if learning stops, but also the show guarantees about the initial performance of a strategy that continues to learn; this is especially important in safety-critical domains. 

Given a marginal strategy $s^\mu_i$, we can bound its underperformance against new agents that behave differently from the (decorrelated) training opponents by a quantity which we call \emph{vulnerability}.
\begin{definition}[Vulnerability]
    The vulnerability of a strategy profile $s$ for player $i$ with respect to  $S'_{-i} \subseteq S_{-i}$ is 
    \begin{align*}
         \textup{Vul}_i\left(s, S'_{-i}\right)  \doteq  u_i(s) - \min_{s'_{-i} \in S'_{-i}} u_i(s_i, s'_{-i}). 
    \end{align*}
\end{definition}
Vulnerability gives a measure of how much worse $s$ will perform with new agents compared to its training performance under pessimistic assumptions---that $-i$ play the strategy profile in $S'_{-i}$ that is worst for $i$. We assume that $-i$ are not able to correlate their strategies. 

Thus, if a marginal strategy profile $s^\mu$ is learned through self-play and $\textup{Vul}_i\left(s^\mu, S'_{-i}\right)$ is small, then $s^\mu_i$ performs roughly as well against new agents $-i$ playing some strategy profile in $S'_{-i}$. 
$S'_{-i}$  is used to encode assumptions about the strategies of opponents. $S'_{-i} = S_{-i}$ means opponents could play  \emph{any} strategy, but we could also set $S'_{-i}$ to be the set of strategies learnable through self-play if we believe that opponents would also be using self-play as a training procedure. 

Some games have properties that make the vulnerability low. For example, in two-player constant-sum games the marginal strategies learned in self-play generalize well to new opponents since any Nash equilibrium strategy is also a maxmin strategy \citep{VonNeumann-1928}.

\section{Guarantees via Polymatrix Decomposability} \label{sec:subgame stable}

Multiplayer games are fundamentally more complex than two-player constant-sum games  \citep{Daskalakis-2005, Daskalakis-2009}. However, certain multiplayer games can be decomposed into a graph of two-player games, where a player's payoffs depend only on their actions and the actions of players who are neighbors in the graph \citep{Bergman-1998}. In these \emph{polymatrix} games (a subset of graphical games \citep{Kearns-2013}) Nash equilibria can be computed efficiently if player's utilities sum to a constant~\citep{Cai-2011, Cai-2016}.

\begin{definition}[Polymatrix game]
    A \emph{polymatrix game} $G = (N, E, \Rho, u)$ consists of a set $N$ of players, a set of edges $E$ between players, a set of pure strategy profiles $\Rho$, and a set of utility functions $u = \{ u_{ij}, u_{ji} \mid \forall (i, j) \in E\}$ where $u_{ij}, u_{ji}: \Rho_i \times \Rho_j \to \mathbb{R}$ are utility functions associated with the edge $(i, j)$ for players $i$ and $j$, respectively. 
\end{definition}

We refer to the normal-form \emph{subgame} between $(i, j)$ as $G_{ij}=(\{i,j\}, \Rho_i\times\Rho_j,(u_{ij},u_{ji}))$. We use $u_i$ to denote the \emph{global utility function} $u_i: \Rho \to \mathbb{R}$ where $u_i(\rho) = \sum_{(i, j) \in E} u_{ij}(\rho_i, \rho_j)$ for each player. We use $E_i \subseteq E$ to denote the set of edges where $i$ is a player and $|E_i|$ to denote the number of such edges. 

\begin{definition}[Constant-sum polymatrix]
    We say a polymatrix game $G$ is \emph{constant-sum} if for some constant $c$ we have that $\forall \rho \in \Rho$, $\sum_{i \in N} u_i(\rho) = c$.
\end{definition}
Constant-sum polymatrix (CSP) games have the desirable property that all CCE  factor into a product distribution; i.e., are Nash equilibria \citep{Cai-2016}. We give a relaxed version: 
\begin{proposition}\label{prop:cce_to_nash_CSP}
    If $\mu$ is an $\epsilon$-CCE of a CSP game $G$, $s^\mu$ is an $n\epsilon$-Nash of $G$.
\end{proposition}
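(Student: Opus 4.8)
The plan is to exploit two structural facts. First, because $G$ is a polymatrix game, the payoff of any fixed strategy for $i$ against the correlated profile $\mu_{-i}$ equals its payoff against the product of marginals $s^\mu_{-i}$: each edge term $u_{ij}$ involves only $i$ and a single neighbor $j$, so the expectation over $\mu_{-i}$ factors through the individual marginal $\mu_j = s^\mu_j$. Second, the constant-sum property forces the players' ``decorrelation gaps'' $u_k(\mu) - u_k(s^\mu)$ to sum to zero, so a one-sided regret bound that holds for every other player can be transferred onto player $i$.

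Concretely, I would first rewrite the $\epsilon$-CCE condition (external deviations): for every $i$ and every $\rho_i' \in \Rho_i$, $\ u_i(\rho_i', \mu_{-i}) - u_i(\mu) \le \epsilon$, where $u_i(\rho_i', \mu_{-i}) \doteq \mathbb{E}_{\rho \sim \mu}[u_i(\rho_i', \rho_{-i})]$. Expanding $u_i(\rho_i', \rho_{-i}) = \sum_{(i,j) \in E} u_{ij}(\rho_i', \rho_j)$ and pushing the expectation inside, each term depends on $\rho_{-i}$ only through $\rho_j$, so $u_i(\rho_i', \mu_{-i}) = \sum_{(i,j)\in E} u_{ij}(\rho_i', s^\mu_j) = u_i(\rho_i', s^\mu_{-i})$. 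The same computation, averaging $\rho_i'$ over the marginal $s^\mu_i$ and using multilinearity, gives $\mathbb{E}_{\rho_i' \sim s^\mu_i}\!\left[u_i(\rho_i', \mu_{-i})\right] = u_i(s^\mu)$.

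Next I would combine these. Averaging the $\epsilon$-CCE inequality for player $k$ over $\rho_k' \sim s^\mu_k$ yields $u_k(s^\mu) - u_k(\mu) \le \epsilon$ for every $k \in N$. Summing over all $k \ne i$ and using constant-sum ($\sum_k u_k(\mu) = \sum_k u_k(s^\mu) = c$, hence $\sum_{k \ne i}[u_k(s^\mu) - u_k(\mu)] = u_i(\mu) - u_i(s^\mu)$) gives $u_i(\mu) - u_i(s^\mu) \le (n-1)\epsilon$. Then for any pure deviation $\rho_i^* \in \Rho_i$,
\[
u_i(\rho_i^*, s^\mu_{-i}) - u_i(s^\mu) = \bigl(u_i(\rho_i^*, \mu_{-i}) - u_i(\mu)\bigr) + \bigl(u_i(\mu) - u_i(s^\mu)\bigr) \le \epsilon + (n-1)\epsilon = n\epsilon,
\]
using the polymatrix identity for the first bracket (and the $\epsilon$-CCE bound to control its sign) and the constant-sum argument for the second. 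Since $s^\mu$ is a product distribution and a best response can be taken to be pure by multilinearity, this is exactly the statement that $s^\mu$ is an $n\epsilon$-Nash equilibrium of $G$.

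The main obstacle is identifying the right use of the constant-sum property---namely the observation that the per-player decorrelation gaps sum to zero, so the $n-1$ external-regret bounds ``$u_k(s^\mu) \le u_k(\mu) + \epsilon$'' for the other players bound how far player $i$'s equilibrium payoff can fall when correlation is removed. The polymatrix reduction from $\mu_{-i}$ to $s^\mu_{-i}$ is then what upgrades the external-deviation (CCE) guarantee to a genuine best-response (Nash) guarantee; the rest is routine multilinearity bookkeeping.
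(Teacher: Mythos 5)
Your proof is correct and uses the same two ingredients as the paper's: the polymatrix marginalization identity $\mathbb{E}_{\rho\sim\mu}[u_i(\rho_i',\rho_{-i})]=u_i(\rho_i',s^\mu_{-i})$ (which the paper cites from Cai et al.\ as Proposition~\ref{prop:cai_1} and you rederive) together with summing the per-player CCE inequalities and invoking the constant-sum property. The only difference is bookkeeping: the paper sums the best-response gaps over all $n$ players and drops the non-negative terms for $k\ne i$, whereas you bound the decorrelation gap $u_i(\mu)-u_i(s^\mu)\le(n-1)\epsilon$ via the other players' inequalities and add player $i$'s own $\epsilon$; both yield the same $n\epsilon$ bound by essentially the same argument.
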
 

This means no-external-regret learning algorithms will converge to Nash equilibria, and thus do not require a mediator to enable the equilibrium. However, they do not necessarily have the property of  two-player constant-sum games that all (marginal) equilibrium strategies are maxmin strategies \citep{Cai-2016}. Thus Nash equilibrium strategies in CSP games have no vulnerability guarantees.  \citet{Cai-2016} show that CSP games that are constant sum in each subgame are no more or less general than CSP games that are constant sum globally, since there exists a payoff preserving transformation between the two sets.  For this reason we focus on CSP games that are constant sum in each subgame without loss of generality. Note that the constant need not be the same in each subgame.

\subsection{Vulnerability on a Simple Constant-Sum Polymatrix Game}\label{sec:off_def}
We next demonstrate why CSP games do not have bounded vulnerability on their own without additional properties. Consider the simple 3-player CSP game called Offense-Defense (Figure~\ref{fig:offense}). There are 3 players: $0$, $1$ and $2$. Players $1$ and $2$ have the option to either attack $0$ ($a_0$) or attack the other (e.g. $a_1$); player $0$, on the other hand, may either relax ($r$) or defend ($d$). If either $1$ or $2$ attacks the other while the other is attacking $0$, the attacker gets $\beta$ and the other gets $-\beta$ in that subgame. If both $1$ and $2$ attack $0$, $1$ and $2$ get 0 in their subgame and if they attack each other, their attacks cancel out and they both get $0$. If $0$ plays $d$, they defend and will always get $0$. If they relax, they get $-\beta$ if they are attacked and $0$ otherwise. Offense-Defense is a CSP game, so any CCE is a Nash equilibrium.

Note that $\rho = (r, a_2, a_1)$ is a Nash equilibrium. Each $i \in \{1,2\}$ are attacking the other $j \in \{1, 2\} \setminus \{i\}$, so has expected utility of $0$. Deviating to attacking $0$ would leave them open against the other, so $a_0$ is not a profitable deviation, as it would also give utility $0$. Additionally, $0$ has no incentive to deviate to $d$, since this would also give them a utility of $0$.  

However, $\rho$ is not a Nash equilibrium of the subgames---all $i \in \{1,2\}$ have a profitable deviation in their subgame against $0$, which leaves $0$ vulnerable in that subgame. If $1$ and $2$ were to both deviate to  $a_0$, and $0$ continues to play their Nash equilibrium strategy of $r$, $0$ would lose $2\beta$ utility from their equilibrium value; in other words, the vulnerability of player $0$ is $2\beta$.

\subsection{Subgame Stability}  
 However, some constant-sum polymatrix games \emph{do} have have bounded vulnerability; we call these \emph{subgame stable games}. In subgame stable  games, global equilibria imply equilibria at each pairwise subgame.
 
\begin{definition}[Subgame stable profile]
    Let $G$ be a polymatrix game with global utility functions $(u_i)_{i \in N}$. We say a strategy profile $s$ is \emph{$\gamma$-subgame stable} if $\forall (i, j) \in E$, we have $(s_i, s_j)$ is a $\gamma$-Nash of $G_{ij}$; that is $u_{ij}(\rho_i, s_j) -  u_{ij}(s_i, s_j) \leq   \gamma \quad \forall \rho_i \in \Rho_i$ and  $u_{ji}(\rho_j, s_i) -  u_{ji}(s_j, s_i) \leq   \gamma \quad \forall \rho_j \in \Rho_j$
\end{definition}

For example, in Offense-Defense,  $(r, a_2, a_1)$ is $\beta$-subgame stable; it is a Nash equilibrium but is a $\beta$-Nash of the subgame between $0$ and $1$ and  the subgame between $0$ and $2$. 

\begin{definition}[Subgame stable game]
    Let $G$ be a polymatrix game. We say $G$ is \emph{$(\epsilon, \gamma)$-subgame stable} if for \emph{any} $\epsilon$-Nash equilibrium $s$ of $G$,  $s$ is $\gamma$-subgame stable.
\end{definition}
Subgame stability connects the global behavior of play ($\epsilon$-Nash equilibrium in $G$) to local behavior in a subgame ($\gamma$-Nash in $G_{ij}$). If a polymatrix game is both constant-sum and is $(0, \gamma)$-subgame stable then we can bound the vulnerability of any marginal strategy.

\begin{theorem}\label{theorem:epsilon_subgame stable_equilibrium_local}
    Let $G$ be a CSP game. If $G$ is $(0, \gamma)$-subgame stable, then for any player $i \in N$ and  CCE $\mu$ of $G$, we have $\textup{Vul}_i\left(s^\mu, S_{-i}\right) \leq |E_i| \gamma$.
\end{theorem}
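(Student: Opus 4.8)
The plan is to chain three ingredients: Proposition~\ref{prop:cce_to_nash_CSP} (to pass from a CCE to a genuine Nash equilibrium of $G$), the definition of $(0,\gamma)$-subgame stability (to push the equilibrium condition down to each two-player subgame), and the classical fact that in a two-player constant-sum game an approximate Nash strategy is an approximate security strategy. Recall that by the argument attributed to \citet{Cai-2016} we may assume without loss of generality that each subgame $G_{ij}$ is constant-sum, say $u_{ij}(\rho_i,\rho_j) + u_{ji}(\rho_j,\rho_i) = c_{ij}$ for all $\rho_i,\rho_j$.

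First, I would apply Proposition~\ref{prop:cce_to_nash_CSP} with $\epsilon = 0$: since $\mu$ is a CCE of the CSP game $G$, the marginal profile $s^\mu$ is a ($0$-)Nash equilibrium of $G$. As $G$ is $(0,\gamma)$-subgame stable, it follows that $s^\mu$ is $\gamma$-subgame stable, so for every edge $(i,j)\in E_i$ the pair $(s^\mu_i, s^\mu_j)$ is a $\gamma$-Nash equilibrium of $G_{ij}$. In particular, $s^\mu_j$ is a $\gamma$-best response to $s^\mu_i$ in $G_{ij}$: $u_{ji}(\rho_j, s^\mu_i) - u_{ji}(s^\mu_j, s^\mu_i) \le \gamma$ for all $\rho_j \in \Rho_j$, which extends to $u_{ji}(s'_j, s^\mu_i) - u_{ji}(s^\mu_j, s^\mu_i) \le \gamma$ for every mixed $s'_j \in S_j$ by linearity of $u_{ji}$ in its first argument.

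Second, I would convert this into a lower bound on $i$'s subgame payoff. Substituting $u_{ji} = c_{ij} - u_{ij}$ into the inequality above and simplifying gives $u_{ij}(s^\mu_i, s'_j) \ge u_{ij}(s^\mu_i, s^\mu_j) - \gamma$ for every $s'_j \in S_j$ — i.e., against any behavior of $j$ within the subgame, playing $s^\mu_i$ costs $i$ at most $\gamma$ relative to the subgame payoff realized at $s^\mu$. Summing over the edges incident to $i$, and using $u_i(\rho) = \sum_{(i,j)\in E} u_{ij}(\rho_i,\rho_j)$ (which involves exactly the $|E_i|$ edges of $E_i$), for any opponent profile $s'_{-i}\in S_{-i}$ we get $u_i(s^\mu_i, s'_{-i}) = \sum_{(i,j)\in E_i} u_{ij}(s^\mu_i, s'_j) \ge \sum_{(i,j)\in E_i}\bigl(u_{ij}(s^\mu_i, s^\mu_j) - \gamma\bigr) = u_i(s^\mu) - |E_i|\gamma$. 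Taking the minimum over $s'_{-i}$ and rearranging yields $\textup{Vul}_i(s^\mu, S_{-i}) = u_i(s^\mu) - \min_{s'_{-i}\in S_{-i}} u_i(s^\mu_i, s'_{-i}) \le |E_i|\gamma$.

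I expect the only delicate point to be the second step: one must invoke the per-subgame constant-sum representation (already justified as WLOG in the text) and be careful that the bound is $\gamma$ per edge rather than $2\gamma$. This works out cleanly because only the \emph{opponent's} half of the $\gamma$-Nash condition is needed — $s^\mu_j$ being a $\gamma$-best response to $s^\mu_i$ — so no doubling occurs. The remaining steps are routine bookkeeping: the pure-to-mixed extension by linearity and the additive decomposition of $u_i$ over $E_i$.
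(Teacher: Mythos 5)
Your proposal is correct and follows essentially the same route as the paper: pass from the CCE to a Nash equilibrium of $G$ via the CSP property, invoke $(0,\gamma)$-subgame stability to get a $\gamma$-Nash in each $G_{ij}$, and use the two-player constant-sum structure to turn the opponent's $\gamma$-best-response condition into a per-edge security bound of $\gamma$ (the paper packages this last step as Proposition~\ref{prop: 2pzs_vuln}), then sum over $E_i$. Your explicit pointwise bound $u_i(s^\mu_i, s'_{-i}) \ge u_i(s^\mu) - |E_i|\gamma$ is just an unrolled version of the paper's interchange of the minimum with the sum over subgames, so there is no substantive difference.
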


Theorem~\ref{theorem:epsilon_subgame stable_equilibrium_local} tells us that using self-play to compute a marginal strategy $s^\mu$ on constant-sum polymatrix games will have low vulnerability against worst-case opponents if $\gamma$ is low. Thus, these are a set of multiplayer games where self-play is an effective training procedure. 

\paragraph{Proof idea.} Since $G$ is a CSP game, $s^\mu$ is a Nash equilbrium. Since $G$ is $(0, \gamma)$-subgame stable, $(s_i^\mu, s_j^\mu)$ is a $\gamma$-Nash equilibrium in each subgame, which bounds the vulnerability in each subgame. This is because 
\begin{align*}
    \min_{s_{-i} \in S_{-i}}u_i(s^\mu_i, s_{-i}) = \sum_{(i, j) \in E_i} \min_{s_j \in S_j} u_{ij}(s^\mu_i, s_j)
\end{align*}
since players $j \ne i$ can minimize $i$'s utility without coordinating, as $G$ is a polymatrix game.

We give an algorithm for finding the minimum value of $\gamma$ such that a CSP game is $(0, \gamma)$-subgame stable in Appendix~\ref{subsec:compute subgame stable}.

\begin{figure}[ht]
     \begin{subfigure}[b]{0.49\textwidth}
         \includegraphics[width=\textwidth]{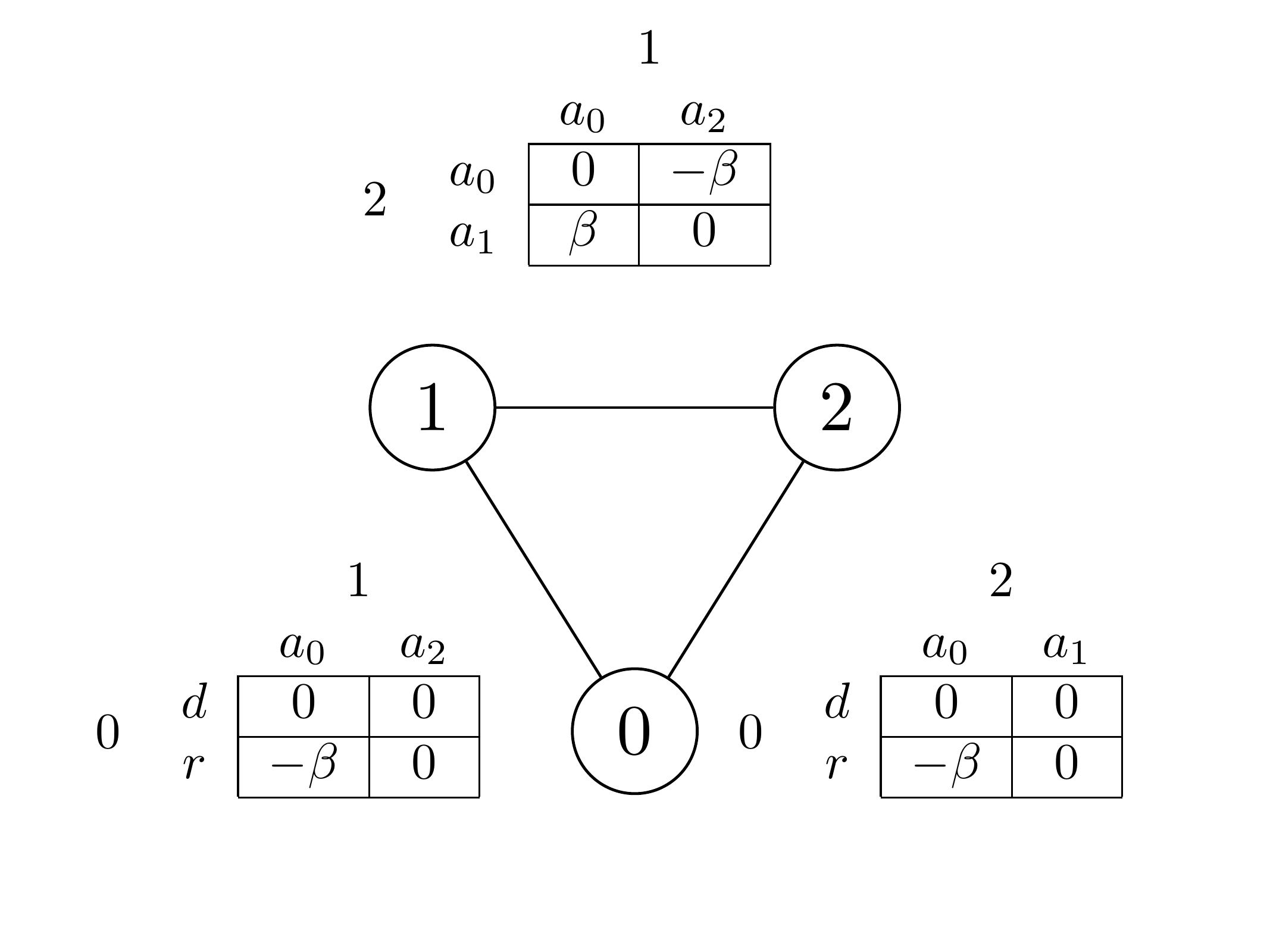}
         \caption{}
         \label{fig:offense} 
     \end{subfigure}
     \begin{subfigure}[b]{0.49\textwidth}
        \centering
        \begin{istgame}
            \xtdistance{10mm}{20mm} 
            \istroot(0) (0, 0){$d$} 
            \istb{0}[al] 
            \istb{1}[br]{...}  
            \istb{2}[ar]{...}  
            \endist 
            
            \xtdistance{10mm}{25mm} 
            \istroot(1)(0-1){$0$}
            \istb{f}[al]{\begin{pmatrix} 0\\ 0 \\ 0 \\ 0 \end{pmatrix}}
            \istb{c}[ar]
            \endist 

            \xtdistance{10mm}{30mm} 
            \istroot(5)(1-2){$1$}
            \istb{f}[al]
            \istb{c}[ar] 
            \endist 
            
            \xtdistance{7mm}{15mm} 
            \istroot(6)(5-1)
            \istb{f}[l]{\begin{pmatrix} 0 \\\beta \\ -\beta /2 \\ -\beta /2 \end{pmatrix}}
            \istb{c}[r]{\begin{pmatrix} 0\\-\beta/2  \\ -\beta/2  \\ \beta  \end{pmatrix}}
            \endist 

            \xtdistance{7mm}{15mm} 
            \istroot(6)(5-2)
            \istb{f}[l]{\begin{pmatrix} 0\\ -\beta  \\ \beta  \\ 0 \end{pmatrix}}
            \istb{c}[r]{\begin{pmatrix} 0\\- \beta \\ \beta /2 \\ \beta /2 \end{pmatrix}}
            \endist 

            \xtInfoset(5-1)(5-2){$2$}
        \end{istgame}
        \caption{}
        \label{figure:bad_card}
        \end{subfigure}
        \caption{(a) Offense-Defense, a simple CSP game. We only show payoffs for the row player, column player payoffs are zero minus the row player's payoffs. 
        (b) Bad Card: a game that is not overall CSP, but the subset of strategies learnable by self-play are. At the terminals, we show the dealers utility first, followed by players $0, 1$ and $2$, respectively. }
        \vspace{-\baselineskip}
\end{figure}  

\subsection{Approximate Constant-Sum Polymatrix Games}\label{subsec:approximate_poly}

Most games are not factorizable into CSP games. However, we can take any game $G$ and project it into the space of CSP games. 

\begin{definition}[$\delta$-constant sum polymatrix] \label{def:delta_CSP}
    A game $G$ is \emph{$\delta$-constant sum polymatrix} ($\delta$-CSP) if there exists a CSP game $\check G$ with global utility function $\check u$ such that   $\forall i \in N, \rho \in \Rho$, $|u_i(\rho) - \check u_i(\rho)| \leq \delta$. Given $G$, we denote the set of such CSP games as $\csp_\delta(G)$.
\end{definition}

\begin{proposition} \label{prop:properties_of_delta_csp}
    In a $\delta$-CSP game $G$ the following hold.
    \begin{enumerate}
        \item Any CCE of $G$ is a $2\delta$-CCE of any $\check G \in \csp_\delta(G)$.
        \item  The marginal strategy profile of any CCE of $G$ is a $2n\delta$-Nash equilibrium of any $\check G \in \csp_\delta(G)$.
        \item The marginal strategy profile of any CCE of $G$ is a $2(n+1)\delta$-Nash equilibrium of $G$.
    \end{enumerate}
\end{proposition}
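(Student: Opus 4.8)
The plan is to prove the three claims in sequence, each bootstrapping from the previous one, using only the hypothesis $|u_i(\rho)-\check u_i(\rho)|\le\delta$ (for all $i\in N$, $\rho\in\Rho$) together with Proposition~\ref{prop:cce_to_nash_CSP}. The one preliminary fact I would establish first is that this bound extends to \emph{mixed} profiles: for any $s\in S$, both $u_i(s)$ and $\check u_i(s)$ are expectations of the corresponding pure-profile utilities under the same product distribution $\prod_{j}s_j$, so the triangle inequality gives $|u_i(s)-\check u_i(s)|\le\delta$. This is the workhorse for all three parts, and it uses implicitly that the $\check G$ of Definition~\ref{def:delta_CSP} shares the player set $N$ and pure-profile space $\Rho$ of $G$.

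For Part 1, fix a CCE $\mu$ of $G$, a player $i$, and an external-deviation target $\rho_i'\in\Rho_i$; since $\mu$ is a CCE of $G$ we have $\mathbb{E}_{\rho\sim\mu}\!\left[u_i(\rho_i',\rho_{-i})-u_i(\rho)\right]\le 0$. I would then telescope the corresponding quantity in $\check G$:
\begin{align*}
\mathbb{E}_{\rho\sim\mu}\!\left[\check u_i(\rho_i',\rho_{-i})-\check u_i(\rho)\right]
&=\mathbb{E}_{\rho\sim\mu}\!\left[\check u_i(\rho_i',\rho_{-i})-u_i(\rho_i',\rho_{-i})\right]\\
&\quad+\mathbb{E}_{\rho\sim\mu}\!\left[u_i(\rho_i',\rho_{-i})-u_i(\rho)\right]\\
&\quad+\mathbb{E}_{\rho\sim\mu}\!\left[u_i(\rho)-\check u_i(\rho)\right].
\end{align*}
The middle term is $\le 0$, and the first and last are each $\le\delta$ by the pointwise bound (both $(\rho_i',\rho_{-i})$ and $\rho$ are pure profiles in $\Rho$), so the left-hand side is $\le 2\delta$; hence $\mu$ is a $2\delta$-CCE of $\check G$.

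Part 2 is then immediate: $\check G$ is a CSP game and $\mu$ is a $2\delta$-CCE of it, so Proposition~\ref{prop:cce_to_nash_CSP} with $\epsilon=2\delta$ gives that $s^\mu$ is a $2n\delta$-Nash equilibrium of $\check G$. For Part 3, I would transfer this back to $G$ using the mixed-profile form of the closeness bound: for any $i$ and any $s_i'\in S_i$,
\begin{align*}
u_i(s_i',s^\mu_{-i})-u_i(s^\mu)
&=\big[u_i(s_i',s^\mu_{-i})-\check u_i(s_i',s^\mu_{-i})\big]\\
&\quad+\big[\check u_i(s_i',s^\mu_{-i})-\check u_i(s^\mu)\big]\\
&\quad+\big[\check u_i(s^\mu)-u_i(s^\mu)\big]\\
&\le \delta + 2n\delta + \delta = 2(n+1)\delta,
\end{align*}
where the first and third brackets are bounded by the mixed closeness bound and the middle bracket by Part 2. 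Since this holds for every $i$ and $s_i'$, $s^\mu$ is a $2(n+1)\delta$-Nash equilibrium of $G$.

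I do not expect a serious obstacle; the argument is a chain of $\delta$-perturbation estimates. The points requiring care are bookkeeping ones: each perturbation is charged twice, once at the deviation profile and once at the incumbent profile, which is exactly what produces the factors of $2$ in each part (and stacks with the $n$ from Proposition~\ref{prop:cce_to_nash_CSP} in Parts 2--3); and the closeness bound must be invoked in its pure form in Part 1 but in its mixed form in Part 3, so establishing that mixed extension cleanly at the outset is the right first move.
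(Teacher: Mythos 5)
Your proof is correct and follows essentially the same route as the paper's: part 1 by charging $\delta$ once at the deviation profile and once at the incumbent (the paper phrases this as a substitution rather than a telescope, but it is the same estimate), part 2 by invoking Proposition~\ref{prop:cce_to_nash_CSP} with $\epsilon=2\delta$, and part 3 by transferring back to $G$ at the cost of another $2\delta$. Your one addition—explicitly establishing the mixed-profile form $|u_i(s)-\check u_i(s)|\le\delta$ up front—is a small improvement in rigor over the paper, which uses that extension implicitly in part 3.
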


From (3) we have that the removal of the mediator impacts players utilities by a bounded amount in $\delta$-CSP games. We give a linear program in Appendix~\ref{subsec:csp lp} that will find the minimum $\delta$ such that $G$ is $\delta$-CSP and returns a CSP game $\check G \in \csp_\delta(G)$.

Combining  $\delta$-CSP with $(\epsilon, \gamma)$-subgame stability
lets us bound vulnerability in \emph{any} game.

\begin{theorem} \label{thm:main2}
    If $G$ is $\delta$-CSP and $\exists \check G \in \csp_\delta(G)$ that is $(2n\delta, \gamma)$-subgame stable and $\mu$ is a CCE of $G$, then
    \begin{align*}
        \textup{Vul}_i\left(s^\mu,  S_{-i}\right) & \leq |E_i| \gamma +  2 \delta \leq (n-1)\gamma +  2 \delta.
    \end{align*}   
\end{theorem}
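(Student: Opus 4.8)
The plan is to prove the bound first inside a constant-sum polymatrix game $\check G$ witnessing that $G$ is $\delta$-CSP, and then transfer it back to $G$ at a cost of only $2\delta$. Fix $\check G\in\csp_\delta(G)$ that is $(2n\delta,\gamma)$-subgame stable, with global utilities $\check u$ and edge set $E$ (and, as throughout, constant-sum in each subgame). First I would note that the pointwise bound $|u_i(\rho)-\check u_i(\rho)|\le\delta$ lifts to all mixed profiles, i.e.\ $|u_i(s)-\check u_i(s)|\le\delta$ for every $s\in S$, by taking the expectation over $\rho\sim s$.

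Next I would show that $s^\mu$ is $\gamma$-subgame stable in $\check G$. By Proposition~\ref{prop:properties_of_delta_csp}(2), $s^\mu$ is a $2n\delta$-Nash equilibrium of $\check G$ --- which is exactly why the hypothesis asks for $(2n\delta,\gamma)$-subgame stability rather than $(0,\gamma)$ --- so $(s_i^\mu,s_j^\mu)$ is a $\gamma$-Nash of $\check G_{ij}$ for every $(i,j)\in E$. Following the proof idea of Theorem~\ref{theorem:epsilon_subgame stable_equilibrium_local}, the polymatrix structure lets me split the worst-case opponent response,
\[
    \min_{s_{-i}\in S_{-i}}\check u_i(s_i^\mu,s_{-i}) = \sum_{(i,j)\in E_i}\ \min_{s_j\in S_j}\check u_{ij}(s_i^\mu,s_j),
\]
because the edge terms are independent across $j\ne i$. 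For each edge I would turn player $j$'s $\gamma$-best-response condition into a lower bound on player $i$'s subgame payoff via the constant-sum identity $\check u_{ij}(s_i^\mu,\rho_j)+\check u_{ji}(\rho_j,s_i^\mu)=c_{ij}$, obtaining $\check u_{ij}(s_i^\mu,\rho_j)\ge\check u_{ij}(s_i^\mu,s_j^\mu)-\gamma$ for all $\rho_j$ and hence $\min_{s_j}\check u_{ij}(s_i^\mu,s_j)\ge\check u_{ij}(s_i^\mu,s_j^\mu)-\gamma$. Summing over $E_i$ gives $\check u_i(s^\mu)-\min_{s_{-i}}\check u_i(s_i^\mu,s_{-i})\le|E_i|\gamma$.

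Finally I would transfer to $G$ using the lifted approximation bound: $u_i(s^\mu)\le\check u_i(s^\mu)+\delta$ while $\min_{s'_{-i}}u_i(s_i^\mu,s'_{-i})\ge\min_{s'_{-i}}\check u_i(s_i^\mu,s'_{-i})-\delta$; subtracting yields $\textup{Vul}_i(s^\mu,S_{-i})\le|E_i|\gamma+2\delta$, and $|E_i|\le n-1$ gives the second inequality. Most of this is routine once $\check G$ is fixed; the step to watch is the bookkeeping of the two approximation parameters --- making sure the equilibrium slack of $s^\mu$ in $\check G$ is precisely the $2n\delta$ that subgame stability is assumed to tolerate, and that the projection error contributes only additively ($\delta$ for the equilibrium value and $\delta$ for the min-value term, so $2\delta$ total) rather than being scaled by $|E_i|$.
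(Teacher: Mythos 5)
Your proof is correct and takes essentially the same route as the paper's: use Proposition~\ref{prop:properties_of_delta_csp} to establish that $s^\mu$ is a $2n\delta$-Nash of $\check G$, invoke $(2n\delta,\gamma)$-subgame stability, split the worst-case minimization over edges via the polymatrix structure, bound each subgame loss by $\gamma$ using the two-player constant-sum property, and pay $2\delta$ to move between $u$ and $\check u$. The only cosmetic differences are that you transfer back to $G$ at the end rather than at the start, and you derive the per-subgame $\gamma$ bound directly from the constant-sum identity instead of citing Proposition~\ref{prop: 2pzs_vuln}.
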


Theorem~\ref{thm:main2} shows that games which are close to the space of subgame stable CSP (SS-CSP) games are cases where the marginal strategies learned through self-play have bounded worst-case performance. This makes them suitable for any no-external-regret learning algorithm. 

\section{Vulnerability Against Other Self-Taught Agents} \label{sec:nf self-taught}
Theorem~\ref{thm:main2} bounds vulnerability in worst-case scenarios, where $-i$ play any strategy profile to minimize $i$'s utility. In reality, however, each player $j \in -i$ has their own interests and would only play a strategy that is reasonable under these own interests. In particular, what if each agent were also determining their own strategy via self-play in a separate training instance. How much utility can $i$ guarantee themselves in this setup?

While no-external-regret learning algorithms converge to the set of CCE, other assumptions can be made with additional information about the type of regret being minimized. For example, no-external-regret learning algorithms will  play strictly dominated strategies with vanishing probability and CFR will play dominated actions with vanishing probability \citep{Gibson-2014}. These refinements can tighten our bounds, since the part of the game that no-regret learning algorithms converge to might be closer to a CSP game than the game overall. 

Consider the game shown in Figure~\ref{figure:bad_card}, called ``Bad Card''.
The game starts with each player except the dealer putting $\beta/2$ into the pot. A dealer player $d$---who always receives utility 0 regardless of the strategies of the other players---then selects a player from $\{0, 1, 2\}$ to receive a ``bad card'', while the other two players receive a ``good card''. The player who receives the bad card has an option to fold, after which the game ends and all players receive their ante back. Otherwise if this player calls, the other two players can either fold or call.
The pot of $\beta$ is divided among the players with good cards who call.
If one player with a good card calls, they win the pot of $\beta$. If both good card players call then they split the pot. If both players with good cards fold, then the player with the bad card wins the pot. 

As we shall soon show, Bad Card does not have a CSP decomposition---in fact it does not have \emph{any} polymatrix decomposition. Since Bad Card is an extensive-form game without chance, each pure strategy profile leads to a single terminal history. Let $\Rho(z)$ be the set of pure strategy profiles that play to a terminal $z$. In order for Bad Card to be polymatrix, we would need to find subgame utility functions such that $\forall \rho \in \Rho$,  $u_0(\rho) =u_{0, d}(\rho_0, \rho_d) +  u_{0, 1}(\rho_0, \rho_1) +  u_{0, 2}(\rho_0, \rho_2) $. Equivalently, we could write $ \forall z \in Z, \rho \in \Rho(z), u_0(z) = u_{0, d}(\rho_0, \rho_d) +  u_{0, 1}(\rho_0, \rho_1) +  u_{0, 2}(\rho_0, \rho_2)$ where $Z$ is the set of terminals.  A subset of these constraints results in an infeasible system of equations.

Consider the terminals in the subtree shown in Figure~\ref{figure:bad_card}: $z^1 = (0, c, c, c)$,  $z^2 = ( 0, c, c, f)$, $z^3 = (0, c, f, c)$ and  $z^4 = (0, c, f, f)$.  Let $\rho_i^c$ be any pure strategy that plays $c$ in this subtree  and $\rho_i^f$ be any strategy that plays $f$ in this subtree for player $i$. In order for Bad Card to decompose into a polymatrix game we would need to solve the following infeasible system of linear equations:
\begin{align*}
    u_0(z^1) &=   u_{0, d}(\rho_0^c, 0) + u_{0, 1}(\rho_0^c, \rho_1^c) +  u_{0, 2}(\rho_0^c, \rho_2^c) = -\beta  
    \\
    u_0(z^2)  &= u_{0, d}(\rho_0^c, 0)+  u_{0, 1}(\rho_0^c, \rho_1^c)+    u_{0, 2}(\rho_0^c, \rho_1^f)= -\beta 
    \\
    u_0(z^3)  &=  u_{0, d}(\rho_0^c, 0)+ u_{0, 1}(\rho_0^c, \rho_1^f) +  u_{0, 2}(\rho_0^c, \rho_2^c)  = -\beta 
    \\
    u_0(z^4) &= u_{0, d}(\rho_0^c, 0) + u_{0, 1}(\rho_0^c, \rho_1^f) + u_{0, 2}(\rho_0^c, \rho_2^f) = \beta 
\end{align*}

Thus, Bad Card is not a CSP game, although it is a $\beta$-CSP game. However, if we prune out dominated actions (namely, those in which a player folds after receiving a good card), the resulting game is indeed a $0$-CSP game.

Let $\mathcal{M}(\mathcal{A})$ be the set of mediated equilibria than an algorithm $\mathcal{A}$ converges to in self-play. For example, if $\mathcal{A}$ is a no-external-regret algorithm, $\mathcal{M}(\mathcal{A})$ is the set of CCE without strictly dominated strategies in their support. Let $S(\mathcal{A}) \doteq \{s^\mu \mid \ (\mu, (\Phi_i)_{i\in N})\in \mathcal{M}(\mathcal{A})\}$ be the set of marginal strategy profiles of $\mathcal{M}(\mathcal{A})$, and
let $S_i(\mathcal{A}) \doteq  \{s_i \mid s \in S(\mathcal{A})\}$
be the set of $i$'s marginal strategies from $S(\mathcal{A})$.

Now, consider if each player $i$ learns with their own self-play algorithm $\mathcal{A}_i$. Let $\mathcal{A}_N \doteq  (\mathcal{A}_1, ... \mathcal{A}_n)$ be the profile of learning algorithms, then let $S^\times(\mathcal{A}_N) \doteq  \bigtimes_{i \in N} S_i(\mathcal{A}_i) $ and $S^\times_{-i}(\mathcal{A}_N) \doteq  \bigtimes_{j \in -i}S_j(\mathcal{A}_j) $. Summarizing, if each player learns with a no-$\Phi_i$-regret learning algorithm $\mathcal{A}_i$, they will converge to the set of $\mathcal{M}(\mathcal{A}_i)$ equilibria. The set of marginal strategies from this set of equilibria is $S_i(\mathcal{A}_i)$ and the set of marginal strategy profiles is $S(\mathcal{A}_i)$. If each player plays a (potentially) different learning algorithm, $S^\times(\mathcal{A}_N)$ is the set of possible joint match-ups if each player plays a marginal strategy from their own algorithm's set of equilibria and $S^\times_{-i}(\mathcal{A}_N)$ is the set of profiles for $-i$.  

\begin{definition} \label{def:csp_neighborhood}
    We say a game $G$ is \emph{$\delta$-CSP in the neighborhood of $S' \subseteq S$} if there exists a CSP game  $\check G$ such  that $\forall s \in S'$ we have $\left | u_i(s) - \check u_i(s) \right| \leq \delta$. We denote the set of such CSP games as $\csp_{\delta}(G, S')$.
\end{definition}

\begin{definition}
    We say a polymatrix game $G$ is \emph{$\gamma$-subgame stable in the neighborhood of $S'$} if $\forall s \in S', \forall (i, j) \in E$  we have that $(s_{i}, s_j)$ is a $\gamma$-Nash of $G_{ij}$.
\end{definition}

These definitions allow us to prove the following generalization of Theorem~\ref{thm:main2}. 

\begin{theorem} \label{thm:neighborhood}
    For any $i \in N$, if $G$ is $\delta$-CSP in the neighborhood of $S^\times(\mathcal{A}_N)$ and  $\exists \check G \in \csp_{\delta}(G, S^\times(\mathcal{A}_N))$ that is $\gamma$-subgame stable in the the neighborhood of $S(\mathcal{A}_i)$  , then for any $s \in S(\mathcal{A}_i)$
   \begin{align*}
            \Vul_i \left(s, S_{-i}^\times(\mathcal{A}_N) \right) \leq  
            |E_i| \gamma + 2\delta \leq  (n-1) \gamma + 2\delta. 
    \end{align*}  
\end{theorem}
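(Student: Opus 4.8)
The plan is to follow the proof of Theorem~\ref{thm:main2} closely, but to track carefully which profiles are used to evaluate utilities, so that $\delta$-closeness is only ever invoked on $S^\times(\mathcal{A}_N)$ and $\gamma$-subgame stability only on $S(\mathcal{A}_i)$. Fix $i\in N$ and $s\in S(\mathcal{A}_i)$, so $s=s^\mu$ for some $(\mu,(\Phi_j)_{j\in N})\in\mathcal{M}(\mathcal{A}_i)$ and in particular $\mu$ is a CCE of $G$. Let $\check G\in\csp_\delta(G,S^\times(\mathcal{A}_N))$ be the witnessing CSP game; as in the discussion after Proposition~\ref{prop:cce_to_nash_CSP}, we take $\check G$ to be constant-sum within each subgame $\check G_{ij}$, so that a $\gamma$-Nash strategy of a subgame is a $\gamma$-maxmin strategy there.

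First I would transfer the vulnerability expression into $\check G$. Write $\Vul_i(s,S^\times_{-i}(\mathcal{A}_N))=u_i(s)-\min_{s'_{-i}\in S^\times_{-i}(\mathcal{A}_N)}u_i(s_i,s'_{-i})$. The training profile $s$, together with every profile $(s_i,s'_{-i})$ for $s'_{-i}\in S^\times_{-i}(\mathcal{A}_N)$, lies in the set $S^\times(\mathcal{A}_N)$ on which the hypothesis guarantees $|u_i-\check u_i|\le\delta$ (for the latter, because $s_i\in S_i(\mathcal{A}_i)$). Replacing $u_i$ by $\check u_i$ therefore costs at most $\delta$ on the ``training utility'' $u_i(s)$ and at most $\delta$ on the deviation utilities, so
\begin{align*}
    \Vul_i\bigl(s,S^\times_{-i}(\mathcal{A}_N)\bigr)\;\le\;\check u_i(s)\;-\;\min_{s'_{-i}\in S^\times_{-i}(\mathcal{A}_N)}\check u_i(s_i,s'_{-i})\;+\;2\delta .
\end{align*}

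Next I would lower-bound that minimum inside $\check G$. Enlarging the feasible set from $S^\times_{-i}(\mathcal{A}_N)$ to all of $S_{-i}$ only decreases the minimum, and this is harmless now that we are working with $\check u_i$, which is defined on all of $S$. Since $\check G$ is polymatrix, the objective separates as $\check u_i(s_i,s'_{-i})=\sum_{(i,j)\in E_i}\check u_{ij}(s_i,s'_j)$, so the minimizing opponents can act edge by edge: $\min_{s'_{-i}\in S_{-i}}\check u_i(s_i,s'_{-i})=\sum_{(i,j)\in E_i}\min_{s'_j\in S_j}\check u_{ij}(s_i,s'_j)$. For a fixed edge $(i,j)\in E_i$, since $s\in S(\mathcal{A}_i)$ and $\check G$ is $\gamma$-subgame stable in the neighborhood of $S(\mathcal{A}_i)$, the pair $(s_i,s_j)$ is a $\gamma$-Nash of $\check G_{ij}$; because $\check G_{ij}$ is two-player constant-sum, $s_i$ is then a $\gamma$-maxmin strategy there, i.e.\ $\min_{s'_j\in S_j}\check u_{ij}(s_i,s'_j)\ge\check u_{ij}(s_i,s_j)-\gamma$. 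Summing over $E_i$ and using $\sum_{(i,j)\in E_i}\check u_{ij}(s_i,s_j)=\check u_i(s)$ gives $\min_{s'_{-i}\in S^\times_{-i}(\mathcal{A}_N)}\check u_i(s_i,s'_{-i})\ge\check u_i(s)-|E_i|\gamma$. Plugging this into the display, the $\check u_i(s)$ terms cancel and $\Vul_i(s,S^\times_{-i}(\mathcal{A}_N))\le|E_i|\gamma+2\delta$; the final inequality follows since $i$ shares at most one edge with each of the other $n-1$ players, so $|E_i|\le n-1$.

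The step I expect to take the most care is the bookkeeping in the second paragraph: the hypothesis only supplies $u\approx\check u$ on $S^\times(\mathcal{A}_N)$ and subgame stability on $S(\mathcal{A}_i)$, so one must verify that the only profiles at which $u_i$ is compared with $\check u_i$ are the training profile $s$ and the profiles $(s_i,s'_{-i})$ with $s'_{-i}\in S^\times_{-i}(\mathcal{A}_N)$---all of which sit in $S^\times(\mathcal{A}_N)$---and that the opponent set is relaxed to $S_{-i}$ only after we have already passed to $\check G$. The remaining ingredient---that a $\gamma$-Nash of a two-player constant-sum game is a $\gamma$-maxmin strategy---is the same elementary fact used in the proof of Theorem~\ref{theorem:epsilon_subgame stable_equilibrium_local} and should be routine.
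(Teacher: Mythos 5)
Your proposal is correct and follows essentially the same route as the paper's proof: pay $2\delta$ to move from $u_i$ to $\check u_i$ on profiles in $S^\times(\mathcal{A}_N)$, split the minimization edge-by-edge using the polymatrix structure, and bound the per-subgame loss by $\gamma$ via the constant-sum maxmin property of a $\gamma$-Nash pair. The only (harmless) cosmetic difference is that you relax the opponents' set to all of $S_{-i}$ before applying the subgame bound, whereas the paper keeps the restricted sets $S_j(\mathcal{A}_j)$; both are covered by the same two-player constant-sum argument.
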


An implication of Theorem~\ref{thm:neighborhood}  is that if agents use self-play to compute a marginal strategy from some mediated equilibrium and there is an SS-CSP game that is close to the original game for these strategies, then this is sufficient to bound vulnerability against strategies learned in self-play. 

\section{Computing an SS-CSP Decomposition in a Neighborhood}

How might one determine if a game is well-approximated by an SS-CSP game? In addition to the algorithms presented in Appendix~\ref{sec: nf algs}, we give an algorithm, \emph{SGDecompose}, that finds an SS-CSP decomposition for a game in a given neighborhood of strategy profiles. Pseudocode is given in Algorithm~\ref{alg:sgd}. SGDecompose could be used to test whether a game is well-approximated by an SS-CSP game before potentially analytically showing this property holds. We will use this algorithm in the following section to decompose Leduc poker. 

As input, SGDecompose receives a neighborhood of strategies $S'$ and the set of match-ups between strategies in  $S'$, given by $S^\times \doteq \bigtimes_{i \in N} S'_i$. The idea is to compute a CSP game $\check G$ that minimizes a loss function with two components: how close $\check G$ is to $G$ in the neighborhood of $S^\times$ and how subgame stable $\check G$ is for the neighborhood of $S'$.  First, $ \mathcal{L}^\delta$ is the error between the utility functions of $G$ and $\check{G}$ ($u$ and $\check u$, respectively); it is a proxy for $\delta$ in $\delta$-CSP. The loss for a single strategy profile $s$ is 
\begin{align*}
    \mathcal{L}^\delta\left(s; \check u, u \right) &\doteq   \sum_{i \in N} \left | \check u_i(s) - u_i(s) \right|.
\end{align*}

The other component of the overall loss function, $\mathcal{L}^\gamma$, measures the subgame stability. First, we define  $\mathcal{L}^\gamma_{ij}$, which only applies to a single subgame. Let $s_{ij}  = (s_i, s_j)$ be  a profile for a subgame  and  $s^*_{ij} = (s_i^*, s_j^*)$ is a profile of deviations for that subgame. The $\mathcal{L}^\gamma_{ij}$ loss for this subgame is
\begin{align*}
    \mathcal{L}^\gamma_{ij}(s_{ij}, s_{ij}^* ;  \check u) \doteq & \max \left (\check u_{ij}(s_i^*, s_j ) - \check u_{ij}(s_{ij}), 0 \right)
    + \max \left (\check u_{ji}(s_i, s_j^*) - \check u_{ji}(s_{ij}), 0 \right).
\end{align*}
Then, given a strategy profile $s$  and deviation profile $s^*$ for \emph{all} players $N$, we have
\begin{align*}
      \mathcal{L}^\gamma \left(s, s^* ; \check u\right) \doteq  \sum_{(i,j) \in E}  \mathcal{L}^\gamma_{ij}(s_{ij}, s_{ij}^* ;  \check u).
\end{align*}

SGDecompose repeats over a number of epoches $T$. At the start of an epoch, we compute a best-response (for example, via sequence-form linear programming in extensive-form games) to each strategy $s'_i$ in $S'$ in each subgame; the full process is shown in Algorithm~\ref{alg:get_brs} in the appendix. After computing these best-responses for the current utility function of $\check G$, SGDecompose fits $\check u$ to be nearly CSP in the neighborhood of $S^\times$ and subgame stable in the neighborhood of $S'$. Since $S^\times$ is exponentially larger than $S'$, we partition it into batches, then use batch gradient descent.\footnote{One could also partition $S'$ into batches if it were too large.}

We use the following batch loss function, which computes the average values of $\mathcal{L}^\delta$ and $\mathcal{L}^\gamma$ over the batch then weights the losses with $\lambda$. Let $S^b$ denote a batch of strategy profiles from $S^\times$ with size $B$, 
\begin{align*}
    \mathcal{L}(S^b, S', S^*; \check u, u) \doteq  \frac{\lambda}{B} \sum_{s \in S^b}  \mathcal{L}^\delta(s;  \check u, u) 
    + \frac{(1-\lambda)}{|S'|} \sum_{s \in S'}   \sum_{s^* \in S^*} \mathcal{L}^\gamma(s, s^* ;  \check u).
\end{align*}
We use this loss function to update $\check u$, which is guaranteed to a be a valid utility function for a CSP game via its representation, see Appendix~\ref{sec:efg: alg} for details. In Appendix~\ref{sec:efg: alg}, we give the procedure in terms of a more efficient representation of polymatrix decompositions for extensive-form games, which we call \emph{poly-EFGs}; which we describe in Appendix~\ref{sec:poly EFG}.

\begin{algorithm}[ht]
    \caption{SGDecompose}
    \begin{algorithmic}
    \STATE {\bfseries Input:} $G$, $S' $, hyperparameters $\eta$,  $T$, $\lambda$, $B$
    \STATE Initialize $ \check u$ to all $0$
    \STATE $S^\times \gets \bigtimes_{i \in N} \hat{S}_i $
    \FOR{$t \in 1...T$}
        \STATE $S^* \gets $ getBRs($\check G$, $S'$)
        \STATE $\mathcal{B} \gets$ partition of $S^\times$ into batches of size $B$
        \FOR{$S^{b} \in \mathcal{B}$}
            \STATE $ g \gets \nabla_{\check u} \mathcal{L}(S^b, S', S^*; \check u, u) $
            \STATE $ \check u \gets \check u - \eta \cdot \frac{g}{\norm{g}}_2$   \COMMENT{update $\check u$ using normalized gradient; this helps with stability}
        \ENDFOR
    \ENDFOR
    \\
    \COMMENT{Lastly, output $\delta$ and $\gamma$}
    \\
    \STATE $\delta \gets \max_{s \in S^\times } \left | u_i(s) -  \check u_{i}(s) \right| $
    \STATE $\gamma \gets \max_{s \in S' } \max_{i \ne j \in N \times N}  \left( \check u_{ij}(BR_{ij}(s_j), s_j) -  \check u_{ij}(s_i, s_j)  \right) $
    \RETURN $\check u, \gamma, \delta$
    \end{algorithmic}
    \label{alg:sgd}
\end{algorithm}

\section{Experiments} \label{Experiments}
Approaches using regret-minimization in self-play have been shown to outperform expert human players in some multiplayer games, the most notable example being multiplayer no-limit Texas hold 'em \citep{Brown-2019}, despite no formal guarantees. 
\begin{conjecture}
    Self-play with regret minimization performs well in multiplayer Texas hold 'em because 
    ``good" players (whether professional players or strategies learned by self-play) play in a part of the games' strategy space that is close to an SS-CSP game (i.e. low values of $\gamma, \delta$).
\end{conjecture}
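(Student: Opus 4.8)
The plan is to treat this as an empirical claim to be substantiated by the decomposition machinery rather than by a closed-form argument, since a game as large as multiplayer no-limit Texas hold 'em is far beyond the reach of an analytic SS-CSP decomposition. Concretely, I would (i) fix a tractable abstraction of $k$-player hold 'em (card bucketing, discretized bet sizes), (ii) assemble a neighborhood $S'$ of strategy profiles that are plausibly ``good'' --- strategies from several independent runs of a no-external-regret self-play algorithm, together with any available blueprint/expert snapshots --- and (iii) run \emph{SGDecompose} (Algorithm~\ref{alg:sgd}) on the abstracted game with this $S'$ to obtain a CSP approximation $\check G$ and the certified parameters $\delta$ and $\gamma$. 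If $\delta$ and $\gamma$ come out small, Theorem~\ref{thm:neighborhood} immediately yields $\Vul_i(s, S^\times_{-i}(\mathcal{A}_N)) \leq (n-1)\gamma + 2\delta$ for every $s \in S'$, which is precisely the operational content of ``self-play performs well'': strategies learned by self-play are not exploitable by other independently self-taught agents.

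The key steps, in order: first, reduce to the abstraction, bounding the abstraction translation error (standard in the poker-abstraction literature) and folding it into $\delta$; second, justify the choice of $S'$ --- the conjecture speaks only about the region of the strategy space where good players live, so $S'$ must be shown to cover the empirical distributions of play of strong agents, which one checks by measuring their head-to-head performance and (approximate) exploitability; third, minimize the \emph{SGDecompose} loss $\mathcal{L}$ and report the resulting $(\delta,\gamma)$, using a held-out set of match-ups drawn from $S^\times$ to guard against the decomposition overfitting; fourth, invoke Theorem~\ref{thm:neighborhood} to convert $(\delta,\gamma)$ into the stated vulnerability bound, and compare this bound to directly measured cross-run exploitation values as a sanity check that the decomposition is capturing something real.

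The main obstacle is scale together with the very notion of ``good players.'' Texas hold 'em is too large for \emph{SGDecompose} without aggressive abstraction, and there is a genuine risk that the abstraction error $\delta$ swamps whatever CSP structure the true game possesses; keeping $\delta$ small forces a fine abstraction, while keeping the optimization feasible forces a coarse one, and navigating that trade-off is delicate. Equally delicate is pinning down ``good'' play: taking $S' = S$ makes the decomposition poor (as the Bad Card example shows a full-game CSP decomposition need not exist), while taking $S'$ too small makes the conclusion vacuous. The honest outcome is therefore a conditional statement --- the conjecture holds relative to an explicitly constructed, independently validated neighborhood of strong strategies --- and the residual gap, namely that \emph{every} genuinely strong strategy must lie in such a low-$(\gamma,\delta)$ neighborhood, is the part I do not see how to close rigorously, which is exactly why the statement is phrased as a conjecture rather than a theorem.
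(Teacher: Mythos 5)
Your plan matches the paper's own treatment of this conjecture: since no formal proof is possible, the paper likewise substantiates it empirically by generating a neighborhood of self-play (CFR+) strategies, running \emph{SGDecompose} to certify small $(\delta,\gamma)$, and invoking Theorem~\ref{thm:neighborhood} to bound vulnerability against independently self-taught agents---the only difference being that the paper substitutes Leduc poker as a tractable proxy rather than abstracting hold 'em directly, and adds a negative control (Tiny Hanabi) where the certified $(\delta,\gamma)$ are large. Your identification of the unclosable gap---that one cannot show \emph{every} strong strategy lies in such a low-$(\gamma,\delta)$ neighborhood---is exactly why the paper leaves this as a conjecture.
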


While multiplayer no-limit Texas hold 'em is too large to directly check the properties developed in this work, we use a smaller poker game, called Leduc poker \citep{Southey-2012}, to  suggest why regret-minimization ``works" in multiplayer Texas hold 'em. Leduc poker  was originally developed for two players but was extended to a 3-player variant by \citet{Abou-2010}; we use the 3-player variant here. 
The game has 8 cards, two rounds of betting, one private and one public card.

We first use a self-play algorithm to learn strategies, then use SGDecompose to see if this part of Leduc Poker is close to an SS-CSP game. We give a summary of results here, please refer to Appendix~\ref{sec:experiments appendix} for full details. We use CFR+ \citep{Tammelin-2014,Tammelin-2015} as a self-play algorithm to  compute a set of approximate marginal strategies.\footnote{We use the OpenSpiel implementation \citep{Lanctot-2019}.} CFR+ was chosen because of its improved efficiency over CFR. CFR+ is a deterministic algorithm, so we use different random initializations of CFR+'s initial strategy in order to generate a set of CCE. We will use 30 runs of CFR+ as input to SGDecompose; and have 30 runs of SGDecompose  (i.e. we trained CFR+ 900 times in total in self-play). This will give us a value of $\delta$ and $\gamma$ for each run. 

We found that Leduc poker was well-approximated by an SS-CSP game in the neighborhood of strategies learned by CFR+. In particular, across runs, Leduc poker was on average (with standard errors) $\delta=0.009\pm0.00046$-CSP and $\gamma=0.004\pm0.00016$-subgame stable in the neighborhood of CFR+-learned strategies. How well do these values bound vulnerability with respect to other $\cfr$-learned strategies? For each of the runs, we computed the vulnerability with respect to the strategies of that run, by evaluating each strategy against each other and taking the maximum vulnerability. We compare these values to the upper bounds implies by the values of $\delta$ and $\gamma$ for each run and  Theorem~\ref{thm:neighborhood}. We found the computed values of $\delta$ and $\gamma$  do a good job of upper bounding the vulnerability. Across the runs, the bounds are at minimum $1.89$ times the vulnerability, at maximum $3.05$ times the vulnerability and on average  $2.51$ times as large, with a standard error of $0.049$.   

We repeated these experiments with a toy hanabi game---where strategies learned in self-play are highly vulnerable---which we found to have much higher values of $\delta$ and $\gamma$; details are in Appendix~\ref{sec:hanabi}.

It was previously believed that CFR does not compute an $\epsilon$-Nash equilibrium on 3-player Leduc for any reasonable value of $\epsilon$.  \citet{Abou-2010} found CFR computed a $0.130$-Nash equilibrium. We found that CFR+ always computed an approximate Nash equilbrium with $\epsilon\leq0.013$. Appendix~\ref{sec:cfr_nash_leduc} shows that CFR also computes an approximate Nash equilbrium with $\epsilon \leq 0.017$. 

\section{Conclusion}
Self-play has been incredibly successful in producing strategies that perform well against new opponents in two-player constant-sum games. Despite a lack of theoretical guarantees, self-play seems to also produce good strategies in some multiplayer games \citep{Brown-2019}. We identify a structural property of multiplayer, general-sum game that allow us to establish guarantees on the performance of strategies learned via self-play against new opponents. We show that any game can be projected into the space of constant-sum polymatrix games, and if there exists a game with this set with high subgame stability (low $\gamma$), strategies learned through self-play have bounded loss of performance against new opponents.

We conjecture that Texas hold 'em is one such game. We investigate this claim on Leduc poker, and find that CFR+ plays strategies from a part of the strategy space in Leduc poker that is well-approximated by a subgame stable constant-sum polymatrix game. This work lays the groundwork for guarantees for self-play in multiplayer games. However, there is room for algorithmic improvement and efficiency gains for checking these properties in very large extensive-form games. 

\newpage
\section*{Acknowledgements}
Computation for this work was provided by the Digital Research Alliance of Canada. Revan MacQueen was supported by Alberta Innovates and NSERC during completion of this work. This work was funded in part by an NSERC Discovery Grant.  James R. Wright holds a Canada CIFAR AI Chair through the Alberta Machine Intelligence Institute. Thank you to Dustin Morrill, Michael Bowling and Nathan Sturtevant for helpful conversations and feedback, and the anonymous reviewers for valuable comments.

\bibliographystyle{icml2023}
\bibliography{ref}

\appendix
\newpage

\section{Marginals of a CCE May Not Be a CCE} \label{sec: marginal cce not cce}
\begin{figure}[h]
    \centering
    \includegraphics{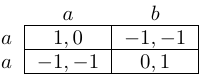}
    \caption{The marginal strategies of a CCE do not generally form a CCE themselves.}
    \label{fig:my_label}
\end{figure}
Here we give an example showing the marginal strategies of a CCE may not  form a CCE. Consider $\mu$ s.t. $\mu(a, a) = 0.5$ and $\mu(b, b) = 0.5$. $\mu$ is a CCE. $\mathbb{E}_{\rho \sim \mu} \left[ u_i(\rho) \right] = 0.5$ for each player. If row ($r$) were to player $a$ and column continues to play according to $\mu$, row's utility is $0$; if $r$ plays $b$ instead, their utility is now $-0.5$. Thus $r$ has no profitable deviations from the CCE recommendations. Column does not either, this can be shown with a symmetric argument. 

Row's marginal strategy $s^\mu_r$ plays $a$ with probability $0.5$ and $b$ with probability $0.5$, $s^\mu_c$ does likewise.  $u_r(s^\mu_r, s^\mu_c) =u_c(s^\mu_r, s^\mu_c) =  -0.25$. However, $a$ is a profitable deviation for $r$ now since $0 > -0.25$, thus the decorrelated strategies from the same CCE are also not a CCE.

\section{Hindsight Rationality With Respect to Action Deviations Does Not Imply Nash}\label{appendix:action deviations}
Here we show that hindsight rationality with respect to action deviations does not imply Nash equilibrium in 2 player constant-sum games. We show this with a 1 player game. Consider the agents strategy, shown in blue, which receives utility of 1. Deviating to $[I_1: b, I_2: a]$ will increase the player's utility to 2, so the blue strategy is not a Nash equilibrium. However, this would require two simultaneous action deviations, one at $I_1$ to $b$ and one at $I_2$ to $a$. Neither of these deviations increases the player's utility on their own, so the player is hindsight rational w.r.t. action deviations. 

\begin{figure}[h]
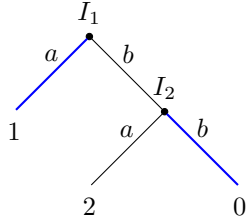

    \centering
    \begin{istgame}
        \xtdistance{10mm}{20mm}
        \istroot(0){$I_1$}
        \istb[draw=blue,thick]{a}[a]{1}
        \istb[]{b}[a]
        \endist 
        \istroot(1)(0-2){$I_2$}
        \istb[]{a}[a]{2}
        \istb[draw=blue,thick]{b}[a]{0}
        \endist 
    
    \end{istgame}
    \caption{Action deviations in a simple game.}
    \label{fig:action_devia}
\end{figure}

\section{Omitted Proofs}

\begin{proposition}[\cite{Cai-2016}]\label{prop:cai_1}
    In CSP games, for any CCE $\mu$, if $i$ deviates to $s_i$, then their expected utility if other players continue to play $\mu$ is equal to their utility if other player where to play the marginal strategy profile $s^\mu_{-i}$:
    \begin{align*}
        \mathbb{E}_{\rho \sim \mu}[u_i(s_i, \rho_{-i})] =  u_i(s_i, s^\mu_{-i}) \ \forall s_i \in S_i
    \end{align*}
\end{proposition}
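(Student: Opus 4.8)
The plan is to unfold the polymatrix decomposition of $u_i$ and exploit that $u_i(s_i,\cdot)$ is additively separable across $i$'s neighbors, so that taking an expectation over $\mu$ only ever sees \emph{single-player} marginals of $\mu$ — and the single-player marginal of $\mu$ on $j$ is by definition $s^\mu_j$. Notice at the outset that neither the CCE hypothesis nor constant-sumness is actually used: only the polymatrix (additively separable) structure matters, and I would remark on this.

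First, fix $i\in N$ and a deviation $s_i\in S_i$ (allowing $s_i$ mixed; the identity for pure $\rho_i$ plus linearity in $s_i$ would also suffice). Using the polymatrix decomposition extended multilinearly to mixed strategies, for every $\rho_{-i}\in\Rho_{-i}$ we have $u_i(s_i,\rho_{-i})=\sum_{(i,j)\in E} u_{ij}(s_i,\rho_j)$. Taking expectations and using linearity of expectation,
\begin{align*}
    \mathbb{E}_{\rho\sim\mu}\big[u_i(s_i,\rho_{-i})\big] \;=\; \sum_{(i,j)\in E} \mathbb{E}_{\rho\sim\mu}\big[u_{ij}(s_i,\rho_j)\big].
\end{align*}

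Second, the key observation: each summand $u_{ij}(s_i,\rho_j)$ depends on $\rho$ only through the coordinate $\rho_j$. Hence, writing out the expectation and summing $\mu$ over the remaining coordinates,
\begin{align*}
    \mathbb{E}_{\rho\sim\mu}\big[u_{ij}(s_i,\rho_j)\big] \;=\; \sum_{\rho_j\in\Rho_j}\Big(\textstyle\sum_{\rho_{-j}}\mu(\rho_j,\rho_{-j})\Big)\,u_{ij}(s_i,\rho_j) \;=\; \sum_{\rho_j\in\Rho_j} s^\mu_j(\rho_j)\,u_{ij}(s_i,\rho_j) \;=\; u_{ij}(s_i,s^\mu_j),
\end{align*}
where the middle equality is the definition of the marginal strategy $s^\mu_j$ and the last is multilinearity of $u_{ij}$. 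Summing back over edges and applying the polymatrix decomposition once more, now to the product profile $(s_i,s^\mu_{-i})$, gives $\sum_{(i,j)\in E} u_{ij}(s_i,s^\mu_j)=u_i(s_i,s^\mu_{-i})$, which is the claim.

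There is no genuinely hard step here; the only thing to be careful about is the two interchanges — pulling the edge-sum out of the expectation, and collapsing the expectation to a single-coordinate marginal — both of which are justified by (multi)linearity and Fubini-type rearrangement of finite sums. I would also explicitly flag that the identity is really a statement about polymatrix games, with ``CSP'' and ``CCE'' present only because that is the ambient setting in which \citet{Cai-2016} invoke it.
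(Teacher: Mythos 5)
Your proof is correct; the paper does not prove this proposition itself but imports it from \citet{Cai-2016}, and your argument is exactly the standard one: additive separability of $u_i$ across edges plus linearity of expectation reduces everything to single-coordinate marginals of $\mu$, which are by definition the $s^\mu_j$. Your side remark is also right — the identity uses only the polymatrix structure, with neither the CCE hypothesis nor constant-sumness playing any role — and flagging that is a useful clarification, since the paper's later proofs (e.g.\ of Proposition~\ref{prop:cce_to_nash_CSP}) invoke this fact for arbitrary $\epsilon$-CCE without comment.
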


\subsection{Proof of Proposition~\ref{prop:cce_to_nash_CSP}}
{
\renewcommand{\thetheorem}{\ref{prop:cce_to_nash_CSP}}
\begin{proposition}
    If $\mu$ is an $\epsilon$-CCE of a CSP game $G$, $s^\mu$ is an $n\epsilon$-Nash of $G$.
\end{proposition}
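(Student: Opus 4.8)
The plan is to track, for each player $i$, three payoffs against the fixed opponents: the payoff $V_i \doteq \mathbb{E}_{\rho\sim\mu}[u_i(\rho)]$ that $i$ receives under the correlated recommendation $\mu$, the payoff $u_i(s^\mu)$ that $i$ receives when everyone decorrelates to the marginal profile $s^\mu$, and $\max_{s_i'} u_i(s_i', s^\mu_{-i})$, the value of $i$'s best response to $s^\mu_{-i}$. First I would establish the decorrelation identity $\mathbb{E}_{\rho\sim\mu}[u_i(s_i,\rho_{-i})] = u_i(s_i, s^\mu_{-i})$ for every $s_i \in S_i$: writing $u_i(s_i,\rho_{-i}) = \sum_{(i,j)\in E_i} u_{ij}(s_i,\rho_j)$, each summand depends on $\rho_{-i}$ only through $\rho_j$, so its $\mu$-expectation depends on $\mu$ only through the marginal $s^\mu_j$, giving $\sum_{(i,j)\in E_i} u_{ij}(s_i, s^\mu_j) = u_i(s_i, s^\mu_{-i})$. (This is the content of Proposition~\ref{prop:cai_1}; its proof uses only the polymatrix structure, hence applies to the $\epsilon$-CCE $\mu$ as well.) I would also record that the $\epsilon$-CCE condition, stated for external deviations to \emph{pure} strategies, extends by convexity to $\mathbb{E}_{\rho\sim\mu}[u_i(s_i',\rho_{-i})] - V_i \le \epsilon$ for every \emph{mixed} $s_i'$.

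The key step uses the constant-sum property to bound the ``decorrelation gaps'' $g_i \doteq V_i - u_i(s^\mu)$. Applying the extended $\epsilon$-CCE inequality with the deviation target $s_i' = s^\mu_i$ and then the decorrelation identity ($\mathbb{E}_{\rho\sim\mu}[u_i(s^\mu_i,\rho_{-i})] = u_i(s^\mu)$) yields $u_i(s^\mu) - V_i \le \epsilon$, i.e.\ $g_i \ge -\epsilon$ for every $i$. On the other hand, the constant-sum property on pure profiles implies it on mixed profiles, so $\sum_{i\in N} V_i = c = \sum_{i\in N} u_i(s^\mu)$ and hence $\sum_{i\in N} g_i = 0$. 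Combining, for each fixed player $k$ we get $g_k = -\sum_{i\ne k} g_i \le (n-1)\epsilon$.

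Finally I would assemble the Nash bound: for any $s_k' \in S_k$, $u_k(s_k', s^\mu_{-k}) - u_k(s^\mu) = \big(\mathbb{E}_{\rho\sim\mu}[u_k(s_k',\rho_{-k})] - V_k\big) + g_k \le \epsilon + (n-1)\epsilon = n\epsilon$, using the decorrelation identity on the first term, the extended $\epsilon$-CCE bound on the resulting bracket, and the previous paragraph on $g_k$. Since $s^\mu$ is a product distribution and this holds for all pure (indeed all mixed) deviations, it is exactly the definition of an $n\epsilon$-Nash equilibrium. I expect the only genuine difficulty — everything else being linearity of expectation — to be the constant-sum accounting that upgrades the one-sided bound $g_i \ge -\epsilon$ into $g_k \le (n-1)\epsilon$; the crucial observation there is that one should feed the $\epsilon$-CCE condition the specific ``lazy'' deviation $s^\mu_i$, which is what makes the gaps $g_i$ controllable.
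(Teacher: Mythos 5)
Your proof is correct and takes essentially the same route as the paper's: both arguments rest on the decorrelation identity of Proposition~\ref{prop:cai_1} and on summing the per-player $\epsilon$-CCE inequalities so that the constant-sum property cancels the correlated welfare $\sum_i V_i$ against the marginal welfare $\sum_i u_i(s^\mu)$. You merely repackage the final accounting --- the paper bounds the sum of the (nonnegative) deviation gains by $n\epsilon$ and concludes each individual gain is at most $n\epsilon$, whereas you split each gain into a CCE term ($\le \epsilon$) plus a decorrelation gap $g_k$ bounded by $(n-1)\epsilon$ via $\sum_i g_i = 0$ and $g_i \ge -\epsilon$ --- and the two bookkeepings are equivalent.
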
 
}
\begin{proof}
    Since $\mu$ is an $\epsilon$-CCE,  
    $\forall i \in N$, we have 
    \begin{align*}
        \max_{\rho_i' \in \Rho_i} \mathbb{E}_{\rho \sim \mu}\left[u_i(\rho_i', \rho_{-i} ) \right] - \mathbb{E}_{\rho \sim \mu}\left[u_i(\rho) \right] \leq \epsilon
    \end{align*}
    which implies (by Proposition~\ref{prop:cai_1}) that $\forall i \in N$,
    \begin{align*}
       & \max_{\rho_i' \in \Rho_i} u_i(\rho_i', s^\mu_{-i}) - \mathbb{E}_{\rho \sim \mu}\left[u_i(\rho) \right] \leq \epsilon
       \\
       \implies  &   \max_{\rho_i' \in \Rho_i} u_i(\rho_i', s^\mu_{-i})  \leq \epsilon + \mathbb{E}_{\rho \sim \mu}\left[u_i(\rho) \right].
    \end{align*}
    Summing over $N$, we get
    \begin{align}
         \sum_{i \in N} \max_{\rho_i' \in \Rho_i} u_i(\rho_i', s^\mu_{-i})  & \leq \sum_{i \in N}\left( \epsilon + \mathbb{E}_{\rho \sim \mu}\left[u_i(\rho) \right] \right)
         \\
         &= \sum_{i \in N} \epsilon + \sum_{i \in N}\mathbb{E}_{\rho \sim \mu}\left[u_i(\rho) \right] 
         \\
         &= \sum_{i \in N} \epsilon + \mathbb{E}_{\rho \sim \mu}\left[\sum_{i \in N}u_i(\rho) \right] 
         \\
         &= n\epsilon + c \label{eq:32}
         \\
         & = n\epsilon  + \sum_{i \in N} u_i(s^\mu)\label{eq:33}.
    \end{align}
    Where \eqref{eq:32} and \eqref{eq:33} use the fact that $\forall \rho \in \Rho,  \sum_{i \in N} u_i(\rho) = c$ for some constant. The above inequalities give us
    \begin{align*}
        \sum_{i \in N} \max_{\rho_i' \in \Rho_i} u_i(\rho_i', s^\mu_{-i})  \leq n\epsilon  + \sum_{i \in N} u_i(s^\mu) .
    \end{align*}
    Rearranging, we get
    \begin{align*}
      \sum_{i \in N} \underbrace{\max_{\rho_i' \in \Rho_i} u_i(\rho_i', s^\mu_{-i}) - u_i(s^\mu)}_{\geq 0}  \leq  n\epsilon  .
    \end{align*}
    All terms in the sum are are non-negative because $\rho_i'$ is a best-response to $s^\mu_{-i}$.  Then any particular term in the summation is upper bounded by $n\epsilon$.
\end{proof}   

\subsection{Proof of Theorem~\ref{theorem:epsilon_subgame stable_equilibrium_local}}

\begin{proposition}\label{prop: 2pzs_vuln}
    In two-player constant-sum games, for any $\epsilon$-Nash equilibrium $s$ and player $i$, we have
    \begin{align*}
        u_i(s) - \min_{s_{-i}' \in S_{-i}} u_i(s_i, s'_{-i}) \leq \epsilon.
    \end{align*}
\end{proposition}
The proof of Proposition~\ref{prop: 2pzs_vuln} is immediate from the fact that no player can gain more than $\epsilon$ utility by deviating and the game is constant-sum. 

{
\renewcommand{\thetheorem}{\ref{theorem:epsilon_subgame stable_equilibrium_local}}
\begin{theorem}
    Let $G$ be a CSP game. If $G$ is $(0, \gamma)$-subgame stable, then for any player $i \in N$ and  CCE $\mu$ of $G$, we have $\textup{Vul}_i\left(s^\mu, S_{-i}\right) \leq |E_i| \gamma$.
\end{theorem}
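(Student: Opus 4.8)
The plan is to exploit the polymatrix structure to reduce the global vulnerability of $s^\mu$ to a sum of per-subgame vulnerabilities, each of which is controlled by the two-player constant-sum bound of Proposition~\ref{prop: 2pzs_vuln}.

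First I would note that since $G$ is CSP and $\mu$ is a CCE (equivalently a $0$-CCE), Proposition~\ref{prop:cce_to_nash_CSP} applied with $\epsilon = 0$ gives that $s^\mu$ is a $0$-Nash equilibrium of $G$. Because $G$ is $(0,\gamma)$-subgame stable, $s^\mu$ is therefore $\gamma$-subgame stable; that is, for every edge $(i,j)\in E$ the pair $(s_i^\mu, s_j^\mu)$ is a $\gamma$-Nash equilibrium of the subgame $G_{ij}$.

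Next I would expand the vulnerability using its definition together with the polymatrix decomposition of the global utility: $u_i(s^\mu) = \sum_{(i,j)\in E_i} u_{ij}(s_i^\mu, s_j^\mu)$, and for any $s_{-i}\in S_{-i}$, $u_i(s_i^\mu, s_{-i}) = \sum_{(i,j)\in E_i} u_{ij}(s_i^\mu, s_j)$. The crucial step is that each summand $u_{ij}(s_i^\mu, s_j)$ depends on $s_{-i}$ only through the single coordinate $s_j$, so the minimization over $S_{-i}$ distributes over the sum: $\min_{s_{-i}\in S_{-i}} u_i(s_i^\mu, s_{-i}) = \sum_{(i,j)\in E_i} \min_{s_j\in S_j} u_{ij}(s_i^\mu, s_j)$. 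I expect this to be the only real subtlety — formally it is just the observation that $S_{-i}$ is a product of the individual $S_j$, so a minimizer can be assembled coordinate by coordinate and the opponents need not coordinate to realize it.

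Combining these, $\textup{Vul}_i(s^\mu, S_{-i}) = \sum_{(i,j)\in E_i}\bigl( u_{ij}(s_i^\mu, s_j^\mu) - \min_{s_j\in S_j} u_{ij}(s_i^\mu, s_j)\bigr)$. Since we may take $G$ to be constant-sum in each subgame and $(s_i^\mu, s_j^\mu)$ is a $\gamma$-Nash of the two-player constant-sum game $G_{ij}$, Proposition~\ref{prop: 2pzs_vuln} bounds each bracketed term by $\gamma$; summing over the $|E_i|$ edges incident to $i$ yields $\textup{Vul}_i(s^\mu, S_{-i}) \le |E_i|\gamma$. The main obstacle is essentially the bookkeeping in the decomposition step and checking that Proposition~\ref{prop: 2pzs_vuln} is legitimately applicable subgame by subgame; the remaining pieces are direct invocations of results already in hand.
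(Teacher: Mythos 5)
Your proposal is correct and follows essentially the same route as the paper's proof: establish that $s^\mu$ is a Nash equilibrium of the CSP game (hence $\gamma$-subgame stable by $(0,\gamma)$-subgame stability), decompose the vulnerability edge-by-edge using the fact that the minimization over $S_{-i}$ factors across subgames, and bound each term by $\gamma$ via Proposition~\ref{prop: 2pzs_vuln}. The only cosmetic difference is that you derive the Nash property from Proposition~\ref{prop:cce_to_nash_CSP} with $\epsilon=0$, whereas the paper cites the result of Cai et al.\ directly; these are interchangeable.
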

}
\begin{proof}
    First we show 1.  Any marginal strategy $s^\mu$ of a CCE $\mu$ is a Nash equilibrium of $G$  \cite{Cai-2016}. Then,
    \begin{align*}
        \textup{Vul}_i\left(s^\mu, S_{-i}\right) &\doteq  u_i(s^\mu) - \min_{s_{-i} \in  S_{-i}}u_i(s^\mu_i, s_{-i})
        \\
        &=  \sum_{(i, j) \in E_i} u_{ij} (s_i^\mu, s_j^\mu) -  \min_{s_{-i} \in S_{-i}} \left( \sum_{(i, j) \in E_i} u_{ij}(s^\mu_i, s_j) \right)
        \\
        &=  \sum_{(i, j) \in E_i} u_{ij} (s_i^\mu, s_j^\mu) - \sum_{(i, j) \in E_i}   \min_{s_{j} \in S_{j}} u_{ij}(s^\mu_i, s_j).
    \end{align*}
    Where the last line uses the fact that $-i$ minimize $i's$ utility, so can do so without coordinating since  $G$ is polymatrix. Continuing,
    \begin{align*}
        &=  \sum_{(i, j) \in E_i} \left( u_{ij} (s_i^\mu, s_j^\mu) -   \min_{s_{j} \in S_{j}} u_i(s^\mu_i, s_j) \right)
        \\
        & \leq \sum_{(i, j) \in E_i} \gamma
        \\
        & \leq |E_i|\gamma,
    \end{align*}
    where by $(0, \gamma)$-subgame stability of each $G_{ij}$, $(s^\mu_i, s^\mu_i)$ is a $\gamma$-Nash of $G_{ij}$. By Proposition~\ref{prop: 2pzs_vuln}, we have $u_{ij} (s_i^\mu, s_j^\mu) -   \min_{s_{j} \in S_{j}} u_i(s^\mu_i, s_j)  \leq \gamma$.
\end{proof}

\subsection{Proof of Proposition~\ref{prop:properties_of_delta_csp}}

{
\renewcommand{\thetheorem}{\ref{prop:properties_of_delta_csp}} 
\begin{proposition}
    In a $\delta$-CSP game $G$ the following hold
    \begin{enumerate}
        \item Any CCE of $G$ is a $2\delta$-CCE of any $\check G \in \csp_\delta(G)$.
        \item  The marginalized strategy profile of any CCE of $G$ is a $2n\delta$-Nash equilibrium of any $\check G \in \csp_\delta(G)$.
        \item The marginalized strategy profile of any CCE is a $2(n+1)\delta$-Nash equilibrium of $G$
    \end{enumerate}
\end{proposition}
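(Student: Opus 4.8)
The plan is to establish the three parts in sequence, each building on the previous one together with the single defining inequality of $\delta$-CSP: there is a CSP game $\check G \in \csp_\delta(G)$ with $|u_i(\rho) - \check u_i(\rho)| \le \delta$ for every player $i$ and every $\rho \in \Rho$. The one preliminary observation I would record is that this pointwise bound lifts to arbitrary distributions over pure profiles: for any $\nu \in \Delta(\Rho)$ and any $i$, linearity of expectation and the triangle inequality give $|\mathbb{E}_{\rho \sim \nu}[u_i(\rho)] - \mathbb{E}_{\rho \sim \nu}[\check u_i(\rho)]| \le \delta$. In particular this applies to mixed strategy profiles and to ``hybrid'' profiles of the form $(\rho_i', \nu_{-i})$, which are the only objects we evaluate below.

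For part (1): let $\mu$ be a CCE of $G$, so $\mathbb{E}_{\rho \sim \mu}[u_i(\rho_i', \rho_{-i}) - u_i(\rho)] \le 0$ for every $i$ and $\rho_i' \in \Rho_i$. Applying the lifted bound twice — once to the deviation term and once to the baseline term — gives $\mathbb{E}_{\rho \sim \mu}[\check u_i(\rho_i', \rho_{-i}) - \check u_i(\rho)] \le \mathbb{E}_{\rho \sim \mu}[u_i(\rho_i', \rho_{-i}) - u_i(\rho)] + 2\delta \le 2\delta$, which is exactly the statement that $\mu$ is a $2\delta$-CCE of $\check G$. For part (2): since $\check G$ is CSP and, by part (1), $\mu$ is a $2\delta$-CCE of it, Proposition~\ref{prop:cce_to_nash_CSP} applied inside $\check G$ immediately yields that $s^\mu$ is an $n(2\delta) = 2n\delta$-Nash of $\check G$ (note $s^\mu$ depends only on $\mu$, not on which game it is read in, so there is no ambiguity in this application).

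For part (3): from part (2), for every $i$ and every pure deviation $\rho_i'$ we have $\check u_i(\rho_i', s^\mu_{-i}) - \check u_i(s^\mu) \le 2n\delta$. Applying the lifted pointwise bound once more, to both the deviation value $u_i(\rho_i', s^\mu_{-i})$ and the equilibrium value $u_i(s^\mu)$, gives $u_i(\rho_i', s^\mu_{-i}) - u_i(s^\mu) \le \check u_i(\rho_i', s^\mu_{-i}) - \check u_i(s^\mu) + 2\delta \le 2n\delta + 2\delta = 2(n+1)\delta$. Since a best response can always be taken to be a pure strategy, this is precisely the assertion that $s^\mu$ is a $2(n+1)\delta$-Nash of $G$.

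I do not expect a genuine obstacle: the whole argument is a matter of carefully tracking the $\delta$-slack, which is incurred twice at each step (once on the deviation side, once on the reference side), and invoking Proposition~\ref{prop:cce_to_nash_CSP} as a black box for the CCE-to-Nash blow-up. The only point requiring a line of care is the routine lifting of the pointwise utility gap from pure profiles to mixed and partially-mixed profiles, which I would state once up front and then reuse.
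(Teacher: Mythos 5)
Your proposal is correct and follows essentially the same route as the paper's proof: establish part (1) by absorbing the pointwise utility gap twice into the CCE inequality, obtain part (2) by invoking Proposition~\ref{prop:cce_to_nash_CSP} with $\epsilon = 2\delta$, and derive part (3) by one more application of the $\pm\delta$ substitution to the $2n\delta$-Nash inequality in $\check G$. The only difference is that you state explicitly the routine lifting of the pointwise bound from pure profiles to distributions, which the paper leaves implicit.
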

}
\begin{proof}
    First we prove claim 1. Let $\check u_i$ denote the utility function of $i$ in $\check G$. Note that $\forall \rho \in \Rho$ we have $|\check u_i(\rho) - u_i(\rho)| \leq \delta \ \forall i \in N$. Let $\mu$ be any CCE of $G$. The definition of CCE states
    \begin{align*}
         \mathbb{E}_{\rho \sim \mu} \left[ u_i(\rho_i', \rho_{-i}) - u_i(\rho) \right] \leq 0 \quad \forall i \in N, \rho_i' \in  \Rho_i.
    \end{align*}
    It is sufficient to consider only player $i$.  We can preserve the inequality by substituting $ \check u_i(\rho_i', \rho_{-i}) - \delta$ in place of  $u_i(\rho_i', \rho_{-i})$ and $\check u_i(\rho)  + \delta$ in place of $u_i(\rho)$. This gives us
    \begin{align*}
        & \mathbb{E}_{\rho \sim \mu} \left[ \check u_i(\rho_i', \rho_{-i}) - \delta  - (\check u_i(\rho)  + \delta) \right] \leq 0 \quad \forall \rho_i' \in  \Rho_i
        \\
        \implies &   \mathbb{E}_{\rho \sim \mu} \left[ \check u_i(\rho_i', \rho_{-i})  - \check u_i(\rho)   \right] \leq 2\delta \quad \forall  \rho_i' \in  \Rho_i.
    \end{align*}
    Thus claim 1 is shown. Claim 2 is an immediate corollary of claim 1 and Proposition~\ref{prop:cce_to_nash_CSP}. Lastly, we show claim 3. By claim 2, we have the marginalized strategy of $\mu$, $s^\mu$, is a $2n\delta$-Nash equilibrium of $\check G \in \csp_\delta(G)$. That is for any $i \in N$,
    \begin{align*}
        \check u_i(\rho_i', s_{-i}^\mu) -  \check u_i(s_i^\mu, s_{-i}^\mu) \leq 2n\delta \quad \forall \rho_i' \in \Rho_i.
    \end{align*} 
    However, since $G$ is $\delta$-CSP, we may substitute $  u_i(\rho_i', s^\mu_{-i}) - \delta$ in place of  $\check u_i(\rho_i', s^\mu_{-i}) $ and $u_i(s_i^\mu, s_{-i}^\mu)  + \delta$ in place of $\check u_i(s_i^\mu, s_{-i}^\mu)$ as preserve the inequality. 
     \begin{align*}
        \left( u_i(\rho_i', s_{-i}^\mu) - \delta \right) -    \left( u_i(s_i^\mu, s_{-i}^\mu) + \delta) \right) \leq 2n\delta  \quad \forall  \rho_i' \in \Rho_i.
    \end{align*} 
    Rearranging, this gives us
    \begin{align*}
        u_i(\rho_i', s_{-i}^\mu)   - u_i(s_i^\mu, s_{-i}^\mu)  \leq 2n\delta + 2\delta = 2(n+1)\delta  \quad \forall  \rho_i' \in \Rho_i.
    \end{align*} 
\end{proof}

\subsection{Proof of Theorem~\ref{thm:main2}}
{
\renewcommand{\thetheorem}{\ref{thm:main2}}
\begin{theorem}
    If $G$ is $\delta$-CSP and $\exists \check G \in \csp_\delta(G)$ that is $(2n\delta, \gamma)$-subgame stable and $\mu$ is a CCE of $G$, then
    \begin{align*}
        \textup{Vul}_i\left(s^\mu,  S_{-i}\right) & \leq |E_i| \gamma +  2 \delta \leq (n-1)\gamma +  2 \delta,
    \end{align*}   
\end{theorem}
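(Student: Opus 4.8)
The plan is to push everything through the constant-sum polymatrix game $\check G \in \csp_\delta(G)$ that witnesses the hypotheses, apply the argument behind Theorem~\ref{theorem:epsilon_subgame stable_equilibrium_local} inside $\check G$, and then pay a small additive cost to transfer the bound back to $G$. Throughout I use the reduction (noted in Section~\ref{sec:subgame stable}) that $\check G$ may be taken constant-sum in each subgame, so every $\check G_{ij}$ is a two-player constant-sum game.

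First I would certify subgame stability of $s^\mu$ in $\check G$. Since $\mu$ is a CCE of the $\delta$-CSP game $G$, Proposition~\ref{prop:properties_of_delta_csp}(2) gives that $s^\mu$ is a $2n\delta$-Nash equilibrium of $\check G$. By hypothesis $\check G$ is $(2n\delta,\gamma)$-subgame stable, so $s^\mu$ is $\gamma$-subgame stable in $\check G$; that is, $(s_i^\mu,s_j^\mu)$ is a $\gamma$-Nash equilibrium of $\check G_{ij}$ for every $(i,j)\in E$. Next I would bound the vulnerability of $s^\mu$ inside $\check G$ exactly as in the proof of Theorem~\ref{theorem:epsilon_subgame stable_equilibrium_local}: because $\check G$ is polymatrix, $\check u_i(s_i^\mu,s_{-i})=\sum_{(i,j)\in E_i}\check u_{ij}(s_i^\mu,s_j)$ is separable over the opponents, so $\min_{s_{-i}\in S_{-i}}\check u_i(s_i^\mu,s_{-i})=\sum_{(i,j)\in E_i}\min_{s_j\in S_j}\check u_{ij}(s_i^\mu,s_j)$; combining this with Proposition~\ref{prop: 2pzs_vuln} applied to each $\gamma$-Nash $(s_i^\mu,s_j^\mu)$ of the constant-sum subgame $\check G_{ij}$ yields $\check u_i(s^\mu)-\min_{s_{-i}\in S_{-i}}\check u_i(s_i^\mu,s_{-i})\le |E_i|\gamma$.

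Finally I would transfer the bound to $G$. Since $|u_i(\rho)-\check u_i(\rho)|\le\delta$ for every pure profile $\rho$ (Definition~\ref{def:delta_CSP}), linearity of expectation gives $|u_i(s)-\check u_i(s)|\le\delta$ for every mixed profile $s$. Hence $u_i(s^\mu)\le\check u_i(s^\mu)+\delta$, and for every $s_{-i}\in S_{-i}$ we have $u_i(s_i^\mu,s_{-i})\ge\check u_i(s_i^\mu,s_{-i})-\delta\ge\min_{s'_{-i}\in S_{-i}}\check u_i(s_i^\mu,s'_{-i})-\delta$, so $\min_{s_{-i}\in S_{-i}}u_i(s_i^\mu,s_{-i})\ge\min_{s_{-i}\in S_{-i}}\check u_i(s_i^\mu,s_{-i})-\delta$. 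Subtracting, $\textup{Vul}_i(s^\mu,S_{-i})\le\big(\check u_i(s^\mu)-\min_{s_{-i}\in S_{-i}}\check u_i(s_i^\mu,s_{-i})\big)+2\delta\le|E_i|\gamma+2\delta$, and $|E_i|\le n-1$ gives the second inequality.

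I do not expect a genuine obstacle: the substantive work is already packaged in Proposition~\ref{prop:properties_of_delta_csp} and Theorem~\ref{theorem:epsilon_subgame stable_equilibrium_local}. The only point needing care is the transfer step — the $\delta$-closeness is stated for pure profiles and must be lifted to mixed strategies and, crucially, applied to the $\min$ in such a way that the worst-case opponent in $G$ and in $\check G$ need not coincide, which is exactly why the additive loss is $2\delta$ rather than $\delta$.
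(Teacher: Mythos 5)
Your proof is correct and follows essentially the same route as the paper's: invoke Proposition~\ref{prop:properties_of_delta_csp}(2) to get that $s^\mu$ is a $2n\delta$-Nash of $\check G$, apply $(2n\delta,\gamma)$-subgame stability and the polymatrix separability of the opponents' minimization together with Proposition~\ref{prop: 2pzs_vuln} in each constant-sum subgame, and pay $2\delta$ to transfer between $u_i$ and $\check u_i$. The only difference is cosmetic ordering (you bound vulnerability in $\check G$ first and transfer last, the paper transfers first), and your explicit remark about lifting the pure-profile $\delta$-closeness to mixed profiles and about the two minimizers not coinciding is a correct articulation of a step the paper leaves implicit.
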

}

The proof is largely the same as Theorem~\ref{theorem:epsilon_subgame stable_equilibrium_local}, with added approximation since $G$ is no longer CSP. 

\begin{proof}
    Let $\check G$ be  a polymatrix game that is $(2n\delta, \gamma)$-subgame stable such that $\check G \in \text{CSP}_\delta(G)$. Let $\check u_i$ denote the utility function of $i$ in $\check G$. By Proposition~\ref{prop:properties_of_delta_csp}, $\mu$ is a $2n\delta$-Nash equilibrium of $\check G$.  Then,
    \begin{align*}
    \textup{Vul}_i\left(s^\mu,  S_{-i}\right) & \doteq u_i(s^\mu) - \min_{s_{-i}' \in S_{-i}}  u_i(s^\mu_i, s_{-i}')
        \\
        & \leq \check u_i(s^\mu) - \min_{s_{-i}' \in S_{-i}}  \check u_i(s^\mu_i, s_{-i}')  + 2 \delta,
    \end{align*}
    since $G$ is $\delta$-CSP. Then expanding $\check u_i$ across $i$'s subgames we have
    \begin{align*}
        &   \sum_{(i, j) \in E_i} \check u_{ij} (s_i^\mu, s_j^\mu) -   \min_{s_{-i}' \in S_{-i}} \sum_{(i, j) \in E_i}   \check u_i(s^\mu_i, s_j)  + 2 \delta
        \\
        =&    \sum_{(i, j) \in E_i} \check u_{ij} (s_i^\mu, s_j^\mu) - \sum_{(i, j) \in E_i}   \min_{s_{j}' \in S_{j}}\check  u_i(s^\mu_i, s_j')  + 2 \delta.
    \end{align*}
    Where, as in Theorem~\ref{theorem:epsilon_subgame stable_equilibrium_local}, the last line uses the fact that $\check G$ is polymatrix, $G_{ij}$ is constant-sum and $-i$ minimize $i's$ utility and can do so by without coordinating. Continuing, we have
    \begin{align*}
       & \sum_{(i, j) \in E_i} \check u_{ij} (s_i^\mu, s_j^\mu) - \sum_{(i, j) \in E_i}   \min_{s_{j}' \in S_{j}}\check u_i(s^\mu_i, s_j')  + 2 \delta
        \\
         &= \sum_{(i, j) \in E_i}\left( \check u_{ij} (s_i^\mu, s_j^\mu) - \min_{s_{j}' \in S_{j}} \check u_i(s^\mu_i, s_j') \right) + 2 \delta
        \\
         &\leq \sum_{(i, j) \in E_i} \gamma  + 2 \delta
        \\
         & =|E_i| \gamma  + 2 \delta
        \\
          & \leq (n-1) \gamma  + 2 \delta.
    \end{align*}
    Where by $(2n\delta, \gamma)$-subgame stability of each $G_{ij}$, $(s^\mu_i, s^\mu_i)$ is a $\gamma$-Nash of $G_{ij}$. By Proposition~\ref{prop: 2pzs_vuln}, $s^\mu_i$
    can lose at most $\gamma$ to a worst case opponent $s_j'$ in each subgame, since $\check G_{ij}$  is two-player constant-sum. 
\end{proof}

\subsection{Proof of Theorem~\ref{thm:neighborhood}} \label{subsec: proof neighborhood}
{
\renewcommand{\thetheorem}{\ref{thm:neighborhood}}
\begin{theorem}
    If $G$ is $\delta$-CSP in the neighborhood of $S^\times(\mathcal{A}_N)$ and  $\exists \check G \in \csp_{\delta}(G, S^\times(\mathcal{A}_N))$ that is 
    $\gamma$-subgame stable in the the neighborhood of $S(\mathcal{A}_i)$, then for any $s \in S(\mathcal{A}_i)$
    \begin{align*}
            \Vul_i \left(s, S_{-i}^\times(\mathcal{A}_N) \right) \leq  
            |E_i| \gamma + 2\delta \leq  (n-1) \gamma + 2\delta.
    \end{align*}  
\end{theorem}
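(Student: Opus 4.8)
The plan is to mirror the proof of Theorem~\ref{thm:main2}, but carefully track that all the approximation facts we need are now "local" — they only hold for strategy profiles lying in the relevant neighborhoods — and to check that every profile we actually evaluate is indeed in the right neighborhood. Fix $i \in N$ and $s \in S(\mathcal{A}_i)$, so $s = s^\mu$ for some mediated equilibrium $(\mu,(\Phi_j)_{j\in N}) \in \mathcal{M}(\mathcal{A}_i)$. Since $\mathcal{A}_i$ is a no-external-regret algorithm, $\mu$ is (at least) a CCE of $G$; more to the point, for the analysis we only need the external-regret inequalities $\mathbb{E}_{\rho\sim\mu}[u_i(\rho_i',\rho_{-i})] - \mathbb{E}_{\rho\sim\mu}[u_i(\rho)] \le 0$ for all $\rho_i' \in \Rho_i$. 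The starting point is to bound $\textup{Vul}_i(s, S^\times_{-i}(\mathcal{A}_N))$ by passing from $u_i$ to $\check u_i$ of the CSP witness $\check G \in \csp_\delta(G, S^\times(\mathcal{A}_N))$. Because $\delta$-CSP holds only in the neighborhood of $S^\times(\mathcal{A}_N)$, I must first observe that every profile arising as $(s_i, s'_{-i})$ with $s'_{-i} \in S^\times_{-i}(\mathcal{A}_N)$ lies in $S^\times(\mathcal{A}_N)$ — this is exactly the product structure $S^\times(\mathcal{A}_N) = \bigtimes_{j\in N} S_j(\mathcal{A}_j)$, since $s_i \in S_i(\mathcal{A}_i)$ — so $|u_i(s_i,s'_{-i}) - \check u_i(s_i,s'_{-i})| \le \delta$ applies to $s = (s_i, s_{-i})$ itself and to all the deviation profiles we will minimize over. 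Hence
\begin{align*}
    \Vul_i\!\left(s, S^\times_{-i}(\mathcal{A}_N)\right)
    &= u_i(s) - \min_{s'_{-i} \in S^\times_{-i}(\mathcal{A}_N)} u_i(s_i, s'_{-i})
    \\
    &\le \check u_i(s) - \min_{s'_{-i} \in S^\times_{-i}(\mathcal{A}_N)} \check u_i(s_i, s'_{-i}) + 2\delta.
\end{align*}

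Next I expand $\check u_i$ over the subgames incident to $i$, using that $\check G$ is polymatrix: $\check u_i(s_i, s'_{-i}) = \sum_{(i,j)\in E_i} \check u_{ij}(s_i, s'_j)$, and since the summands depend on disjoint coordinates $s'_j$, the minimization over the product set $S^\times_{-i}(\mathcal{A}_N)$ distributes across the sum, giving $\min_{s'_{-i}} \check u_i(s_i,s'_{-i}) = \sum_{(i,j)\in E_i} \min_{s'_j \in S_j(\mathcal{A}_j)} \check u_{ij}(s_i, s'_j)$. Therefore the bound becomes $\sum_{(i,j)\in E_i}\bigl(\check u_{ij}(s_i,s_j) - \min_{s'_j \in S_j(\mathcal{A}_j)} \check u_{ij}(s_i,s'_j)\bigr) + 2\delta$. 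For each edge I want to bound the bracketed term by $\gamma$. Here the $\gamma$-subgame-stability in the neighborhood of $S(\mathcal{A}_i)$ is what I invoke: since $s \in S(\mathcal{A}_i)$, the pair $(s_i, s_j)$ is a $\gamma$-Nash of $\check G_{ij}$ (note this is $s_j$, player $i$'s own learned partner-coordinate, not an arbitrary opponent), and $\check G_{ij}$ is two-player constant-sum, so Proposition~\ref{prop: 2pzs_vuln} gives $\check u_{ij}(s_i,s_j) - \min_{s'_j} \check u_{ij}(s_i,s'_j) \le \gamma$ — and crucially the minimum there is over all of $S_j$, hence in particular over the subset $S_j(\mathcal{A}_j)$. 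Summing over the $|E_i|$ edges and using $|E_i| \le n-1$ yields $\Vul_i(s, S^\times_{-i}(\mathcal{A}_N)) \le |E_i|\gamma + 2\delta \le (n-1)\gamma + 2\delta$.

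The main obstacle — and the only place this proof genuinely differs from Theorem~\ref{thm:main2} rather than copying it — is bookkeeping about which neighborhood each fact lives in, and making sure the two neighborhoods ($S^\times(\mathcal{A}_N)$ for $\delta$-CSP, $S(\mathcal{A}_i)$ for subgame stability) are each large enough for the profile being evaluated at each step. Two subtleties deserve care. First, for the step replacing $u_i$ by $\check u_i$ at the minimizing profile, I rely on $\delta$-closeness at $(s_i, s'_{-i})$ for every $s'_{-i} \in S^\times_{-i}(\mathcal{A}_N)$, which needs precisely the product form of $S^\times(\mathcal{A}_N)$ together with $s_i \in S_i(\mathcal{A}_i)$; this is why the theorem's hypothesis is stated with $S^\times(\mathcal{A}_N)$ and not merely $S(\mathcal{A}_i)$. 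Second, the subgame-stability hypothesis is stated for $S(\mathcal{A}_i)$, which contains the full profile $s = s^\mu$, so $(s_i, s_j)$ really is covered — but I should flag that I do not need $\mu$ to be a global $\epsilon$-Nash equilibrium of $\check G$ here (unlike Theorem~\ref{thm:main2}, where subgame stability was invoked through the $(2n\delta,\gamma)$ form), because the neighborhood version of subgame stability applies directly to every $s\in S(\mathcal{A}_i)$ by fiat. Everything else — the polymatrix decomposition of $\check u_i$, the distributivity of the min over a product set, and Proposition~\ref{prop: 2pzs_vuln} — is identical to the earlier proofs and requires no new ideas.
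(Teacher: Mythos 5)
Your proposal is correct and follows essentially the same route as the paper's proof: pass from $u_i$ to $\check u_i$ at a cost of $2\delta$ using the $\delta$-CSP property on the neighborhood, distribute the minimum over the product set $S^\times_{-i}(\mathcal{A}_N)$ across the polymatrix subgames, and bound each per-edge term by $\gamma$ via subgame stability of $(s_i,s_j)$ together with the constant-sum vulnerability bound (Proposition~\ref{prop: 2pzs_vuln}). Your extra bookkeeping about which neighborhood covers which evaluated profile, and the remark that the minimum over $S_j(\mathcal{A}_j)\subseteq S_j$ is dominated by the minimum over all of $S_j$, are sound refinements of the same argument.
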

}
\begin{proof}
    The proof is very similar to Theorem~\ref{thm:main2}. Writing the definition of vulnerability we have
    \begin{align}
        \Vul_i \left(s, S(\mathcal{A}) \right) \doteq  u_i(s) - \min_{s_{-i}' \in  S_{-i}^\times(\mathcal{A}_N)} u_i(s, s_{-i}'), \label{eq: 3.13}
    \end{align}
    since $G$ is $\delta$-CSP in the neighborhood of $S^\times(\mathcal{A}_N)$. Swapping out the utility of $u_i$ for $\check u$, we have
    \begin{align*}
         \eqref{eq: 3.13} \leq \check u_i(s)- \min_{s_{-i}' \in  S_{-i}^\times(\mathcal{A}_N) } \check u_i(s_i, s_{-i}') + 2\delta
    \end{align*}  
    Since $\check G$ is a polymatrix game,
    \begin{align}
         &\check u_i(s)- \min_{s_{-i}' \in  S_{-i}^\times(\mathcal{A}_N) } \check u_i(s_i, s_{-i}') + 2\delta
         \\
        =& \sum_{(i, j) \in E_i} \check u_{ij}(s_i, s_j)  - \min_{s_{-i}' \in  S_{-i}^\times(\mathcal{A}_N) }  \sum_{(i, j) \in E_i}  \check u_{ij}(s_i, s_{j}) + 2\delta
        \\
        =& \left( \sum_{(i, j) \in E_i} \check u_{ij}(s_i, s_j) - \min_{s_{j}' \in  S_{j}(\mathcal{A}_j) }  \check u_{ij}(s_i, s_{j}') \right) +  2\delta. \label{eq: 3.16}
    \end{align}    
    Where, as in Theorem~\ref{theorem:epsilon_subgame stable_equilibrium_local} and Theorem~\ref{thm:main2}, the last line uses the fact that $\check G$ is polymatrix, $G_{ij}$ is constant-sum and $-i$ minimize $i's$ utility and can do so by without coordinating. 
    
    Since $\check G$ is  $\gamma$-subgame stable in  the neighborhood of $S(\mathcal{A}_i)$  and $s \in S(\mathcal{A}_i)$, then means $(s_i, s_j)$ is a $\gamma$-Nash for each subgame $\check G_{ij}$, so has bounded vulnerability within that subgame. 
    \begin{align*}
        \eqref{eq: 3.16} \leq & \left( \sum_{(i, j) \in E_i} \gamma \right) +  2\delta
        \\
        \leq &   |E_i| \gamma+ 2\delta
        \\
        \leq & (n-1)\gamma+ 2\delta
    \end{align*}
\end{proof}

\section{Normal-Form Algorithms} \label{sec: nf algs}

\subsection{Computing Subgame Stability}  \label{subsec:compute subgame stable}

Let $\munderbar\gamma$ be the minimum $\gamma$ such that a CSP game $G$ is $(0, \gamma)$-subgame stable. How do we compute $\munderbar\gamma$? Does it involve computing all equilibria of $G$ and checking their subgame stability? The answer is no, it can be done in polynomial time in the number of pure strategies. We next provide an algorithm for computing $\munderbar\gamma$. The algorithm involves solving a linear program for each edge in the graph and each pure strategy of those players.  This linear program takes a pure strategy $\rho_i'$, and finds a Nash equilibrium  of $G$ that maximizes $i$'s incentive to deviate to $\rho_i'$ when \emph{only} considering their utility in $G_{ij}$; call this quantity $\gamma_{ij}^{\rho_i'}$. If there are no such Nash equilibria the solver returns ``infeasible''. If the solver does not return ``infeasible'', we update $\munderbar\gamma = \max_{(i, j) \in E_i} \max_{\rho_i' \in \Rho_i}\gamma_{ij}^{\rho_i'}$.

\begin{algorithm}
    \caption{Compute $\gamma$} 
    \begin{algorithmic}
    \STATE {\bfseries Input:} $G = (N, E, \Rho, u)$, a polymatrix game
    \STATE $\gamma \gets -\infty$
    \FOR{$ (i, j) \in E$}
         \FOR{$\rho_i' \in \Rho_i$}
            \IF{$\text{LP1}(i, j,  \rho_i')$ not infeasible}
                \STATE $\gamma_{ij}^{\rho_i'} \gets \text{LP1}(i, j,  \rho_i') $
                \STATE $\gamma \gets \max(\gamma, \gamma_{ij}^{\rho_i'} )  $
            \ENDIF
          \ENDFOR
        \FOR{$\rho_j' \in \Rho_j$}
            \IF{$\text{LP1}(j, i,  \rho_j')$ not infeasible}
                \STATE $\gamma_{ji}^{\rho_j'} \gets \text{LP1}(j, i,  \rho_j') $
                \STATE $\gamma \gets \max(\gamma,\gamma_{ji}^{\rho_j'} )  $
            \ENDIF
        \ENDFOR   
    \ENDFOR
    \RETURN $\gamma$
    \end{algorithmic}
    \label{alg:1}
\end{algorithm}
 Let $a_i(\rho_i', \mu)$ be the advantage of deviating to $\rho_i'$ from a joint distribution over pure strategies:
\begin{align*}
    a_i(\rho_i', \mu) \doteq  \sum_{(i,j) \in E_i} \underbrace{u_{ij}(\rho_i', s_j^\mu)}_{(a)} -  \underbrace{\mathbb{E}_{\rho \sim \mu} \left[ \sum_{(i,j) \in E_i} u_{i j}(\rho_i, \rho_j) \right]}_{(b)}
\end{align*}
Note that $(a)$ is a linear function of $\mu$, since $s_j^\mu$ is a marginal strategy. $(b)$ is also a linear function of $\mu$, and so $ a_i(\rho_i', \mu)$ is a linear function of $\mu$. Likewise, let
\begin{align*}
    a_{i j}(\rho_i', \mu) \doteq u_{ij}(\rho_i', s_j^\mu) - \mathbb{E}_{\rho \sim \mu} \left[ u_{i j}(\rho_i, \rho_j) \right].
\end{align*}
be the advantage of $\rho_i'$ in the subgame between $i$ and $j$. $\text{LP1}(i, j,  \rho_i')$ is given below. The decision variables are the weights of $\mu$ for each $\rho \in \Rho$ and $\gamma_{ij}^{\rho_i'}$.
\paragraph{LP 1}
    \begin{align*}
    \max \quad & \gamma_{ij}^{\rho_i'}
    \\
    \text{s.t.} \quad & a_i(\rho_i, \mu)\leq 0  \quad \forall i \in N, \rho_i \in \Rho_i 
    \\
    \quad & a_{ij}(\rho_i', \mu) \geq \gamma_{ij}^{\rho_i'}
    \\
    & \sum_{\rho \in \Rho} \mu(\rho) = 1 
    \\
     & \mu(\rho) \in [0, 1] \quad  \forall \rho \in \Rho
    \end{align*}

    We can get away with computing a CCE rather than targeting Nash equilibria because the marginals of any CCE are Nash equilibria in CSP games \cite{Cai-2016}. 

The whole procedure runs in polynomial time in the size of the game. We need to solve an LP, which takes polynomial time, at most $n^2\max_{i \in N} |\mathcal{\Rho}_i|$ times. 

\subsection{Finding Constant-Sum Polymatrix Decomposition}  \label{subsec:csp lp}

Projecting a game into the space of CSP games with minimum $\delta$ can be done with a linear program.  Let $\munderbar \delta$ be the minimum $\delta$ such that $G$ is $\delta$-CSP. We give a linear program that finds $\munderbar \delta$ and returns a CSP game $\check G \in \csp_{\munderbar \delta}(G)$. The decision variables are the values of  $\check u_{ij}(\rho)$ for all $i \ne j \in N, \rho \in \Rho$, $\munderbar \delta$ and constants for each subgame $c_{ij}$, for all $i \ne j $.

\paragraph{LP 2} \label{lp:delta}
\begin{align*}
    \min_{} \quad & \munderbar \delta
    \\
   \text{s.t.}\quad &  u_i(\rho) - \sum_{j \in -i} \check u_{ij}(\rho_i, \rho_j )  \leq \munderbar \delta  \quad \forall i \in N, \rho \in \Rho
    \\
    & u_i(\rho) - \sum_{j \in -i} \check u_{ij}(\rho_i, \rho_j )  \geq - \munderbar \delta\quad \forall i \in N, \rho \in \Rho
    \\
     &\check u_{ij}(\rho_i, \rho_j ) + \check u_{ji}(\rho_i, \rho_j ) = c_{ij}  \quad \forall i \ne j \in N, (\rho_i, \rho_j)  \in \Rho_{ij}, 
\end{align*}

\section{Extensive-Form Games} \label{sec:poly EFG}
Section~\ref{sec:subgame stable} developed notions of approximately CSP and subgame stable in the context of normal-form games. Here, we apply these concepts to extensive-form games. While any extensive-form game has an equivalent induced normal-form game, analysing properties of an EFG through its induced normal is intractable for moderately-sized EFGs, since the size the normal-form representation is exponentially larger. 

After background on extensive-form games, we introduce a novel ``extensive-form version'' of normal-form polymatrix games, which we call \emph{poly-EFGs}. The major benefit of poly-EFGs over normal-form polymatrix games is their efficiently: poly-EFGs are exponentially more compact than an equivalent normal-form polymatrix game. The results of this section extend the theory of Sections~\ref{sec:subgame stable} and \ref{sec:nf self-taught} using this more efficient representation. We also give a proof-of-concept showing that poly-EFGs can be used to efficiently decompose extensive-form games by giving an algorithm for decomposing a perfect information EFG into a poly-EFG.

\subsection{Background on Extensive-Form Games}
We use the imperfect information extensive-form game (EFG) as a model for sequential multi-agent strategic situations. An imperfect information extensive-form game is a 10-tuple $(N, \mathcal{A}, H, Z, A, P, u, \mathcal{I}, c, \pi_c)$ where $N$ is a set of players; $\mathcal{A}$ is a set of actions;  $H$ is a set of sequences of actions, called \emph{histories}; $Z \subseteq H$ is a set of terminal histories; $A : H \to \mathcal{A}$  is a function that maps a history to available actions; $P: H \to N$ is the player function, which assigns a player to choose an action at each non-terminal history; u = $\{u_i\}_{i \in N}$ is a set of utility functions where $u_i : Z \to \mathbb{R}$ is the utility function for player $i$; $\mathcal{I} = \{\mathcal{I}_i\}_{i \in N}$ where $\mathcal{I}_i$ is a partition of the set $\{h\in H : P(h) = i \}$ such that if $h, h' \in I \in \mathcal{I}_i$ then $A(h) = A(h')$. We call an element $I \in \mathcal{I}_i $ an information set.  The chance player $c$ has a function $\pi_c(a, h) \  \forall h : P(h) = c$ which returns the probability of random  nature events $a \in \mathcal{A}$. Let $N_c = N \cup \{c\}$ be the set of players including chance. 

For some history $h$, the $j$th action in $h$ is written $h_j$. A sub-history of $h$ from the $j$th to $k$th actions is denoted $h_{j:k}$ and we use $h_{:k}$ as a short-hand for $h_{0:k}$. If a history $h'$ is a prefix of history $h$, we write $h' \sqsubseteq h$ and  if $h'$ is a proper prefix of $h$, we write $h' \sqsubset h$.

A \emph{pure strategy} in an EFG is a deterministic choice of actions for the player at every decision point. We use $\rho_i : \mathcal{I}_i \to \mathcal{A}$ to denote a pure strategy of player $i$, and the set of all pure strategies as $\Rho_i$. Likewise, $s_i \in \Delta(\Rho_i) = S_i$ is a \emph{mixed strategy}, where $\Delta(X)$ denotes the set of probability distributions over a domain $X$. 

There are an exponential number of pure strategies in the number of information sets. A \emph{behavior strategy} is a compact representation of the behavior of an agent that assigns a probability distribution over actions to each information set. We use $\pi_i \in \Pi_i = (\Delta(A(I)))_{I \in \mathcal{I}_i}$ to denote a behavior  strategy of player $i$ and $\pi_i(a, I)$ as the probability of playing action $a$ at $I$. Let $I(h)$ be the unique information set such that $h \in I$. We overload $\pi_i(a, h)  = \pi_i(a ,I(h))  $. We use $\rho \in \Rho, s \in S$ and $\pi \in \Pi$ to denote pure, mixed and behavior strategy profiles, respectively. Note that $\Rho$ is a subset of both $\Pi$ and $S$. 

Given a behavior strategy profile, let 
\begin{align*}
    p_i(h_1, h_2, \pi_i) &\doteq \prod_{h_1 \sqsubseteq ha \sqsubseteq h_2,  P(ha) = i }  \pi_{i}(a, h)
    \\
     p(h_1, h_2, \pi) &\doteq \prod_{i \in N_c}  p_i(h_1, h_2, \pi_i)
     \\
     p_{-i}(h_1,  h_2, \pi_{-i}) &\doteq \prod_{j \in N_c \setminus \{i\}}  p_j(h_1, h_2, \pi_j)
\end{align*}
be the probability of transitioning from history $h_1$ to $h_2$ according to $\pi_i$, $\pi$ and $\pi_{-i}$, respectively. Let  $p_i(z, \pi_i), p_{-i}(z, \pi_i)$ and $p(z, \pi_i)$ be short-hands for $p_i(\varnothing , z, \pi_i), p_{-i}(\varnothing , z, \pi_{-i})$ and $p(\varnothing, z, \pi)$, respectively, where $\varnothing$ is the empty history. 

We  define the utility of a behavior strategy as:
\begin{align*}
    u_i(\pi) \doteq  \mathbb{E}_{z \sim \pi} \left[ u_{i}(z)\right]   = \sum_{z \in Z}p(z, \pi) u_i(z)=  \sum_{z \in Z} \left( \prod_{i \in N_c} p_i(z, \pi_i) \right) u_{i}(z).
\end{align*}

 \emph{Perfect recall} is a common assumption made on the structure of information sets in EFGs that prevents players from forgetting information they once possessed. Formally, for any $h \in I$ let $X_i(h)$ denote the set of $(I, a)$ s.t. $I \in \mathcal{I}_i$ and $\exists h' \in I$ and $h' a \sqsubseteq h$. Let $X_{-i}(h)$ be defined analogously for $-i$ and $X(h)$ for all players.
\begin{definition}[Perfect recall]
     If $\forall I \in \mathcal{I}_i,  \forall h, h' \in I, X_i(h)=  X_i(h')$ then $i$ has perfect recall. If all players possess perfect recall in some EFG $G$, we call $G$ a game of perfect recall.
\end{definition}

In games of perfect recall, the set of behavior strategies and mixed strategies are equivalent: any behavior strategy can be converted into a mixed strategy which is outcome equivalent over the set of terminal histories (i.e. has the same distribution over $Z$) and vice-versa.
\begin{theorem}[\cite{Kuhn-1953}] \label{thm:kuhn}
    In games of perfect recall, any behavior strategy $\pi_i$ has an equivalent mixed strategy $s_i$ (and vice versa), such that 
    \begin{align*}
        p_i(z, \pi_i) &= \mathbb{E}_{\rho_i \sim s_i}\left[p_i(z, \rho_i) \right].
    \end{align*} 
\end{theorem}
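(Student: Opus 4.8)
The plan is to establish the two directions of the correspondence separately, in each case by exhibiting an explicit map between $\Pi_i$ and $S_i$ and then verifying the realization-equivalence identity $p_i(z,\pi_i) = \mathbb{E}_{\rho_i \sim s_i}[p_i(z,\rho_i)]$ directly from the definition of $p_i$. Perfect recall is used in both directions but in different guises: in the ``behavior $\to$ mixed'' direction it ensures that a player never revisits an information set along a single play, and in the ``mixed $\to$ behavior'' direction it ensures that the conditioning events used to define the behavior strategy are well defined.

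For the direction from behavior strategies to mixed strategies, given $\pi_i \in \Pi_i$ I would define the mixed strategy that chooses the action at each information set independently: $s_i(\rho_i) \doteq \prod_{I \in \mathcal{I}_i} \pi_i(\rho_i(I), I)$. Since $\rho_i$ is pure, $p_i(z,\rho_i) \in \{0,1\}$, and it equals $1$ exactly when, at every information set of $i$ encountered along the path to $z$, $\rho_i$ selects the action lying on that path; writing $X_i(z)$ for the set of such (information set, action) pairs along $z$, this means $p_i(z,\rho_i) = \prod_{(I,a) \in X_i(z)} \mathbf{1}\{\rho_i(I) = a\}$. Perfect recall guarantees that the information sets occurring in $X_i(z)$ are pairwise distinct --- revisiting an information set $I$ along a single history would force two nodes of $I$ to have different sets $X_i(\cdot)$, contradicting perfect recall --- so this is a product over distinct, independent coordinates of $\rho_i$, and taking expectations gives $\mathbb{E}_{\rho_i\sim s_i}[p_i(z,\rho_i)] = \prod_{(I,a)\in X_i(z)} \pi_i(a,I) = p_i(z,\pi_i)$, which is the claimed identity.

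For the converse direction, given $s_i \in S_i$ I would define a behavior strategy by conditioning on reachability. Call a pure strategy $\rho_i$ \emph{consistent with reaching $I$} if $\rho_i(I') = a$ for every $(I',a) \in X_i(h)$, where $h$ is any node in $I$; perfect recall is precisely what makes this independent of the choice of $h \in I$, hence well defined. If the $s_i$-mass of strategies consistent with reaching $I$ is positive, set $\pi_i(a,I)$ to be the $s_i$-probability that $\rho_i(I) = a$ conditioned on that event, and otherwise (such $I$ are unreachable and irrelevant) set $\pi_i(\cdot, I)$ arbitrarily. The identity $p_i(z,\pi_i) = \mathbb{E}_{\rho_i\sim s_i}[p_i(z,\rho_i)]$ is then proved by induction along the prefixes of $z$, showing that $p_i(\varnothing,h,\pi_i)$ equals the $s_i$-probability that $\rho_i$ is consistent with reaching $h$: the inductive step at a node where $i$ moves multiplies the left side by $\pi_i(a,I(h))$ and, by the definition of $\pi_i$ as a conditional probability together with perfect recall (which makes the reachability event for the child refine that of the parent), multiplies the right side by the same factor.

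I expect the converse direction to be the main obstacle. Getting the definition of $\pi_i$ right requires carefully invoking perfect recall to justify that the conditioning event depends only on the information set, handling zero-probability (unreachable) information sets, and --- most delicately --- choosing an induction hypothesis strong enough to survive the step in which player $i$ acts, which forces one to track the relationship between $p_i(\varnothing, h, \pi_i)$ and the reachability event of the information set at $h$ rather than just the value $p_i(\varnothing, h, \pi_i)$ in isolation. The behavior-to-mixed direction, once the factorization of $p_i(z,\rho_i)$ over distinct information sets is in hand, is essentially a one-line computation.
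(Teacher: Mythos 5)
The paper does not prove this statement; it is imported verbatim as a classical result of Kuhn (1953), so there is no in-paper argument to compare against. Your proof is the standard and correct one: the behavior-to-mixed direction via the product measure $s_i(\rho_i)=\prod_{I}\pi_i(\rho_i(I),I)$ together with the perfect-recall fact that no history visits an information set twice, and the mixed-to-behavior direction via conditioning on the reachability event $X_i(h)$ (well defined by perfect recall), with the zero-probability information sets correctly dismissed because the telescoping product is already zero by the time they are reached.
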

Theorem~\ref{thm:kuhn} establishes a connection between equilibria in behavior and mixed strategies: a Nash equilibrium  behavior strategy profile implies the equivalent mixed strategy profile is also a Nash equilibrium in mixed strategies and vice-versa. 

We may also reduce any extensive-form game into an equivalent normal-form game. 
\begin{definition}[Induced normal-form]
    The \emph{induced normal-form} of an extensive-form game $G$ (with utility functions $u_i$) is a normal-form game $G' = (N, \Rho, u')$ such that $u_i'(\rho) = u_i(\rho)$
\end{definition}

The induced normal-form of an EFG has players making all decisions up-front. It is not always practical to construct an induced normal-form, but the concept is useful for proving things about EFGs.

\subsection{Poly-EFGs}

What is the appropriate extension of polymatrix games to EFGs?  Given some $n$-player EFG $G$, what should the subgame between $i$ and $j$ be in the graph? Unlike in normal form games, players act sequentially, and $i$ and $j$ may either observe actions or have their utility impacted by other players. There is additional structure present in EFGs beyond the players' pure strategies.

The approach we take is to have an EFG for each pair of players $i$ and $j$. For simplicity, we assume that each subgame $G_{ij}'$ shares the same structure as some $n$-player game $G$, but information sets where $P(I) \notin \{i, j\}$ now belong to the chance player $c$.

\begin{definition}[Subgame] \label{def:induced_subgame}
    Let $G = (N, \mathcal{A}, H, Z, A, P, u, \mathcal{I}, c, \pi_c) $ be some EFG. We define a \emph{subgame}   $G_{ij}' =  (\{i, j\}, \mathcal{A}, H, Z, A, P', (u_{ij}', u_{ji}'), \mathcal{I}, c, \pi_c') $ as a structurally identical game to $G$ between $i$ and $j$  with player function $P'$ and utility functions $(u_{ij}', u_{ji}')$, then 
    \begin{align*}
        P'(h)  \doteq \begin{cases}
        P(h) \quad & \text{if} \ P(h) \in \{i, j\} \\
        c \quad & \text{o.w.}
    \end{cases}
    \end{align*}
    and let $\pi_c'$ be the strategy of the chance player in $G'_{ij}$. We put no restrictions on  $\pi_c'$.
\end{definition}

Note that $u'_{ij}, u'_{ji}$ are not necessarily defined in the above definition. They may take any values. We merely want the  subgame $G_{ij}'$ to share the structure of $G$. In $G_{ij}'$, $i$ and $j$'s utility only depends on their strategies $\pi_i, \pi_j$ and chance's actions $\pi_c'$:
\begin{align*}
    u'_{ij}(\pi_i, \pi_j) &\doteq \mathbb{E}_{z \sim (\pi_i, \pi_j, \pi_j)} \left[ u'_{ij}(z)\right] 
    \\
    &= \sum_{z \in Z} p_i(z, \pi_i) p_j(z, \pi_j) p_c(z, \pi_c') u_{ij}' (z).
\end{align*}

What should $\pi_c'$ be defined as? This turns out to not matter very much. Given any subgame $G_{ij}'$ with chance strategy $\pi_c'$ and utility functions $u_{ij}', u_{ji}'$, for any $G_{ij}''$ with chance strategy $\pi_c''$, we can find $u_{ij}'', u_{ji}''$ so that the utility of players between the two games will always be equal for any strategy profile $(\pi_{i}, \pi_{j})$. 

\begin{definition}
    We say $\pi_c$ is \emph{fully mixed} if $\pi_c'(a, h) > 0\ \forall h \in \{h \in H \mid P(h) = c\}, a \in A(h)$
\end{definition}

\begin{proposition}
     Let $G$ be an EFG and $G_{ij}'$ a subgame between $i$ and $j$ with utility functions $u'_{ij}, u'_{ji}$ and fully mixed chance strategy $\pi_c'$. Given  $G_{ij}''$, a subgame between $i$ and $j$ with fully mixed chance strategy $\pi_c''$, we may find $u''_{ij}, u''_{ji}$ such that $ \forall \pi_i, \pi_j$  we have $u_{ij}'(\pi_i, \pi_j) = u_{ij}''(\pi_i, \pi_j)$ and  $u_{ji}'(\pi_i, \pi_j) = u_{ji}''(\pi_i, \pi_j) $.
 \end{proposition}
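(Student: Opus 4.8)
The plan is to reweight the terminal payoffs of $G_{ij}''$ by the ratio of chance-reach probabilities, so that the per-terminal contributions to the two expected utilities coincide for \emph{every} strategy profile. Concretely, I would define, for each terminal $z \in Z$,
\[
u_{ij}''(z) \doteq \frac{p_c(z, \pi_c')}{p_c(z, \pi_c'')}\, u_{ij}'(z), \qquad u_{ji}''(z) \doteq \frac{p_c(z, \pi_c')}{p_c(z, \pi_c'')}\, u_{ji}'(z).
\]
First I would check this is well defined. In the subgame the player function $P'$ relabels every node $h$ with $P(h) \notin \{i,j\}$ as a chance node, so $p_c(z, \pi_c'') = \prod_{\varnothing \sqsubseteq ha \sqsubseteq z,\ P'(ha) = c} \pi_c''(a, h)$ is a product over all original chance nodes and all nodes of players other than $i$ and $j$ along the path to $z$. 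Since $\pi_c''$ is fully mixed, every such factor is strictly positive, hence $p_c(z, \pi_c'') > 0$ and the denominator never vanishes.

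Next I would substitute this choice into the definition of subgame utility. Using $u_{ij}''(\pi_i, \pi_j) = \sum_{z \in Z} p_i(z, \pi_i)\, p_j(z, \pi_j)\, p_c(z, \pi_c'')\, u_{ij}''(z)$ and cancelling $p_c(z, \pi_c'')$ against the denominator of $u_{ij}''(z)$ term by term, I get
\[
u_{ij}''(\pi_i, \pi_j) = \sum_{z \in Z} p_i(z, \pi_i)\, p_j(z, \pi_j)\, p_c(z, \pi_c')\, u_{ij}'(z) = u_{ij}'(\pi_i, \pi_j),
\]
for every pair of behavior strategies $(\pi_i, \pi_j)$, and the identical computation with $ij$ replaced by $ji$ yields $u_{ji}''(\pi_i, \pi_j) = u_{ji}'(\pi_i, \pi_j)$. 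This finishes the proof.

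I do not expect a genuine obstacle here; the only points requiring care are (i) that the reach-probability factorization within the subgame isolates $p_c(z,\cdot)$ as the \emph{only} factor differing between $G_{ij}'$ and $G_{ij}''$ — the factors $p_i(z,\pi_i)$ and $p_j(z,\pi_j)$ are structural and identical in both subgames since they share the structure of $G$ — and (ii) that full mixedness of $\pi_c''$ is exactly what licenses dividing by $p_c(z,\pi_c'')$. It is worth remarking that the hypothesis that $\pi_c'$ is fully mixed is not actually used for this direction.
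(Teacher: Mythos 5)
Your proof is correct and is essentially identical to the paper's: both define $u_{ij}''(z) = \frac{p_c(z,\pi_c')}{p_c(z,\pi_c'')}u_{ij}'(z)$ and cancel the chance-reach factors termwise in the expected-utility sum. Your additional remarks on well-definedness and on the full-mixedness of $\pi_c'$ being unnecessary are accurate but do not change the argument.
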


\begin{proof}
    Note that $p_c(z, \pi_c'), p_c(z, \pi_c'') \ne 0$. Then  $\forall z \in Z$,  define  $u_{ij}''(z) = \frac{p_c(z, \pi_c')}{p_c(z, \pi_c'')} u_{ij}'(z)$ and $u_{ji}''(z) = \frac{p_c(z, \pi_c')}{p_c(z, \pi_c'')} u_{ji}'(z)$. Then
     \begin{align*}
         u_{ij}''(\pi_i, \pi_j) &= \sum_{z \in Z} p_i(z, \pi_i)  p_j(z, \pi_j) p_c(z, \pi''_c) u_{ij}''(z)
         \\
         &=  \sum_{z \in Z} p_i(z, \pi_i)  p_j(z, \pi_j) p_c(z, \pi''_c) \frac{p_c(z, \pi_c')}{p_c(z, \pi_c'')} u_{ij}'(z)
        \\
        &=  \sum_{z \in Z} p_i(z, \pi_i)  p_j(z, \pi_j) p_c(z, \pi_c') u_{ij}'(z)
        \\
        &=  u_{ij}'(\pi_i, \pi_j).
     \end{align*}
\end{proof}

For the remainder of this work, when defining a subgame between $i$ and $j$ given an EFG $G$, we define the subgame chance player's strategy to be equal to $\pi_c$ in $G$ at information sets $I$ where $P(I) =c$ in $G$ and uniform randomly otherwise. 

Having defined subgames, we may now define our representation of extensive-form polymatrix games. 

\begin{definition}[Poly-EFG]
    A poly-EFG  $(N, E, \mathcal{G})$ is defined by a graph with nodes $N$, one for each player, a set of edges $E$ and a set of games $\mathcal{G} = \{G_{ij} \mid \forall (i,j) \in E\} $ where $G_{ij} \in \mathcal{G}$ is a two player EFG between $i$ and $j$ and all $G_{ij} \in \mathcal{G}$ are a subgame between $i$ and $j$ and all subgames are defined with respect to some EFG $G$. 
\end{definition}

Let $G_{ij} \in \mathcal{G}$ denote the subgame between $i$ and $j$. We use subscript $ij$ (e.g. $Z_{ij}$) to refer to parts of $G_{ij}$.

Since each subgame of the poly-EFG is the subgame of the same $G$, the space of pure, mixed and behavior strategies is the same for each subgame for any player. A player chooses a strategy (whether pure, mixed or behavior) and plays this strategy in each subgame. A player's global utility is the sum of their utility in each of their subgames. We have
\begin{align*}
    u_{i}(\pi) = \sum_{(i, j) \in E_i}  u_{ij}(\pi_i, \pi_j) ,
\end{align*}
where $E_i \subseteq E$ is the set of edges connected to $i$ in the graph and $u_{ij}$ is the utility function of $i$ in subgame $G_{ij}$.

\subsection{Constant-Sum and Subgame Stable Poly-EFGs}

Here, we give definitions of constant-sum and  subgame stability for poly-EFGs. These definitions are largely identical to their normal-form counterparts, we merely provide them here for completeness. 

Poly-EFGs are constant-sum if, for each subgame, the utilities at the terminals add up to a constant. 
\begin{definition}[Constant-sum]
    We say a poly-EFG $G$ is constant-sum if  $\forall G_{ij} \in \mathcal{G}, z \in Z_{ij}$,  $u_{ij}(z) +  u_{ji}(z) = c_{ij}$ where $ Z_{ij}$ is the set of terminal histories of  $\forall G_{ij}$ and  $c_{ij}$ is a constant.
\end{definition}
We may also define approximate poly-EFGs in the same way as in normal-form games. 
\begin{definition}[$\delta$-constant sum poly-EFG] \label{csp_efg}
    An EFG $G$ is $\delta$-constant sum poly-EFG ($\delta$-CSP) if there exists a  constant-sum poly-EFG $\check G$ with global utility function $\check u$ such that $\forall i \in N, \pi \in \Pi$, $|u_i(\pi) - \check u_i(\pi)| \leq \delta$. We denote the set of such CSP games as $\csp_\delta(G)$. 
\end{definition}

Finally, we define subgame stability for poly-EFGs. Our definitions are near-identical from the normal-form definitions. 

\begin{definition}[Subgame stable profile]
    Let $G$ be a poly-EFG. We say a strategy profile $\pi$ is $\gamma$-subgame stable if $\forall (i, j) \in E$, we have $(\pi_i, \pi_j)$ is a $\gamma$-Nash of $G_{ij}$.
\end{definition}

\begin{definition}[Subgame stable game]
    Let $G$ be a poly-EFG. We say $G$ is $(\epsilon, \gamma)$-subgame stable if for \emph{any} $\epsilon$-Nash equilibrium $\pi$ of $G$,  $\pi$ is $\gamma$-subgame stable.
\end{definition}

\subsection{Theoretical Results For Poly-EFGs} \label{sec:poly_theory}

Our theoretical results from  Sections~\ref{sec:subgame stable} and \ref{sec:nf self-taught} continue to hold in poly-EFGs. The idea is to use the induced normal-form of a poly-EFG. We assume perfect recall. 

First, we will characterize what self-play will produce in EFGs. These are called \emph{marginal behavior strategies}. 

\begin{definition}[Marginal behavior strategy]
    Given some mediated equilibrium $(\mu, (\Phi_i)_{i = 1}^N )$, let $\pi_i^\mu$ be the \emph{marginal  behavior strategy} for $i$ where $ \pi_i^\mu(a, I)$ is defined arbitrarily if $\sum_{\rho_i' \in \Rho_i(I)} s_i^\mu(\rho_i') = 0$ and otherwise   
    \begin{align*}
        \pi_i^\mu(a, I) \doteq \frac{\sum_{\rho_i \in \Rho_i(a, I)} s_i^\mu(\rho_i)  }{\sum_{\rho_i' \in \Rho_i(I)} s_i^\mu(\rho_i') }  \quad \forall I \in \mathcal{I}_i, a \in A(I),
    \end{align*}
    where $s_i^\mu(\rho_i) \doteq \sum_{\rho_{-i} \in \Rho_{-i}} \mu(\rho_i, \rho_{-i})$.  
\end{definition}

\begin{definition}[Marginal behavior strategy profile]
    Given some mediated equilibrium $(\mu, (\Phi_i)_{i = 1}^N )$, let $\pi^\mu$ be a \emph{marginal  behavior strategy profile}, where $\pi_i^\mu$ is a marginal behavior strategy $\forall i \in N$.
\end{definition}

\begin{definition}[Induced normal-form polymatrix game]
    Given a poly-EFG $G = (N, E, \mathcal{G})$, the \emph{induced normal-form polymatrix game} is a polymatrix game $G' = (N, E, \Rho, u')$ such that $\Rho_i$ is equal to $i$'s set of pure strategies in each $G_{ij}$ and $u_{ij}'(\rho_i, \rho_j) = u_{ij}(\rho_i, \rho_j) =  \sum_{z \in Z} p_i(z, \rho_i) p_j(z, \rho_j) p_c(z, \pi_c') u_{ij} (z)$ where $u_{ij}$ is the utility function of $i$ in $G_{ij}$.
\end{definition}

In games of perfect recall,  every behavior strategy $\pi_i$ has an equivalent mixed strategy  $s_i^{\pi_i}$ (by Kuhn's Theorem), which means for any perfect recall EFG $G$,  we can use the poly-EFG representation  instead of turning $G$ into a normal-form game then using a normal-form polymatrix game to get the same vulnerability bounds on $G$. Given $\pi$, let $s^\pi$ be a profile of equivalent mixed strategies. 

From Kuhn's Theorem and an assumption that each $G_{ij} \in \mathcal{G}$ has perfect recall, we derive two immediate corollaries.

\begin{corollary}\label{corollary:efg subgame stable}
    If $\pi$ is $\gamma$ subgame stable for  a poly-EFG $\check G$ where each subgame has perfect recall , then $s^\pi$ is $\gamma$ subgame stable in the induced normal-form polymatrix game of $\check G$.
\end{corollary}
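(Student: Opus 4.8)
The plan is to fix an arbitrary edge $(i,j) \in E$ and, by symmetry, show only that $i$ cannot gain more than $\gamma$ by a pure deviation in the $(i,j)$-subgame $\check G'_{ij}$ of the induced normal-form polymatrix game; i.e.\ that $u'_{ij}(\rho_i, s^\pi_j) - u'_{ij}(s^\pi_i, s^\pi_j) \le \gamma$ for every $\rho_i \in \Rho_i$ (the normal-form $\gamma$-Nash condition only quantifies over pure deviations). The whole argument is to transport the $\gamma$-Nash inequality that $\pi$ already satisfies in the poly-EFG subgame $\check G_{ij}$ over to $s^\pi$ in $\check G'_{ij}$, using Kuhn's Theorem (Theorem~\ref{thm:kuhn}) to identify the two subgame utility functions.

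First I would establish the bridging identities: for behavior strategies $\pi_i, \pi_j$ with Kuhn-equivalent mixed strategies $s^\pi_i, s^\pi_j$ (these exist since each subgame has perfect recall), and for any pure $\rho_i \in \Rho_i$,
\begin{align*}
    u_{ij}(\pi_i, \pi_j) = u'_{ij}(s^\pi_i, s^\pi_j), \qquad u_{ij}(\rho_i, \pi_j) = u'_{ij}(\rho_i, s^\pi_j).
\end{align*}
I would prove these by expanding $u_{ij}(\pi_i, \pi_j) = \sum_{z \in Z} p_i(z, \pi_i)\, p_j(z, \pi_j)\, p_c(z, \pi'_c)\, u_{ij}(z)$, substituting $p_i(z,\pi_i) = \mathbb{E}_{\rho_i \sim s^\pi_i}[p_i(z,\rho_i)]$ and the analogous identity for $j$ from Theorem~\ref{thm:kuhn}, and then pulling the expectations out of the sum. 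Because $p_i(z,\cdot)$ depends only on $i$'s pure strategy and $p_j(z,\cdot)$ only on $j$'s, the product of the two expectations equals the expectation over the product distribution $s^\pi_i \times s^\pi_j$, which by the definition of $u'_{ij}$ in the induced normal-form polymatrix game (and its multilinear extension to mixed strategies) is exactly $u'_{ij}(s^\pi_i, s^\pi_j)$; the second identity is the same computation with $\rho_i$ held fixed, so its own reach probabilities need no averaging.

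Then I would finish as follows: because $\pi$ is $\gamma$-subgame stable for $\check G$, the profile $(\pi_i, \pi_j)$ is a $\gamma$-Nash of $\check G_{ij}$, and since the pure strategy $\rho_i$ is in particular a behavior strategy, $u_{ij}(\rho_i, \pi_j) - u_{ij}(\pi_i, \pi_j) \le \gamma$; substituting the two identities turns the left-hand side into $u'_{ij}(\rho_i, s^\pi_j) - u'_{ij}(s^\pi_i, s^\pi_j)$, so that quantity is $\le \gamma$. The symmetric bound for $j$ follows by swapping roles, hence $(s^\pi_i, s^\pi_j)$ is a $\gamma$-Nash of $\check G'_{ij}$, and since $(i,j)$ was arbitrary, $s^\pi$ is $\gamma$-subgame stable in the induced normal-form polymatrix game. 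The only step needing any care — and the closest thing to an obstacle — is the factorization in the bridging identities: one must check that the $i$- and $j$-reach-probability averages combine into a single expectation over the independent product $s^\pi_i \times s^\pi_j$, which is precisely what makes the bilinear poly-EFG subgame utility and the normal-form subgame utility coincide; everything else is bookkeeping against the definitions.
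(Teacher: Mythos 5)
Your proposal is correct and follows exactly the route the paper intends: the paper states this corollary as immediate from Kuhn's Theorem and perfect recall of each subgame, and your bridging identities (converting each player's reach probabilities to expectations over the Kuhn-equivalent mixed strategies and recombining them into the bilinear normal-form subgame utility) are precisely the computation that justifies that claim. The transfer of the $\gamma$-Nash inequality edge by edge, using that pure strategies are a special case of behavior strategies, completes the argument as the paper envisions.
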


\begin{corollary}\label{corollary:efg csp}
    For EFG of perfect recall $G$, if $G$ is $\delta$-CSP-EFG then the induced normal form of $G$ is  $\delta$-CSP.
\end{corollary}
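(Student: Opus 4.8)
The plan is to show that passing to the induced normal form commutes, up to the $\delta$ error, with the poly-EFG decomposition. Let $\check G$ be the constant-sum poly-EFG guaranteed by the hypothesis (Definition~\ref{csp_efg}), with subgame utility functions $\check u_{ij}$ and global utility $\check u_i(\pi) = \sum_{(i,j)\in E_i}\check u_{ij}(\pi_i,\pi_j)$ satisfying $|u_i(\pi) - \check u_i(\pi)| \le \delta$ for all behavior profiles $\pi \in \Pi$. Let $G'$ denote the induced normal form of the EFG $G$, and let $\check G'$ denote the induced normal-form polymatrix game of $\check G$. I claim $\check G'$ witnesses that $G'$ is $\delta$-CSP.

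First I would check that $\check G'$ is constant-sum. Since $\check G$ is constant-sum, $\check u_{ij}(z) + \check u_{ji}(z) = c_{ij}$ for every $z \in Z_{ij}$, so for any pure profile $(\rho_i,\rho_j)$,
\[
    \check u'_{ij}(\rho_i,\rho_j) + \check u'_{ji}(\rho_i,\rho_j) = \sum_{z\in Z_{ij}} p_i(z,\rho_i)\,p_j(z,\rho_j)\,p_c(z,\pi_c')\bigl(\check u_{ij}(z) + \check u_{ji}(z)\bigr) = c_{ij},
\]
using that $\sum_{z\in Z_{ij}} p_i(z,\rho_i)p_j(z,\rho_j)p_c(z,\pi_c') = 1$ (in a subgame the only ``players'' are $i$, $j$, and chance). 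Thus $\check G'$ is constant-sum in every subgame, and summing the per-edge constants shows it is globally constant-sum, so it is a CSP game in the sense of Definition~\ref{def:delta_CSP} (cf.\ the discussion after that definition and \citet{Cai-2016}).

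Second I would verify the bound $|u'_i(\rho) - \check u'_i(\rho)| \le \delta$ for every $i \in N$ and $\rho \in \Rho$. By definition of the induced normal form, $u'_i(\rho) = u_i(\rho)$, the EFG utility of $G$ at $\rho$. By definition of the induced normal-form polymatrix game together with additivity of global utility in a poly-EFG, $\check u'_i(\rho) = \sum_{(i,j)\in E_i}\check u'_{ij}(\rho_i,\rho_j) = \sum_{(i,j)\in E_i}\check u_{ij}(\rho_i,\rho_j) = \check u_i(\rho)$, the poly-EFG global utility of $\check G$ at $\rho$. Since every pure profile is a behavior profile ($\Rho \subseteq \Pi$), the hypothesis applies at $\pi = \rho$, so $|u'_i(\rho) - \check u'_i(\rho)| = |u_i(\rho) - \check u_i(\rho)| \le \delta$. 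Combined with the first part, $\check G'$ is a CSP normal-form game within $\delta$ of $G'$, hence $G'$ is $\delta$-CSP.

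I do not expect a genuine obstacle: the content is essentially that the induced-normal-form operation respects the poly-EFG/polymatrix decomposition, and the only care needed is notational bookkeeping — confirming that evaluating a pure profile in the induced normal-form polymatrix game of $\check G$ returns exactly $\check u_i(\rho)$ (immediate from the definitions of $\check u'_{ij}$ and of poly-EFG global utility), and keeping the subgame-constant-sum and globally-constant-sum conventions aligned. Perfect recall, although assumed throughout this subsection for Kuhn's theorem, plays no role in this particular corollary, since we manipulate only pure strategies — which are simultaneously behavior strategies and normal-form strategies.
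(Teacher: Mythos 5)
Your proof is correct. The paper offers no explicit argument for this corollary --- it simply declares it (together with Corollary~\ref{corollary:efg subgame stable}) an ``immediate corollary'' of Kuhn's Theorem and perfect recall --- so your write-up is a faithful filling-in of the omitted details: take the witness $\check G$, pass to its induced normal-form polymatrix game, check subgame constant-sum by summing the terminal reach probabilities to one, and restrict the $\delta$-closeness hypothesis from all behavior profiles to the pure profiles that Definition~\ref{def:delta_CSP} actually quantifies over. Your closing observation is the one place you genuinely diverge from the paper, and you are right: because the normal-form notion of $\delta$-CSP only constrains pure strategy profiles, and $\Rho \subseteq \Pi$, perfect recall and Kuhn's Theorem do no work in this corollary --- they are needed for the companion Corollary~\ref{corollary:efg subgame stable}, where genuinely mixed/behavior strategies must be translated, but not here. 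That makes your statement slightly more general than the paper's (it holds without perfect recall), at no extra cost.
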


An extension of Theorem~\ref{thm:main2} holds for poly-EFGs. Let $\Pi^\mu$ be the set of marginal behavior strategy profiles for any CCE of $G$ and  $\Pi^\mu_i$ be the  set of marginal behavior strategies for $i$.
\begin{proposition}\label{corollary:efg vuln}
    If an EFG $G$ is $\delta$-CSP and $\exists \check G \in \csp_\delta(G)$ that is $(2n\delta, \gamma)$-subgame stable and $\mu$ is a CCE \footnote{Note that ``CCE'' refers to a \emph{normal-form} CCE  (NFCCE) in the language of  \cite{Farina-2020}.} of $G$, then
    \begin{align*}
        \textup{Vul}_i\left(\pi^\mu,  \Pi_{-i}\right) & \leq |E_i| \gamma +  2 \delta \leq (n-1)\gamma +  2 \delta,
    \end{align*}    
\end{proposition}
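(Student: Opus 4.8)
The plan is to reduce the claim to its normal-form counterpart, Theorem~\ref{thm:main2}, by passing to induced normal forms and using Kuhn's Theorem (Theorem~\ref{thm:kuhn}) to move between behavior and mixed strategies without loss. Write $G'$ for the induced normal form of the EFG $G$, and $\check G'$ for the induced normal-form polymatrix game of the constant-sum poly-EFG $\check G$. Since pure strategy profiles form a subset of behavior strategy profiles, the $\delta$-CSP-EFG hypothesis on $G$ gives $|u_i(\rho) - \check u_i(\rho)| \le \delta$ for every pure profile $\rho$ (this is Corollary~\ref{corollary:efg csp}), and a short computation shows $\check G'$ is constant-sum in each subgame; hence $G'$ is $\delta$-CSP with witness $\check G' \in \csp_\delta(G')$.

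Next I would verify that $\check G'$ is $(2n\delta, \gamma)$-subgame stable as a normal-form polymatrix game. Take any $2n\delta$-Nash equilibrium $s$ of $\check G'$; by Kuhn's Theorem (perfect recall is assumed for each subgame of $\check G$) it has an outcome-equivalent behavior profile $\pi$, and since outcome equivalence preserves every player's payoff against every profile, $\pi$ is a $2n\delta$-Nash equilibrium of the poly-EFG $\check G$. By the $(2n\delta, \gamma)$-subgame stability of $\check G$, $\pi$ is $\gamma$-subgame stable in $\check G$; Corollary~\ref{corollary:efg subgame stable} then makes $s^\pi$ $\gamma$-subgame stable in $\check G'$, and a final appeal to outcome equivalence (of $s$ and $s^\pi$) shows $s$ itself is $\gamma$-subgame stable. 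So every $2n\delta$-Nash equilibrium of $\check G'$ is $\gamma$-subgame stable.

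Now apply Theorem~\ref{thm:main2} to $G'$ with witness $\check G'$ and the CCE $\mu$: this gives $\textup{Vul}_i(s^\mu, S_{-i}) \le |E_i|\gamma + 2\delta \le (n-1)\gamma + 2\delta$ for the marginal \emph{mixed} strategy profile $s^\mu$ of $\mu$. To conclude it remains to transfer this to the marginal behavior strategy profile $\pi^\mu$ and to behavior deviations. Under perfect recall the marginal behavior strategy $\pi_i^\mu$, defined by realization-weighted marginalization, is precisely the Kuhn image of $s_i^\mu$, hence outcome-equivalent to it; and Kuhn's Theorem puts $\Pi_{-i}$ and $S_{-i}$ in outcome-equivalent correspondence. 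Consequently $u_i(\pi^\mu) = u_i(s^\mu)$ and $\min_{\pi_{-i} \in \Pi_{-i}} u_i(\pi_i^\mu, \pi_{-i}) = \min_{s_{-i} \in S_{-i}} u_i(s_i^\mu, s_{-i})$, so $\textup{Vul}_i(\pi^\mu, \Pi_{-i}) = \textup{Vul}_i(s^\mu, S_{-i})$ and the bound follows.

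The step I expect to be the crux is the second one: transferring the subgame-stability hypothesis from the poly-EFG $\check G$ to its induced normal-form polymatrix game $\check G'$ \emph{in the right direction} and \emph{at the same approximation level} $2n\delta$. The reason it works is that Kuhn's equivalence is outcome equivalence, so it preserves approximate-best-response slack exactly and introduces no error when switching representations --- one only has to be careful to apply Corollary~\ref{corollary:efg subgame stable} to the behavior profile produced from $s$ and then pull the conclusion back through outcome equivalence. The remaining routine point is checking that the realization-weighted marginal behavior strategy really is the Kuhn image of the marginal mixed strategy, which again uses perfect recall.
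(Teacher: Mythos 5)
Your proposal is correct and follows essentially the same route as the paper's own proof: pass to the induced normal-form polymatrix game, invoke Corollaries~\ref{corollary:efg csp} and~\ref{corollary:efg subgame stable} to transfer the $\delta$-CSP and $(2n\delta,\gamma)$-subgame-stability hypotheses, apply Theorem~\ref{thm:main2}, and use Kuhn's Theorem to move the resulting bound between mixed and behavior strategies. The only difference is that you spell out the direction-of-transfer argument for subgame stability and the outcome-equivalence of $\pi^\mu$ with $s^\mu$, details the paper leaves implicit.
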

\begin{proof}
    Transform $\check G$ into its induced normal-form polymatrix game $\check G'$. By Corollaries~\ref{corollary:efg subgame stable} and \ref{corollary:efg csp} the induced normal form of $G$ is $\delta$-CSP and $(2n\delta, \gamma)$-subgame stable.  By perfect recall, we can convert $\pi^\mu$ to an equivalent mixed strategy profile $s^\mu$ and do likewise with any $\pi_{-i} \in \Pi_{-i}$. Then apply Theorem~\ref{thm:main2} using $s^\mu$, $S_{-i}$ and the induced normal-form polymatrix game of $\check G$ to bound vulnerability on $G$'s induced normal form, and hence $G$. 
\end{proof}

\subsection{Vulnerability Against Self-Taught Agents in EFGs} \label{sec:efg_self_taught}

Next we show an analogue of Theorem~\ref{thm:neighborhood} for extensive-form games. In Section~\ref{sec:nf self-taught} we defined $S(\mathcal{A})$ as the set of marginal strategy profiles for a no-regret learning algorithm $\mathcal{A}$. Many algorithms for EFGs will compute behavior strategies, so we use  $\Pi(\mathcal{A}) \doteq  \{\pi^\mu \mid \ (\mu, (\Phi_i)_{i\in N})\in \mathcal{M}(\mathcal{A})\}$ as the set of marginal behavior strategy profiles of $\mathcal{M}(\mathcal{A})$ (recall that  $\mathcal{M}(\mathcal{A})$  is the set of mediated equilibria reachable by learning algorithm $\mathcal{A}$. Then let $\Pi_i(\mathcal{A}) \doteq \{\pi_i \mid \pi \in \Pi(\mathcal{A})\}$ be the set of $i$'s marginal strategies from $\Pi(\mathcal{A})$. For example,  if $\mathcal{A}$ is CFR,  $\mathcal{M}(\mathcal{A})$ is the set of observably sequentially rational CFCCE \cite{Morrill-2021a} and $\Pi(\mathcal{A})$ is the set of behavior strategy profiles computable by CFR. Next, suppose each player $i$ learns with their own self-play algorithm $\mathcal{A}_i$. Let $\mathcal{A}_N \doteq (\mathcal{A}_1, ... \mathcal{A}_n)$ be the profile of learning algorithms; $\Pi^\times(\mathcal{A}_N) \doteq \bigtimes_{i \in N}\Pi_i(\mathcal{A}_i) $ be the set of all possible match-ups between strategies learned in self-play by those learning algorithms and  $\Pi^\times_{-i}(\mathcal{A}_N) \doteq \bigtimes_{j \in -i}\Pi_j(\mathcal{A}_j)$ be the  profiles of $-i$ amongst these match-ups.

\begin{definition} 
    We say a game $G$ is \emph{$\delta$-CSP in the neighborhood of $\Pi' \subseteq \Pi$} if there exists a constant sum poly-EFG  $\check G$ such  that $\forall \pi \in \Pi'$ we have $\left | u_i(\pi) - \check u_i(\pi) \right| \leq \delta$. We denote the set of such CSP games as $\csp_{\delta}(G, \Pi')$.
\end{definition}

\begin{definition}
    We say a poly-EFG game $G$ is \emph{$\gamma$-subgame stable in the neighborhood of $\Pi'$} if $\forall \pi \in \Pi', \forall (i, j) \in E$  we have that $(\pi_{i}, \pi_j)$ is a $\gamma$-Nash of $G_{ij}$.
\end{definition}

\begin{proposition} \label{prop:efg_bounds}
    If $G$ is $\delta$-CSP in the neighborhood of $\Pi^\times(\mathcal{A}_N)$ and  $\exists \check G \in \csp_{\delta}(G, \Pi^\times(\mathcal{A}_N))$ that is 
    $\gamma$-subgame stable in the the neighborhood of $\Pi(\mathcal{A}_i)$ , then for any $\Pi \in \Pi(\mathcal{A}_i)$
    \begin{align*}
            \Vul_i \left(\pi, \Pi_{-i}^\times(\mathcal{A}_N) \right) \leq  
            |E_i| \gamma + 2\delta \leq  (n-1) \gamma + 2\delta.
    \end{align*}  
\end{proposition}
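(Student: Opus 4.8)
The plan is to mirror the proof of Proposition~\ref{corollary:efg vuln}: lift everything to the induced normal-form polymatrix game via Kuhn's Theorem and then invoke Theorem~\ref{thm:neighborhood} verbatim. First I would fix the witnessing constant-sum poly-EFG $\check G \in \csp_\delta(G, \Pi^\times(\mathcal{A}_N))$ and form its induced normal-form polymatrix game $\check G'$, and likewise view $G$ through its induced normal form $G'$. By perfect recall, every behavior strategy $\pi_i$ has a mixed strategy $s^{\pi_i}$ that is outcome-equivalent over terminal histories, and outcome equivalence gives $u_i(\pi) = u_i(s^\pi)$ and $\check u_i(\pi) = \check u_i(s^\pi)$ for the global utilities and $u_{ij}(\pi_i,\pi_j) = u'_{ij}(s^{\pi_i},s^{\pi_j})$ within each subgame. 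I would then set $S_i(\mathcal{A}_i) \doteq \{ s^{\pi_i} \mid \pi_i \in \Pi_i(\mathcal{A}_i) \}$ and define $S^\times(\mathcal{A}_N)$, $S_{-i}^\times(\mathcal{A}_N)$ and $S(\mathcal{A}_i)$ accordingly, so that these are exactly the images of the poly-EFG neighborhoods under $\pi \mapsto s^\pi$.

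Next I would transfer the two hypotheses to the normal form. Since $|u_i(\pi) - \check u_i(\pi)| \le \delta$ for all $\pi \in \Pi^\times(\mathcal{A}_N)$ and utilities agree under Kuhn equivalence, $\check G' \in \csp_\delta(G', S^\times(\mathcal{A}_N))$, i.e. $G'$ is $\delta$-CSP in the neighborhood of $S^\times(\mathcal{A}_N)$; this is the neighborhood-restricted form of Corollary~\ref{corollary:efg csp}. Similarly, $(\pi_i,\pi_j)$ being a $\gamma$-Nash of the subgame $G_{ij}$ is equivalent, again by Kuhn's Theorem exactly as in Corollary~\ref{corollary:efg subgame stable}, to $(s^{\pi_i},s^{\pi_j})$ being a $\gamma$-Nash of the induced normal-form subgame $G'_{ij}$; hence $\check G'$ is $\gamma$-subgame stable in the neighborhood of $S(\mathcal{A}_i)$.

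Finally, given any $\pi \in \Pi(\mathcal{A}_i)$, let $s = s^\pi \in S(\mathcal{A}_i)$. For any $\pi_{-i} \in \Pi^\times_{-i}(\mathcal{A}_N)$ with mixed equivalent $s_{-i} \in S^\times_{-i}(\mathcal{A}_N)$ we have $u_i(\pi_i,\pi_{-i}) = u_i(s_i,s_{-i})$, so $\Vul_i(\pi, \Pi^\times_{-i}(\mathcal{A}_N)) = \Vul_i(s, S^\times_{-i}(\mathcal{A}_N))$ evaluated in $G'$. Applying Theorem~\ref{thm:neighborhood} to $G'$, $\check G'$ and the neighborhood $S^\times(\mathcal{A}_N)$ gives $\Vul_i(s, S^\times_{-i}(\mathcal{A}_N)) \le |E_i|\gamma + 2\delta \le (n-1)\gamma + 2\delta$, which is exactly the claimed bound.

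The main obstacle I anticipate is bookkeeping rather than any new mathematical content: one must be careful that the normal-form neighborhood sets are precisely the images of the poly-EFG neighborhoods under Kuhn equivalence, so that the hypotheses of Theorem~\ref{thm:neighborhood} hold without modification, and that only outcome equivalence over $Z$ (not full strategic equivalence) is needed to move utilities back and forth. I would also note that the subgame chance strategy $\pi_c'$ is fixed once and for all, as stipulated after Definition~\ref{def:induced_subgame}, so the induced normal-form subgame $G'_{ij}$ — and hence the notion of $\gamma$-Nash used in the subgame-stability hypothesis — is well-defined and matches the one used in Theorem~\ref{thm:neighborhood}.
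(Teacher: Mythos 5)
Your proposal is correct and follows exactly the route the paper takes: the paper's own proof is the one-line remark ``use the induced normal-form polymatrix game of $\check G$ and Theorem~\ref{thm:neighborhood} to derive bounds for the induced normal form of $G$, which then apply to $G$,'' and your write-up is simply a careful expansion of that same Kuhn's-Theorem lifting argument. No gaps; the bookkeeping you flag (neighborhoods as images under $\pi \mapsto s^\pi$, fixed subgame chance strategy) is the right thing to be careful about and is handled correctly.
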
 

The proof goes the same way as in Corollary~\ref{corollary:efg vuln}. Use the induced normal-form polymatrix game of $\check G$ and Theorem~\ref{thm:neighborhood} to derive bounds for the induced normal form of $G$, which then apply to $G$.

\subsection{Leveraging the Poly-EFG Representation for Computing CSP Decompositions}
The poly-EFG representation gives rise to more efficient algorithms for computing a poly-EFG decomposition. As a proof of concept, we show that in perfect information EFGs, we can write a linear program to compute the optimal polymatrix decomposition for an EFG that is exponentially smaller than LP 2 from Section~\ref{lp:delta}.  Recall that $\munderbar \delta$ is the \emph{minimum} value of $\delta$ such that a game is $\delta$-CSP. In an perfect information EFG, we can compute $\munderbar \delta$ with the following LP. The decision variables are $\munderbar \delta$, the values of $\check u_{ij}(z) \ \forall z \in Z,  (i, j) \in E$ and $ c_{ij} \ \forall (i, j) \in E$.

\paragraph{LP 3} \label{lp:delta efg}
\begin{align*}
    \min_{} \quad & \munderbar \delta
    \\
   \text{s.t.}\quad &  u_i(z) - \sum_{j \in -i} \check u_{ij}(z)  \leq \munderbar \delta  \quad \forall i \in N, z \in Z
    \\
    & u_i(z) - \sum_{j \in -i} \check u_{ij}(z)  \geq - \munderbar \delta  \quad \forall i \in N, z \in Z
    \\
     &\check u_{ij}(z) + \check u_{ji}(z) = c_{ij}  \quad \forall i \ne j \in N, z \in Z
\end{align*}
The trick is that in perfect information EFGs, each pure strategy profile leads to a single terminal. Hence, rather than having a constraint for each pure strategy profile, a constraint for each terminal will suffice. This leads to an exponential reduction in the number of constraints over LP 2.

\section{Details of SGDecompose} \label{sec:efg: alg}

Next, we give the full details of SGDecompose. We give details here using our poly-EFG representation, since this is the representation we use in our experiments. 

For each subgame $\check G_{ij}$ we store a single vector $\check u_{ij}$ where the entry $\check u_{ij}(z)$ gives the value of the utility for corresponding terminal $z$. We additionally store a constant $\check c_{ij}$ for each subgame. Player $i$ will use $\check u_{ij}$ when computing $ \check u_{ij}(\pi_i, \pi_j)$:
\begin{align*}
    \check u_{ij}(\pi_i, \pi_j) =  \sum_{z \in Z} p_i(z, \pi_i) p_j(z, \pi_j) p_c(z, \pi_c) \check u_{ij}(z).
\end{align*}
Whereas we compute $\check u_{ji}(\pi_i, \pi_j)$ as follows.
\begin{align*}
    \check u_{ji}(\pi_i, \pi_j) =  \sum_{z \in Z} p_i(z, \pi_i) p_j(z, \pi_j) p_c(z, \pi_c) \left ( \check c_{ij} - \check u_{ij}(z) \right).
\end{align*}

For simplicity, let $\check u$ denote the stacked vector of all $\check u_{ij}$ and $\check c_{ij}$. We additionally initialize $\check G$ as a fully connected graph. 

The overall loss function which we minimize has two components: first, $ \mathcal{L}^\delta$ is the error between the utility functions of $G$ and $\check{G}$; it is a proxy for $\delta$ in $\delta$-CSP. 
\begin{align*}
    \mathcal{L}^\delta\left(\pi; \check u, u \right) &\doteq   \sum_{i \in N} \left | \check u_i(\pi) - u_i(\pi) \right|
    \\
    &=  \sum_{i \in N} \left | \left( \sum_{(i, j) \in E_i} \check u_{ij}(\pi_i, \pi_j) \right) - u_i(\pi) \right|.
\end{align*}

The other component of the overall loss function, $\mathcal{L}^\gamma$, measures the subgame stability. First, we define  $\mathcal{L}^\gamma_{ij}$, which only applies to a single subgame. 
\begin{align*}
    \mathcal{L}^\gamma_{ij}(\pi_{ij}, \pi_{ij}^* ;  \check u) \doteq & \max \left (\check u_{ij}(\pi_i^*, \pi_j ) - \check u_{ij}(\pi_{ij}), 0 \right)
    \\
    + &  \max \left (\check u_{ji}(\pi_i, \pi_j^*) - \check u_{ji}(\pi_{ij}), 0 \right).
\end{align*}
Where $\pi_{ij}  = (\pi_i, \pi_j)$ is a profile and  $\pi^*_{ij} = (\pi_i^*, \pi_j^*)$ is a profile of deviations. Then
\begin{align*}
      \mathcal{L}^\gamma \left(\pi, \pi^* ; \check u\right) \doteq  \sum_{(i,j) \in E}  \mathcal{L}^\gamma_{ij}(\pi_{ij}, \pi_{ij}^* ;  \check u).
\end{align*}

Algorithm~\ref{alg:sgd2} shows how to compute a subgame stable constant-sum polymatrix decomposition via SGD. As input, the algorithm receives a game $G$, a finite set of strategy profiles $\Pi' $, learning rate $\eta$, number of training epochs  $T$, hyperparameter $\lambda \in [0, 1]$ and batch size $B$. First, we initialize $\Pi^\times$ as the set of all match-ups amongst strategies in $\Pi'$. 

We then repeat the following steps for each epoch. First, we compute a best-response (for example, via sequence-form linear programming) to each strategy $\pi'_i$ in $\Pi'$ in each subgame; the full process is shown in Algorithm~\ref{alg:get_brs}. After computing these best-responses for the current utility function of $\check G$, we try to fit $\check u$ to be nearly CSP in the neighborhood of $\Pi^\times$ and  subgame stable in the neighborhood of $\Pi'$. Since $\Pi^\times$ is exponentially larger than $\Pi'$, we partition it into batches, then use batch gradient descent. We use the following batch loss function, which computes the average values of $\mathcal{L}^\delta$ and $\mathcal{L}^\gamma$ over the batches, then weights the losses with $\lambda$. Let $\Pi^b$ denote a batch of strategy profiles from $\Pi^\times$ with size $B$, the overall loss function is

\begin{align*}
    \mathcal{L}(\Pi^b, \Pi', \Pi^*; \check u, u) \doteq  \frac{\lambda}{B} \sum_{\pi \in \Pi^b}  \mathcal{L}^\delta(\pi;  \check u, u) 
    + \frac{(1-\lambda)}{|\Pi'|} \sum_{\pi \in \Pi'}   \sum_{\pi^* \in \Pi^*} \mathcal{L}^\gamma(\pi, \pi^* ;  \check u).
\end{align*}

We take this loss and find its gradient with respect to $\check u$, then update $\check u$:
\begin{align*}
    \check u \gets \check u - \eta \cdot \nabla_{\check u} \mathcal{L}(\Pi^b, \Pi', \Pi^*; \check u, u) .
\end{align*}
We found that in practise the gradient can be quite large relative to $\check u$, which has the potential to destabilize optimization. This is alleviated by normalizing the gradient by its $l^2$ norm.
\begin{align*}
    g &\gets \nabla_{\check u} \mathcal{L}(\Pi^b, \Pi', \Pi^*; \check u, u) 
    \\
    \check u &\gets \check u - \eta \cdot \frac{g}{\norm{g}}_2
\end{align*}

\begin{algorithm}[ht]
    \caption{SGDecompose with behavior strategies}
    \begin{algorithmic}
    \STATE {\bfseries Input:} $G$, $\Pi' $, $\eta$,  $T$, $\lambda$, $B$
    \STATE Initialize $ \check u$ to all $0$
    \STATE $\Pi^\times \gets \bigtimes_{i \in N} \hat{\Pi}_i $
    \FOR{$t \in 1...T$}
        \STATE $\Pi^* \gets $ getBRs($\check G$, $\Pi'$)
        \STATE $\mathcal{B} \gets$ partition of $\Pi^\times$ into batches of size $B$
        \FOR{$\Pi^{b} \in \mathcal{B}$}
            \STATE $ g \gets \nabla_{\check u} \mathcal{L}(\Pi^b, \Pi', \Pi^*; \check u, u) $
            \STATE $ \check u \gets \check u - \eta \cdot \frac{g}{\norm{g}}_2$
        \ENDFOR
    \ENDFOR
    \\
    \COMMENT{Lastly, output $\delta$ and $\gamma$}
    \\
    \STATE $\delta \gets \max_{\pi \in \Pi^\times } \left | u_i(\pi) -  \check u_{i}(\pi) \right| $
    \STATE $\gamma \gets \max_{\pi \in \Pi' } \max_{i \ne j \in N \times N}  \left( \check u_{ij}(BR_{ij}(\pi_j), \pi_j) -  \check u_{ij}(\pi_i, \pi_j)  \right) $
    \RETURN $\check u, \gamma, \delta$
    \end{algorithmic}
    \label{alg:sgd2}
\end{algorithm}

\begin{algorithm}[ht]
    \caption{getBRs}
    \begin{algorithmic}
    \STATE {\bfseries Input:} $\check G$, $\Pi' $
    \STATE $\Pi_i^* \gets \varnothing \ \forall i \in N$
     \FOR{$i \ne j \in N \times N$}
        \FOR{$\pi_j \in \Pi'_j$}
            \STATE compute $\pi_{ij}^* \in \argmax_{\pi_i' \in \Pi_i} \check u_{ij}(\pi_i', \pi_j)$ 
            \STATE $ \Pi_i^{*} \gets \Pi_i^{*}  \cup \{ \pi_{ij}^* \}$
        \ENDFOR
    \ENDFOR
    \RETURN $\bigtimes_{i \in N} \Pi_i^{*}$
    \end{algorithmic}
    \label{alg:get_brs}
\end{algorithm}

\section{Experiment Details} \label{sec:experiments appendix}

The codebase for our experiments is available at \url{https://github.com/RevanMacQueen/Self-Play-Polymatrix}. 

\subsection{Leduc Poker}
We ran SGDecompose 30 times, each time with its own set of 30 strategy profiles. These 900 strategy profiles are generated with CFR+ in self-play for 1000 iterations with random initializations. We randomly initialize CFR+ with regrets between 0 and 0.001 chosen uniformly at random, which are the default values in OpenSpiel. 

Interestingly, we found that CFR+ converges to approximate Nash equilibria in Leduc poker, with a maximum value of $\epsilon$ equal to $0.013$ after 1000 iterations. As we will show in Appendix~\ref{sec:cfr_nash_leduc}, CFR also empirically produces approximate Nash equilibria in Leduc poker. 

Let $\Pi(\cfr+)_j$ denote the set of $\cfr+$-learned strategy profiles for run  $j$; and $\Pi^\times(\cfr+)_j$  denote the set of all match-ups between these 30 strategy profiles. Figure~\ref{figure:TV_leduc} shows diversity of $\Pi(\cfr+)_j$ for each of the 30 runs. We measure the diversity of each $\Pi(\cfr)_j$ by taking each pair of strategy profiles $\pi, \pi' \in \Pi(\cfr)_j$ and computing the total variation between these two probability distributions induced over the terminal histories of Leduc poker. We denote the maximum total variation for run $j$ as $TV_j$, where
\begin{align*}
    TV_j \doteq \max_{\pi,\pi' \in  \Pi(\cfr)_j} \frac{1}{2} \sum_{z \in Z} | p(z, \pi) - p(z, \pi')|.
\end{align*}
$TV_j$ is constrained to be between $0$ and $1$, where $0$ means the two distributions are the same and $1$ means they are maximally different. Figure~\ref{figure:TV_leduc} shows the maximum total variation between any two of the strategy profiles used in each run.

\begin{figure}[h]
    \centering
    \includegraphics[width=\textwidth]{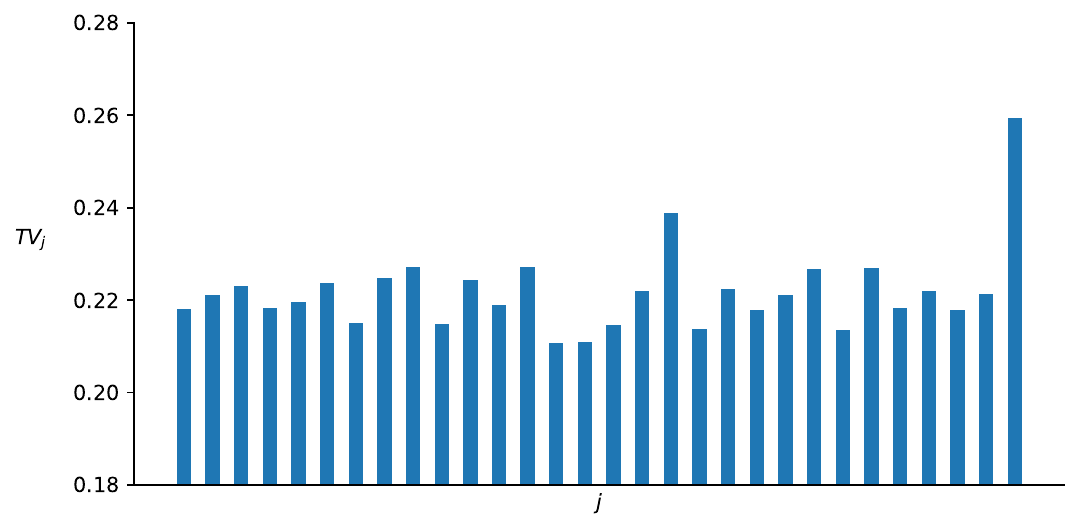}
    \caption{The maximum total variation for each $\Pi(\cfr)_j$ used in different runs of SGDecompose in \textbf{Leduc Poker}. Different runs are shown on the x-axis, and the corresponding $TV_j$ for run $j$ is shown with the bars. A value of $0$ indicates minimal diversity and $1$ means maximal diversity. The minimum, mean, maximum and standard error across runs are $ 0.21$, $0.22$, $0.26$ and  $ 0.0016$, respectively.     
    }
    \label{figure:TV_leduc}
\end{figure}
 
We used the same parameters for each run of SGDecompose: $\lambda = 0.5$, $B = 30$,  $T = 200$. We used a learning rate schedule where the learning rate $\eta$ begins at $2^{-6}$, then halves every 5 epochs until reaching  $2^{-17}$ to encourage convergence. Our results are shown in Figure~\ref{figure:delta_gamma_leduc}. We see that across the 30 runs of SGDecompose, Leduc poker is at most $0.021$-CSP in the neighborhood of $\Pi^\times(\cfr+)_j$ and $0.009$-subgame stable in the neighborhood of $\Pi(\cfr+)_j$. Figure~\ref{figure:bounds_leduc} shows the maximum vulnerability with respect to the strategies in  each of the runs compared to the bounds on vulnerability given by Proposition~\ref{prop:efg_bounds}. We compute the maximum vulnerability as
\begin{align}
    \text{Vul}_j \doteq \max_{i \in N}  \max_{\pi \in \Pi(\cfr)_j} \Vul_i \left(\pi,  \Pi_{-i}^\times(\cfr)_j \right) . \label{eq:vul_j}
 \end{align}

We see that the bounds are between $1.89$ and $3.05$ times the actual vulnerability, and are on average $2.51$ times larger with a standard error of $0.049$.

\begin{figure}[h]
    \centering
    \begin{subfigure}[b]{0.49\textwidth}
        \centering
        \includegraphics[width=\textwidth]{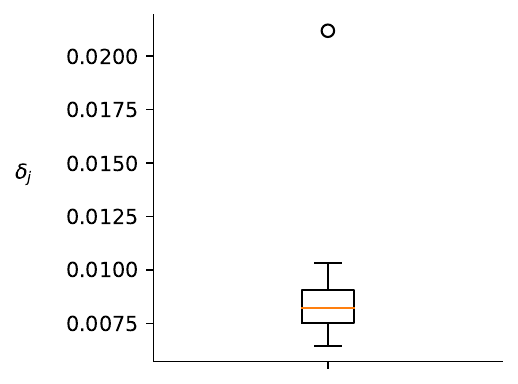}
        \caption{}
        \label{fig:leduc_deltas}
     \end{subfigure}
     \hfill
   \begin{subfigure}[b]{0.49\textwidth}
       \centering
       \includegraphics[width=\textwidth]{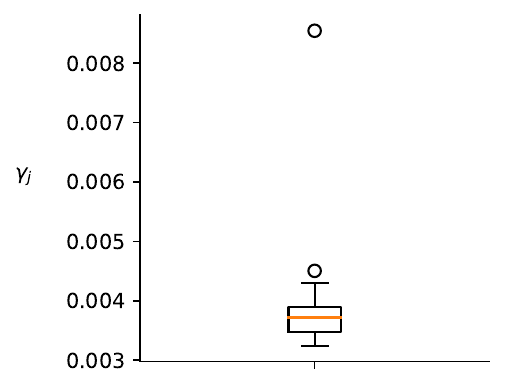}
       \caption{}
       \label{fig:leduc_gammas}
    \end{subfigure}

    \hfill
        \caption{Boxplots showing the values of $\delta_j$ and $\gamma_j$ for each of the 30 runs of SGDecompose in \textbf{Leduc Poker}. Figure~\ref{fig:leduc_deltas} shows the values of $\delta_j$, with the minimum, mean, maximum and standard error being $0.006$, $0.009$, $0.021$ and $0.00046$, respectively. Figure~\ref{fig:leduc_gammas} shows the values of $\gamma_j$, with the minimum, mean, maximum and standard error being $0.003$, $0.004$, $0.009$ and $0.00016$, respectively. }
    \label{figure:delta_gamma_leduc}
\end{figure}

\begin{figure}[h]
    \centering
    \includegraphics[width=\textwidth]{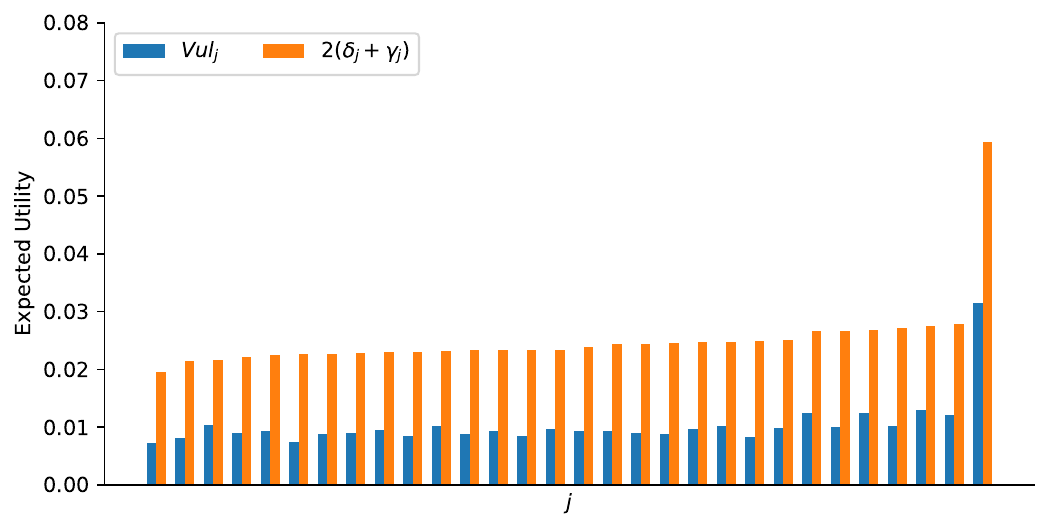}
    \caption{Bounds on vulnerability compared to true vulnerability in \textbf{Leduc Poker} for each run. Each of the 30 runs are shown on the x-axis. For each run $j$, we compute the bounds  determined by Proposition~\ref{prop:efg_bounds}, which  are  $(n-1) \gamma_j +2\delta_j = 2(\gamma_j+\delta_j)$. These value are shown in orange. The blue bars are the maximum vulnerability in each run, computed using \eqref{eq:vul_j}. The ordering of bars in this plot matches the ordering of bars in  Figure~\ref{figure:TV_leduc}. The rightmost run had both the highest vulnerability and highest diversity.}
    \label{figure:bounds_leduc}
\end{figure}

\subsection{CFR Finds Approximate Nash in Leduc Poker}\label{sec:cfr_nash_leduc}

\begin{figure}[h]
     \centering
    \begin{subfigure}[b]{\textwidth}
        \centering
        \includegraphics[width=\textwidth]{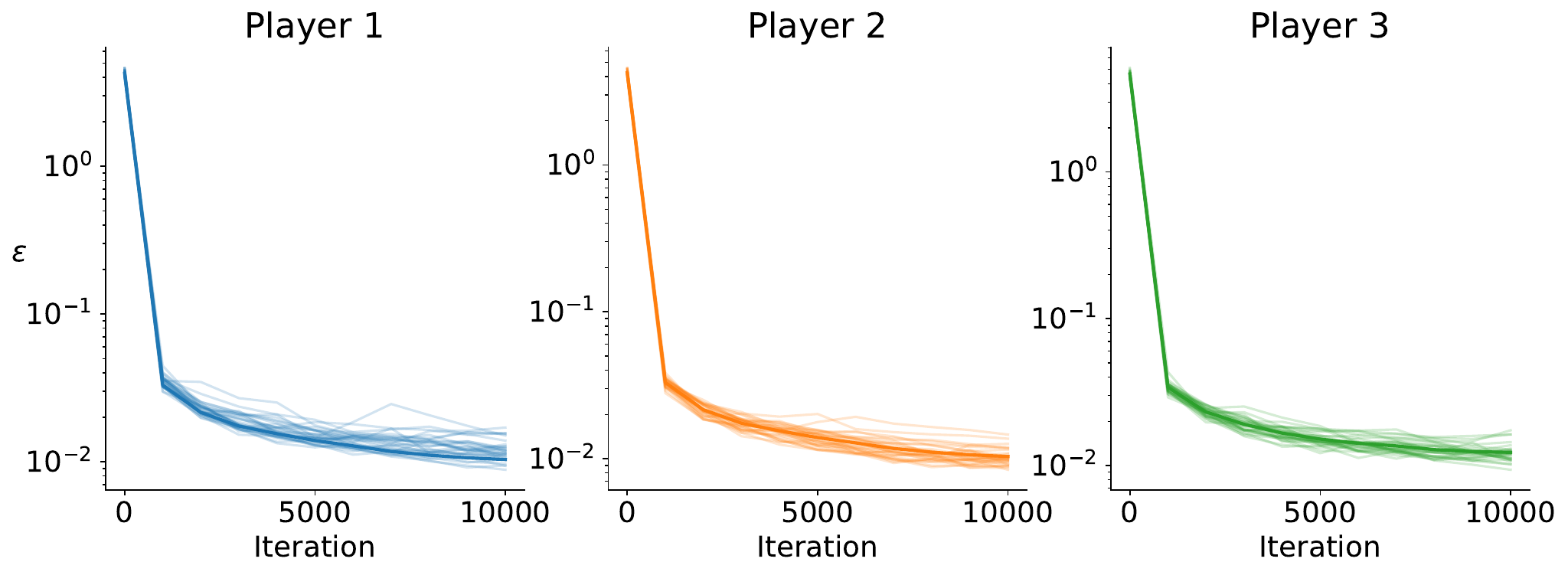}
        \caption{}
        \label{fig:leduc_learning}
     \end{subfigure}
     \hfill
     \begin{subfigure}[b]{0.4\textwidth}
        \centering
        \includegraphics[width=\textwidth]{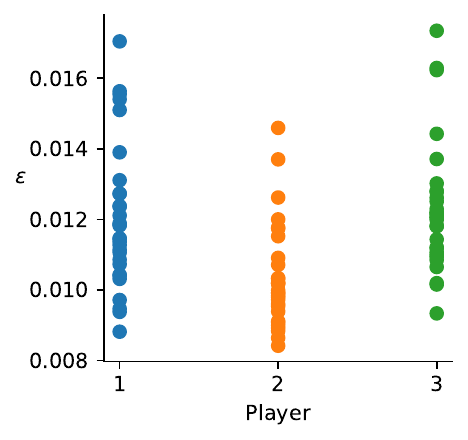}
        \caption{}
        \label{fig:leduc_epsilon}
     \end{subfigure}
     \hfill
    \caption{CFR empirically computes Nash Equilibria in Leduc Poker. (a) shows learning curves over iterations for each of the players. We measure $\epsilon$ by finding a best-response with sequence-form linear programming every 1000 iterations. We show each of the individual instances of CFR with different random initializations in light-coloured lines and the average across seeds in bold. (b) shows the distribution of $\epsilon$ at iteration 10,000. }
\end{figure}

It was previously believed that CFR does not compute an $\epsilon$-Nash equilibrium on 3-player Leduc for any reasonable value of $\epsilon$. Previous work found that CFR computed a $0.130$-Nash equilibrium after $10^8$ iterations \cite{Abou-2010}. We saw in the previous section that $\cfr+$ computes approximate Nash equilibria in Leduc poker---does this hold for $\cfr$ as well?

We ran $30$ runs of $\cfr$ in self-play for 10,000 iterations and found that \emph{all} of our strategies converged to an approximate Nash equilibrium with the maximum  $\epsilon = 0.017$ after only $10^4$ iterations. Figure~\ref{fig:leduc_epsilon} shows the shows the maximum deviation incentive 
\begin{align*}
    \epsilon = \max_{\pi_i'} u_i(\pi_i', \pi_{-i}) -  u_i(\pi)
\end{align*}   
for each of the CFR strategies $\pi$ computed by CFR in Leduc Poker. Each column is for one of the players and each point is one of the random seeds. We see the maximum value of $\epsilon$ after 10,000 iterations is $0.017$. Figure~\ref{fig:leduc_learning} shows the  maximum deviation incentive $\epsilon$ over 10,000 iterations. We average learning curves over 30 random seeds.

\section{Toy Hanabi} \label{sec:hanabi}
In games with low values of $\delta$ and $\gamma$, self-play will perform well against new opponents; however is the converse also true? Do games where self-play performs poorly against new opponents have high values of $\delta$ and $\gamma$? As mentioned earlier, self-play struggles to generalize to new agents in some games with specialized conventions \citep{Hu-2020}. Hanabi is one such game \citep{Bard-2019}. Hanabi is a cooperative game where players cannot see their own hands, but can see all other players hands; therefore players must give each other hints on how to act. 

We show that a small version of the game of Hanabi is not close to the space of CSP games and self-play is quite vulnerable. We use Tiny Hanabi in the Openspiel framework \citep{Lanctot-2019} with our own payoffs, shown in Figure~\ref{fig:tiny hanabi}. Chance deals one of two hands, $A$ or $B$ with equal probability. Only player $1$ may observe this hand and must signal to other players through their actions, $\sigma_1$ and $\sigma_2$, which hand chance has dealt. If both players $2$ and $3$ then correctly choose their actions to match chance's deal ($(a, a)$ for $A$ or $(b, b)$ for $B$) then all players get payoff equal to $1$, otherwise all players get $0$.

\begin{figure}[h]
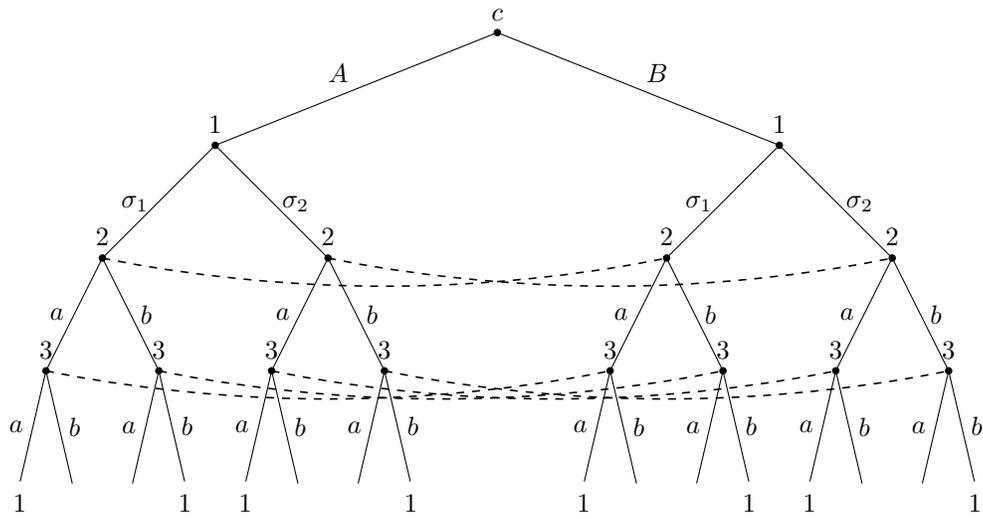

    \centering
    \begin{istgame}
            \xtdistance{15mm}{75mm} 
            \istroot(0) (0, 0){$c$} 
            \istb{A}[al] 
            \istb{B}[ar]  
            \endist 

            \xtdistance{15mm}{30mm} 
            \istroot(1a)(0-1){$1$}
            \istb{\sigma_1}[l]
            \istb{\sigma_2}[r]
            \endist 

            \istroot(1b)(0-2){$1$}
            \istb{\sigma_1}[l]
            \istb{\sigma_2}[r]
            \endist 
            
            \xtdistance{15mm}{15mm} 
            \istroot(2a)(1a-1){$2$}
            \istb{a}[l]
            \istb{b}[r]
            \endist 
            
            \istroot(2b)(1a-2){$2$}
            \istb{a}[l]
            \istb{b}[r]
            \endist 

            \istroot(2c)(1b-1){$2$}
            \istb{a}[l]
            \istb{b}[r]
            \endist 

            \istroot(2d)(1b-2){$2$}
            \istb{a}[l]
            \istb{b}[r]
            \endist 

            \xtdistance{15mm}{7mm} 
            \istroot(3a)(2a-1){$3$}
            \istb{a}[l]{1}
            \istb{b}[r]
            \endist 
            
            \istroot(3b)(2a-2){$3$}
            \istb{a}[l]
            \istb{b}[r]{1}
            \endist 

            \istroot(3c)(2b-1){$3$}
            \istb{a}[l]{1}
            \istb{b}[r]
            \endist 

            \istroot(3d)(2b-2){$3$}
            \istb{a}[l]
            \istb{b}[r]{1}
            \endist 

            \istroot(3e)(2c-1){$3$}
            \istb{a}[l]{1}
            \istb{b}[r]
            \endist 
            
            \istroot(3f)(2c-2){$3$}
            \istb{a}[l]
            \istb{b}[r]{1}
            \endist 

            \istroot(3g)(2d-1){$3$}
            \istb{a}[l]{1}
            \istb{b}[r]
            \endist 

            \istroot(3h)(2d-2){$3$}
            \istb{a}[l]
            \istb{b}[r]{1}
            \endist 

        \xtCInfoset[dashed](2a)(2c)<.9>
        \xtCInfoset[dashed](2b)(2d)<.9>

        \xtCInfoset[dashed](3a)(3e)<.9>
        \xtCInfoset[dashed](3b)(3f)<.9>
         \xtCInfoset[dashed](3c)(3g)<.9>
         \xtCInfoset[dashed](3d)(3h)<.9>
   
        \end{istgame}
    \caption{Tiny Hanabi. We omit payoffs of $0$ at terminals.}
    \label{fig:tiny hanabi}
\end{figure}

$\sigma_1$ and $\sigma_2$ can come to mean different things, $\sigma_1$ could signal to $2$ and $3$ to play $a$, or $b$. Self-play may learn either of these conventions. However, if a player trained in self-play encounters a set of players trained in an independent instance of self-play, they may not have compatible conventions. 

This is indeed what happens when we train CFR on Tiny Hanabi in Figure~\ref{fig:tiny hanabi}. We trained 30 runs in self-play with different random initializations for 10,000 iterations. Some of these runs converged to each convention and when played against each other miscoordinated. We found
 \begin{align*}
    \max_{i \in N}  \max_{\pi \in \Pi(\cfr)} \Vul_i \left(\pi, \Pi_{-i}^\times(\cfr) \right) \approx 1,
\end{align*} 
as expected. 

We decomposed Tiny Hanabi and found $\delta=0.5$ and $\gamma \approx 0$, meaning the true vulnerability matched what our bounds predicted since $(n-1)\gamma + 2 \delta \approx 1$. Why is $\gamma \approx 0$? We found that this was because our algorithm was setting the payoffs to equal to $0.50$ for all terminal histories, which is trivially polymatrix.

\end{document}